%% file: main.tex
\input{config/preamble}
\input{config/commands}

\title{WKB Spectral Asymptotics for a One-Dimensional Dirac Operator with a Slowly Varying Mass Profile}
\author{Owen Sutton\thanks{University of Minnesota School of Mathematics, sutto331@umn.edu}
  \; and Alexander B. Watson\thanks{University of Minnesota School of Mathematics, abwatson@umn.edu}}

\begin{document}

\maketitle


   









\begin{abstract}
  We study the semiclassical spectral theory of a one-dimensional Dirac operator $D_\varepsilon = m(x) \sigma_x - i \varepsilon \partial_x \sigma_y$, where $m(x)$ is a slowly varying mass profile with a sign change encoding a topological phase transition and $0 < \varepsilon \ll 1$ is the semiclassical parameter. We derive a modified Bohr-Sommerfeld quantization condition for the squared operator $D_\varepsilon^2$ via a systematic formal WKB construction producing approximate solutions with $\mathcal{O}(\varepsilon)$ error. This condition takes the form  
\begin{equation*}
    \frac{1}{\varepsilon}
        \int_{x_-}^{x_+}
        \sqrt{\mathcal{E} - m(x)^2} \, \diff x
    = \left( n + \frac{\sigma_\mathcal{D} + 1}{2} \right) \pi, 
        \quad n = 0, 1, 2, \ldots,
\end{equation*}
where $\mathcal{E}$ is the eigenvalue and $x_\pm$ are the classical turning points. Our result differs from the standard result by the half-integer shift depending on the pseudo-spin index $\sigma_\mathcal{D} \in \{\pm 1\}$. Setting $\sigma_\mathcal{D} = -1$ and $n = 0$ yields $\mathcal{E} = 0$, recovering the topologically protected zero mode. The condition is asymptotically valid to leading order in $\varepsilon$, with relative errors of $O(\varepsilon)$, and is verified to be exact for the P\"{o}schl--Teller potential. We provide numerical computations confirming the convergence of eigenvalues and eigenfunctions to their WKB approximations.
\end{abstract}

\tableofcontents

\section{Introduction}
\label{sec:S2_setup}

Propagating edge states at the boundary between two topologically distinct two-dimensional materials can be modeled by an effective one-dimensional Dirac operator; see, e.g., \cite{2013FruchartCarpentier}. Edge states have attracted huge attention both in the mathematical literature (see, e.g., \cite{Zhu2024,BalBecker,Drouot2021,2004KellendonkSchulz-Baldes_2,2018GrafShapiro,LuWatsonWeinstein,fefferman_leethorp_weinstein_memoirs,Bal_CMS,Graf2021,2017DelplaceMarstonVenaille,Ludewig2020}) and in applications (see, e.g., \cite{2019Ozawaetal,Gilbert2021} and references therein) in recent years because of their remarkable robustness to perturbations.

In this work we consider the semiclassical limit of this Dirac operator, introducing a small parameter $0 < \varepsilon \ll 1$ and writing the operator in the form
\begin{equation} \label{eq:S2_dirac}
    D_\varepsilon
    := m(x) \sigma_x
        - i \varepsilon \partial_x \sigma_y
    = \begin{bmatrix}
        0 & m(x) - \varepsilon \partial_x \\
        m(x) + \varepsilon \partial_x & 0
    \end{bmatrix}.
\end{equation}
The parameter $\varepsilon$ controls the semiclassical regime. Equivalently, this scaling corresponds to assuming that the mass profile $m(x)$ varies slowly on the scale of the wave oscillations, has characteristic magnitude independent of $\varepsilon$, and approaches constants at spatial infinity. We assume $m \in C^1(\mathbb{R})$, bounded with bounded derivative and satisfies
\begin{equation} \label{eq:S2_mass_req}
    \lim_{x \to \pm \infty} m(x)
    = \pm 1, \qquad
    m(0) = 0.
\end{equation}
The sign change of $m(x)$ across the interface encodes the change in the underlying topological invariant.

We consider the spectral problem
\begin{equation} \label{eq:1}
    D_\varepsilon \Psi = E \Psi,
  \end{equation}
and are interested in constructing approximate eigenfunctions and eigenvalues when $E$ lies in the gap between essential spectrum $(-1,1)$ using WKB theory. For this construction to go through, we assume further that for each such energy $E$, the equation $m(x)^2 = E$ has exactly two simple solutions $x_- < x_+$ and that these solutions satisfy
\begin{equation}
    \partial_x (m(x)^2 - E)\big|_{x = x_\pm}
    = 2 m(x_\pm) m'(x_\pm) \neq 0.
\end{equation}
This nondegeneracy condition is the precise hypothesis required for the Airy-type (turning point) WKB analysis. Equivalently, we assume that $m(x)$ is strictly monotone increasing.
  
To facilitate the use of WKB methods, we consider the squared operator $D_\varepsilon^2$, which relates to the original spectrum via the following symmetry properties.
\begin{proposition}[Spectral Symmetry]
\label{prop:S2_symmetry}
The operator $D_\varepsilon$ is self-adjoint with a real spectrum. The spectrum is symmetric about the origin: if $E \neq 0$ is an eigenvalue of $D_\varepsilon$, then $-E$ is also an eigenvalue. Consequently, non-zero eigenvalues of $D_\varepsilon^2$ are at least two-fold degenerate. Furthermore, if $D_\varepsilon^2$ possesses a simple eigenvalue at $E=0$, the associated eigenfunction is a simultaneous eigenfunction of the unitary operator $U := -i\sigma_y$.
\end{proposition}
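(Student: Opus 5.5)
The plan has three parts: self-adjointness, spectral symmetry via a chiral operator, and a simplicity argument for the zero mode.

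\textbf{Self-adjointness and reality of the spectrum.} Write $D_\varepsilon = -i\varepsilon\sigma_y\partial_x + m(x)\sigma_x$. The free part $-i\varepsilon\sigma_y\partial_x$ is self-adjoint on $H^1(\mathbb{R};\mathbb{C}^2)$, since it is unitarily equivalent via the Fourier transform to multiplication by the Hermitian matrix symbol $\varepsilon\xi\sigma_y$. The perturbation $m(x)\sigma_x$ is a bounded symmetric multiplication operator, because $m$ is real and bounded. By Kato--Rellich, $D_\varepsilon$ is therefore self-adjoint on $H^1$, and its spectrum is real.

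\textbf{Symmetry of the spectrum.} I would use the chiral operator $\sigma_z$. Since $\sigma_z$ anticommutes with both $\sigma_x$ and $\sigma_y$ and commutes with $m(x)$ and $\partial_x$, we get $\sigma_z D_\varepsilon \sigma_z = -D_\varepsilon$. Hence $D_\varepsilon\Psi = E\Psi$ implies $D_\varepsilon(\sigma_z\Psi) = -E\,\sigma_z\Psi$, so $-E$ is an eigenvalue whenever $E$ is.

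For $E\neq 0$, the eigenfunctions $\Psi$ and $\sigma_z\Psi$ belong to the distinct eigenvalues $E$ and $-E$ of $D_\varepsilon$. They are therefore orthogonal and linearly independent. Both lie in $\ker(D_\varepsilon^2 - E^2)$, which gives the claimed two-fold degeneracy of every nonzero eigenvalue of $D_\varepsilon^2$.

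\textbf{The zero mode.} The strategy is the standard one for a simple eigenvalue and a symmetry. First, since $D_\varepsilon$ is self-adjoint,
\[
\langle D_\varepsilon^2\psi,\psi\rangle = \|D_\varepsilon\psi\|^2,
\]
so $\ker D_\varepsilon^2 = \ker D_\varepsilon$. Second, I would verify that $U = -i\sigma_y$ maps $\ker D_\varepsilon^2$ into itself. Once this holds, simplicity gives $U\psi = \lambda\psi$ for some $\lambda$, and unitarity of $U$ forces $|\lambda| = 1$.

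To check the invariance, I would compute $D_\varepsilon^2$ explicitly:
\[
D_\varepsilon^2 = -\varepsilon^2\partial_x^2 + m^2 + \varepsilon m'\sigma_z ,
\]
and then compute $U D_\varepsilon^2 U^{-1}$. The main obstacle is exactly this commutation step. Conjugation by $U$ sends $\sigma_z \mapsto -\sigma_z$, so $U$ commutes only with the scalar part of $D_\varepsilon^2$ and not with the $\varepsilon m'\sigma_z$ term.

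To handle this, I would first solve the kernel directly. The component equations are $(m+\varepsilon\partial_x)a = 0$ and $(m-\varepsilon\partial_x)b = 0$, with solutions
\[
a = e^{-\int_0^x m/\varepsilon}, \qquad b = e^{\int_0^x m/\varepsilon}.
\]
By the limits \eqref{eq:S2_mass_req}, only $a$ is normalizable, so the kernel is one-dimensional and spanned by $(a,0)^T$. I would then check how $U$ acts on this explicit vector, interpreting $U$ in the pseudo-spin basis the paper uses for the sign $\sigma_{\mathcal D}$ in the quantization condition.

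I expect that reconciling the stated $U$ with the chirality structure $\sigma_z$ that actually governs the zero mode will require care. This is the step I would scrutinize first.
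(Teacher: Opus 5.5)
Your arguments for self-adjointness, the $E\mapsto -E$ symmetry, and the two-fold degeneracy are correct. On the symmetry they actually repair the paper's proof. The paper uses $U=-i\sigma_y$ and asserts $\{D_\varepsilon,U\}=0$, but the two products it displays sum to $-2\varepsilon\partial_x I$, not $0$: $U$ anticommutes with $\sigma_x$ but commutes with the $\sigma_y\partial_x$ term. Consequently $D_\varepsilon Uf=-\mu Uf-2\varepsilon f'$, so $Uf$ is not an eigenfunction for $-\mu$. Your $\sigma_z$, with $\sigma_z D_\varepsilon\sigma_z=-D_\varepsilon$, is the genuine chiral symmetry.

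Two small fixes are needed. First, the correct formula is $D_\varepsilon^2=-\varepsilon^2\partial_x^2+m^2-\varepsilon m'\sigma_z$, with upper entry $m^2-\varepsilon m'$ as in Proposition~\ref{prop:S2_squared}. Second, for the degeneracy you should say why every nonzero eigenvalue $\mathcal{E}$ of $D_\varepsilon^2$ comes from an eigenvalue $\pm\sqrt{\mathcal{E}}$ of $D_\varepsilon$. This follows from $\ker(D_\varepsilon^2-\mathcal{E})=\ker(D_\varepsilon-\sqrt{\mathcal{E}})\oplus\ker(D_\varepsilon+\sqrt{\mathcal{E}})$.

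The genuine gap is the last assertion, which you leave unresolved. No argument can close it, because it is false, and your own computation shows this once you finish it. The kernel is spanned by $\Psi_0=(a,0)^T$ with $a=e^{-\int_0^x m/\varepsilon}$; for $m=\tanh x$ this is $a=\sech^{1/\varepsilon}x$. Then $U\Psi_0=(0,a)^T$ is orthogonal to $\Psi_0$. It is not even in $\ker D_\varepsilon^2$, since the lower entry gives $(-\varepsilon^2\partial_x^2+m^2+\varepsilon m')a=2\varepsilon m'a\neq 0$.

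So the invariance $U\ker D_\varepsilon^2\subset\ker D_\varepsilon^2$ fails, and with it your step ``simplicity gives $U\psi=\lambda\psi$''. The paper's justification of this part rests on the same false anticommutation. Reinterpreting $U$ in a pseudo-spin basis does not help. $U$ is given as an explicit matrix in the same basis as $D_\varepsilon$, and $\sigma_{\mathcal D}$ merely labels the $\sigma_z$-eigencomponents.

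What is true follows in one line from your setup. Since $\sigma_z$ maps $\ker D_\varepsilon$ to itself and the kernel is one-dimensional, the zero mode is an eigenfunction of $\sigma_z$ with eigenvalue $+1$. That is, it lies entirely in the $\sigma_{\mathcal D}=-1$ component. The statement should be corrected to $\sigma_z$ rather than proved as written.
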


\begin{proof}
First, note that $U$ is unitary since $UU^* = (-i\sigma_y) (i\sigma_y^*) = \sigma_y \sigma_y = I$, using $\sigma_y^* = \sigma_y$ and $\sigma_y^2 = I$. The operator $D_\varepsilon$ anti-commutes with $U = -i\sigma_y$. We first verify the anti-commutation relation $\{D_\varepsilon, U\} = 0$.
With $U = -i\sigma_y = \begin{bsmallmatrix} 0 & -1 \\ 1 & 0 \end{bsmallmatrix}$,
direct computation gives
\begin{equation*}
    U D_\varepsilon
    = \begin{bmatrix} 0 & -1 \\ 1 & 0 \end{bmatrix}
    \begin{bmatrix} 0 & m - \varepsilon \partial_x \\
        m + \varepsilon \partial_x & 0 \end{bmatrix}
    = \begin{bmatrix}
        -(m + \varepsilon\partial_x) & 0 \\
        0 & m - \varepsilon\partial_x
    \end{bmatrix},
\end{equation*}
\begin{equation*}
    D_\varepsilon U
    = \begin{bmatrix} 0 & m - \varepsilon \partial_x \\
        m + \varepsilon \partial_x & 0 \end{bmatrix}
    \begin{bmatrix} 0 & -1 \\ 1 & 0 \end{bmatrix}
    = \begin{bmatrix}
        m - \varepsilon \partial_x & 0 \\
        0 & -(m + \varepsilon \partial_x)
    \end{bmatrix},
\end{equation*}
so $U D_\varepsilon + D_\varepsilon U = 0$, confirming $\{D_\varepsilon, U\} = 0$.

Let $(\mu, f)$ be an eigenvalue-eigenfunction pair of $D_\varepsilon$. Then,
\begin{equation*}
    U D_\varepsilon f
    = U \mu f
    \implies
    D_\varepsilon U f
    = - \mu U f,
\end{equation*}
which shows that $-\mu$ is an eigenvalue with eigenfunction $Uf$. Since $D_\varepsilon$ is self-adjoint, its spectrum lies on the real axis, and the spectrum of $D_\varepsilon^2$ is contained in $[0, \infty)$. It follows that
\begin{equation*}
    E^2 \neq 0 \in \sigma(D_\varepsilon^2)
    \implies \{+E, -E\} \subset \sigma(D_\varepsilon).
\end{equation*}
Thus, non-zero eigenvalues of $D_\varepsilon^2$ are two-fold degenerate. For a simple eigenvalue at $E=0$, the anti-commutation relation implies the eigenfunction must be invariant (up to a phase) under $U$, making it a simultaneous eigenfunction.
\end{proof}

By squaring the Dirac operator, the first-order system decouples into two scalar second-order equations.

\begin{proposition}[Squared Dirac Operator]
\label{prop:S2_squared}
The squared Dirac operator $D_\varepsilon^2$ is diagonal and takes the form
\begin{equation} \label{eq:S2_dirac_squared}
    D_\varepsilon^2
    = \operatorname{diag} \left(
        - \varepsilon^2 \partial_x^2
            + m(x)^2 - \varepsilon m'(x), \;
        - \varepsilon^2 \partial_x^2
            + m(x)^2 + \varepsilon m'(x)
    \right).
\end{equation}
Each diagonal component is a one-dimensional semiclassical Schr\"{o}dinger operator with principal symbol $p(x, \xi) = \xi^2 + m(x)^2$ (the leading-order term in the symbol expansion, corresponding to the classical Hamiltonian $H = \xi^2 + m(x)^2$ on phase space) and a sub-leading correction $\pm \varepsilon m'(x)$.
\end{proposition}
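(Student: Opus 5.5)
The plan is to compute $D_\varepsilon^2$ directly from the block off-diagonal form in \eqref{eq:S2_dirac}. Write $A := m(x) + \varepsilon \partial_x$ and $A^\dagger := m(x) - \varepsilon\partial_x$, so that
\begin{equation*}
    D_\varepsilon = \begin{bmatrix} 0 & A^\dagger \\ A & 0 \end{bmatrix},
    \qquad
    D_\varepsilon^2 = \begin{bmatrix} A^\dagger A & 0 \\ 0 & A A^\dagger \end{bmatrix}.
\end{equation*}
The product of two block off-diagonal matrices is block diagonal, which gives the diagonality claim immediately. Note that $A^\dagger$ is indeed the formal adjoint of $A$ on $L^2(\mathbb{R})$, since $\partial_x$ is skew-adjoint and $m$ is real. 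This is consistent with the self-adjointness in Proposition~\ref{prop:S2_symmetry}.

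Next I will expand each product acting on a smooth test function $f$, using the product rule $\partial_x(m f) = m' f + m f'$. For the upper-left entry,
\begin{equation*}
    A^\dagger A f = (m - \varepsilon\partial_x)(m f + \varepsilon f')
    = m^2 f + \varepsilon m f' - \varepsilon(m' f + m f') - \varepsilon^2 f''
    = -\varepsilon^2 f'' + (m^2 - \varepsilon m') f.
\end{equation*}
The cross terms $\pm\varepsilon m f'$ cancel, and only the commutator term $-\varepsilon m' f$ survives. The same computation for $AA^\dagger$ produces $+\varepsilon m' f$. This yields \eqref{eq:S2_dirac_squared}. Equivalently, one can write $AA^\dagger - A^\dagger A = [A, A^\dagger] = 2\varepsilon m'$.

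For the symbol statement, I will replace $-i\varepsilon\partial_x$ by $\xi$, so that $-\varepsilon^2\partial_x^2 \mapsto \xi^2$. Each diagonal entry then has full Weyl (equivalently standard) symbol $\xi^2 + m(x)^2 \pm \varepsilon m'(x)$. The $\mathcal{O}(1)$ part $p(x,\xi) = \xi^2 + m(x)^2$ is the principal symbol, and $\pm\varepsilon m'$ is the subprincipal correction.

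The only delicate point is regularity. Since $m$ is only assumed to be $C^1$, the identity should be understood either on $C_c^\infty$ functions with $m'$ acting as a bounded multiplication operator, or in the quadratic-form sense. For $f \in H^2$, the computation is valid because $m f \in H^1$ when $m, m'$ are bounded. I do not expect a genuine obstacle here; the main care needed is tracking the sign of $m'$ in each component.
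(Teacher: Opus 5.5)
Your proposal is correct and matches the paper's proof: both square the block off-diagonal form to get $\operatorname{diag}(A^\dagger A, AA^\dagger)$ and expand each product using $[\partial_x, m] = m'$. Your product-rule computation on a test function is the same step in different notation. Your extra remarks go beyond what the paper spells out, and both are accurate: the Weyl and standard symbols agree here, and the identity makes sense on $H^2$ because $m$ and $m'$ are bounded.
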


\begin{proof}
Direct matrix multiplication yields
\begin{equation*}
    D_\varepsilon^2
    = \begin{bmatrix}
        0 & m - \varepsilon \partial_x \\
        m + \varepsilon \partial_x & 0
    \end{bmatrix}
    \begin{bmatrix}
        0 & m - \varepsilon \partial_x \\
        m + \varepsilon \partial_x & 0
    \end{bmatrix}
    = \begin{bmatrix}
        (m - \varepsilon \partial_x)
            (m + \varepsilon \partial_x) & 0 \\
        0 & (m + \varepsilon \partial_x)
            (m - \varepsilon \partial_x)
    \end{bmatrix}.
\end{equation*}
Applying the commutator relation $[\partial_x, m] = m'(x)$, we evaluate the diagonal entries:
\begin{align*}
    (m - \varepsilon \partial_x)
        (m + \varepsilon \partial_x)
    &= m^2 - \varepsilon m'
        - \varepsilon^2 \partial_x^2, \\
    (m + \varepsilon \partial_x)
        (m - \varepsilon \partial_x)
    &= m^2 + \varepsilon m'
        - \varepsilon^2 \partial_x^2.
\end{align*}
This confirms the form in \eqref{eq:S2_dirac_squared}.
\end{proof}

Propositions~\ref{prop:S2_symmetry} and \ref{prop:S2_squared} show that approximate WKB eigenfunctions of \eqref{eq:S2_dirac} can be constructed by constructing approximate WKB eigenfunctions of semiclassical Schr\"odinger operators with potential functions $m^2$ and subleading terms $\pm \varepsilon m'$. The goal of the present paper is to understand the effect of the subleading terms on the WKB construction and clarify how these terms relate to the existence of the topologically protected ``zero mode'' of \eqref{eq:S2_dirac} \cite{2013FruchartCarpentier}. Our main result is the systematic derivation of a modified Bohr-Sommerfeld quantization rule for approximate WKB eigenfunctions of \eqref{eq:S2_dirac_squared} (Theorem~\ref{subsec:S5_bs}). We find that the subleading terms contribute a half-integer shift of the standard quantization rule and that this shift is exactly what is needed to make the quantization rule consistent with existence of the exact zero mode (Corollary \ref{cor:exact_zero_mode}).

The remainder of this work is structured as follows. In Section \ref{sec:S3_benchmark}, we consider the case where $m(x) = \tanh(x)$, which puts the squared operator into the form of P\"oschl-Teller operators with explicit spectra. In Section \ref{sec:S4_wkb}, we construct approximate WKB eigenpairs for general $m$ valid away from turning points. In Section \ref{sec:S5_turning_points}, we match solutions valid in classically forbidden and allowed regions through the turning points using the Airy asymptotics and derive the modified Bohr-Sommerfeld quantization condition. In Section \ref{sec:S6_confirmation}, we verify that our WKB results agree with the explicit spectra obtained in Section \ref{sec:S3_benchmark}. In Section \ref{sec:S7_spectral} we discuss some implications of our quantization rule, e.g., for the eigenvalue counting function. Finally, in Section \ref{sec:numerical-results}, we numerically compute eigenpairs of \eqref{eq:S2_dirac_squared} and verify excellent agreement with our WKB asymptotics.

\subsection*{Acknowledgements}

This project began as OS's senior thesis project at the University of Minnesota during the 2025-2026 academic year. ABW would like to thank Alexis Drouot for stimulating discussions. ABW's research was supported in part by grant NSF DMS-2406981.


\section{Exactly Solvable Benchmark: Hyperbolic Tangent}
\label{sec:S3_benchmark}

We now consider a mass profile of the form
\begin{equation*}
    m(x) = \tanh x.
\end{equation*}
This profile clearly satisfies the general assumptions \eqref{eq:S2_mass_req}. 

\subsection{Squared Dirac Operator and Large Parameter}
\label{subsec:S3_squared_operator}

\begin{proposition}[P\"{o}schl-Teller Structure]
\label{prop:S3_PT}
For the mass profile $m(x) = \tanh x$, the squared Dirac operator $D_\varepsilon^2$ decomposes as
\begin{equation} \label{eq:S3_dirac_PT_form}
    D_\varepsilon^2
    = 1 + \varepsilon^2
    \begin{pmatrix}
        H_{\varepsilon^{-1}} & 0 \\
        0 & H_{\varepsilon^{-1} - 1}
    \end{pmatrix},
\end{equation}
where $H_\Lambda = - \partial_x^2 - \Lambda (\Lambda + 1) \sech^2 x$ is the P\"{o}schl--Teller operator with parameter $\Lambda$ \cite{Fluegge1971}.
\end{proposition}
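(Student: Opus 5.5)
We start from Proposition~\ref{prop:S2_squared}, which already gives $D_\varepsilon^2$ as a diagonal operator with entries $-\varepsilon^2\partial_x^2 + m^2 \mp \varepsilon m'$. The plan is to substitute $m(x)=\tanh x$ into these two scalar operators, rewrite each potential in terms of $\sech^2 x$, and then identify the Pöschl--Teller coupling constant. The only identities we need are $\tanh^2 x = 1 - \sech^2 x$ and $(\tanh x)' = \sech^2 x$.

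For the upper-left entry, these identities give
\begin{equation*}
    -\varepsilon^2\partial_x^2 + \tanh^2 x - \varepsilon\sech^2 x
    = 1 - \varepsilon^2\partial_x^2 - (1+\varepsilon)\sech^2 x .
\end{equation*}
Factoring out $\varepsilon^2$ turns this into $1 + \varepsilon^2\bigl(-\partial_x^2 - \varepsilon^{-2}(1+\varepsilon)\sech^2 x\bigr)$. It remains to check that $\Lambda(\Lambda+1) = \varepsilon^{-2}(1+\varepsilon)$ when $\Lambda = \varepsilon^{-1}$. Indeed, $\varepsilon^{-1}(\varepsilon^{-1}+1) = \varepsilon^{-2}+\varepsilon^{-1} = \varepsilon^{-2}(1+\varepsilon)$, so this entry equals $1+\varepsilon^2 H_{\varepsilon^{-1}}$.

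For the lower-right entry, the same substitution gives
\begin{equation*}
    1 - \varepsilon^2\partial_x^2 - (1-\varepsilon)\sech^2 x = 1 + \varepsilon^2\bigl(-\partial_x^2 - \varepsilon^{-2}(1-\varepsilon)\sech^2 x\bigr).
\end{equation*}
With $\Lambda = \varepsilon^{-1}-1$ we get $\Lambda(\Lambda+1) = (\varepsilon^{-1}-1)\varepsilon^{-1} = \varepsilon^{-2}(1-\varepsilon)$, so this entry equals $1+\varepsilon^2 H_{\varepsilon^{-1}-1}$.

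Assembling the two entries gives \eqref{eq:S3_dirac_PT_form}, where the leading $1$ is understood as the identity matrix. There is no real obstacle in this argument. The one step that takes a little care is choosing, among the two roots of $\Lambda(\Lambda+1) = c$, the branch $\Lambda>0$ for small $\varepsilon$ that matches the stated parameters. The other root, $\Lambda = -1-\varepsilon^{-1}$ or $\Lambda = -\varepsilon^{-1}$, defines the same operator, so the choice is purely a convention fixed by the statement. All identities hold pointwise for smooth functions, so the equality is an operator identity on the common domain $H^2(\mathbb{R};\mathbb{C}^2)$.
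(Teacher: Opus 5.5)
Your proposal is correct and follows the same route as the paper: you substitute $m=\tanh x$ into the diagonal form of Proposition~\ref{prop:S2_squared}, use $\tanh^2 x = 1-\sech^2 x$ and $(\tanh x)'=\sech^2 x$, factor out $\varepsilon^2$, and read off $\Lambda(\Lambda+1)$ for each entry. Your remarks on the choice of root of $\Lambda(\Lambda+1)=c$ and on the operator domain are valid additions that the paper leaves implicit.
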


\begin{proof}
Using the identities $(\tanh x)' = \sech^2 x$ and $\tanh^2 x = 1 - \sech^2 x$, we obtain, in explicit form, the squared Dirac operator from Equation~\eqref{eq:S2_dirac_squared}:
\begin{equation*}
    D_\varepsilon^2
    = \operatorname{diag} \Big(
        - \varepsilon^2 \partial_x^2
            + 1 - (1 + \varepsilon) \sech^2 x, \;
        - \varepsilon^2 \partial_x^2
            + 1 - (1 - \varepsilon) \sech^2 x
    \Big).
\end{equation*}
Rewriting,
\begin{equation*}
    D_\varepsilon^2
    = 1 + \varepsilon^2
    \begin{pmatrix}
        - \partial_x^2
        - \varepsilon^{-1}
            \left( \varepsilon^{-1} + 1 \right)
            \sech^2 x
        & 0 \\
        0 & - \partial_x^2
            - \left( \varepsilon^{-1} - 1 \right)
                \varepsilon^{-1} \sech^2 x
    \end{pmatrix}.
\end{equation*}
Each diagonal entry has the standard P\"{o}schl--Teller form $H_\Lambda = -\partial_x^2 - \Lambda(\Lambda+1)\sech^2 x$, with $\Lambda = \varepsilon^{-1}$ for the upper component and $\Lambda = \varepsilon^{-1} - 1$ for the lower component. This confirms \eqref{eq:S3_dirac_PT_form}.
\end{proof}

\subsection{Bound States from the P\"{o}schl-Teller Potential}
\label{subsec:S3_bound_states}

\begin{proposition}[Exact Spectrum and Interlacing]
\label{prop:S3_spectrum}
Let $N := \lfloor \varepsilon^{-1} \rfloor$. The discrete eigenvalues of $D_\varepsilon^2$ are given by:
\begin{itemize}
    \item \underline{Upper component:}
        \begin{equation} \label{eq:S3_lambda_plus}
            \lambda_{k}^{(+)} = 1 - (1 - \varepsilon k)^2,
            \quad k = 0, 1, \dots, N.
        \end{equation}
    \item \underline{Lower component:}
        \begin{equation} \label{eq:S3_lambda_minus}
            \lambda_{k}^{(-)} = 1 - (1 - \varepsilon(k+1))^2,
            \quad k = 0, 1, \dots, N-1.
        \end{equation}
\end{itemize}
The spectra interlace such that
\begin{equation} \label{eq:S3_interlacing}
    \lambda_k^{(-)} = \lambda_{k+1}^{(+)}.
\end{equation}
Consequently, the zero-energy mode $\lambda_0^{(+)} = 0$ is simple, while all other discrete eigenvalues are doubly degenerate. Tabulating these eigenvalues:
\begin{align*}
    & \text{\underline{Index}} &
    &\text{\underline{Upper Entry}} &
    &\text{\underline{Lower Entry}}
    \\
    &0 &
    &0 &
    &\varepsilon(2 - \varepsilon)
    \\
    &1 &
    &\varepsilon (2 - \varepsilon) &
    &2\varepsilon (2 - 2\varepsilon)
    \\
    &2 &
    &2\varepsilon (2 - 2\varepsilon) &
    &3\varepsilon (2 - 3\varepsilon)
    \\
    &3 &
    &3\varepsilon (2 - 3\varepsilon) &
    &4\varepsilon (2 - 4\varepsilon)
    \\
    &\vdots &
    &\vdots &
    &\vdots
    \\
    &N-1 &
    &(N-1) \varepsilon
        (2 - (N-1) \varepsilon) &
    &N \varepsilon (2 - N \varepsilon)
    \\
    &N &
    &N \varepsilon (2 - N \varepsilon) &
    &
\end{align*}
In particular, if $\varepsilon^{-1} \in \mathbb{N}$, then $N = \varepsilon^{-1}$ and $\lambda_N = 1$. The total number of distinct eigenvalues of $D_\varepsilon^2$, including the zero energy mode, is
\begin{equation} \label{eq:S3_count_squared}
    \# \sigma_\text{discrete}
    \left( D_\varepsilon^2 \right)
    = N + 1
    = \left \lfloor \varepsilon^{-1} \right \rfloor
        + 1.
\end{equation}
From the anticommutation argument of Section~\ref{sec:S2_setup}, both $+\sqrt{\lambda_k}$ and $-\sqrt{\lambda_k}$ are eigenvalues of $D_\varepsilon$, so the number of distinct discrete eigenvalues of $D_\varepsilon$ is
\begin{equation} \label{eq:S3_count_dirac}
    \# \sigma_\text{discrete}
    \left( D_\varepsilon \right)
    = 2 N + 1
    = 2 \left \lfloor \varepsilon^{-1} \right \rfloor
        + 1.
\end{equation}
Furthermore, the essential spectrum of $D_\varepsilon^2$ begins at $1$, so
\begin{equation*}
    \sigma_{\text{ess}}
        \left( D_\varepsilon^2 \right)
    = [1, \, \infty)
\end{equation*}
and by the symmetry argument,
\begin{equation*}
    \sigma_{\text{ess}}
        \left( D_\varepsilon \right)
    = (-\infty, \, -1] \cup [1, \, \infty).
\end{equation*}
\end{proposition}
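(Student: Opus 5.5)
The plan is to reduce everything to the known Pöschl--Teller spectrum via Proposition~\ref{prop:S3_PT}, and then do the bookkeeping by hand.

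\textbf{Step 1: the Pöschl--Teller spectrum.} I will take as known (see \cite{Fluegge1971}) that for $\Lambda>0$ the operator $H_\Lambda=-\partial_x^2-\Lambda(\Lambda+1)\sech^2 x$ on $L^2(\mathbb{R})$ has essential spectrum $[0,\infty)$. Its discrete eigenvalues are $-(\Lambda-k)^2$ for integers $0\le k<\Lambda$, with eigenfunctions proportional to $\sech^{\Lambda-k}x$ times a Gegenbauer-type polynomial. If one wants this self-contained, I would use the ladder factorization. Set $A_\Lambda=\partial_x+\Lambda\tanh x$. Then
\[
A_\Lambda^*A_\Lambda=H_\Lambda+\Lambda^2, \qquad A_\Lambda A_\Lambda^*=H_{\Lambda-1}+\Lambda^2 .
\]
This is exactly the structure of $D_\varepsilon^2$ with $m=\tanh$ rescaled. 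The ground state $\sech^\Lambda x$ lies in $\ker A_\Lambda$, and induction on $\Lambda\mapsto\Lambda-1$ produces the remaining eigenvalues.

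\textbf{Step 2: transfer to $D_\varepsilon^2$.} By \eqref{eq:S3_dirac_PT_form}, an eigenvalue $\mu$ of $H_\Lambda$ gives the eigenvalue $1+\varepsilon^2\mu$ of the corresponding component.

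For the upper component, $\Lambda=\varepsilon^{-1}$, so the eigenvalues are $1-\varepsilon^2(\varepsilon^{-1}-k)^2=1-(1-\varepsilon k)^2$.

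For the lower component, $\Lambda=\varepsilon^{-1}-1$, so the eigenvalues are $1-(1-\varepsilon(k+1))^2$.

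The interlacing \eqref{eq:S3_interlacing} then follows by comparing the two formulas. The essential spectrum $[1,\infty)$ follows from $\sigma_{\rm ess}(H_\Lambda)=[0,\infty)$, which holds because $\sech^2$ decays and so gives a relatively compact perturbation (Weyl's theorem). The statement for $D_\varepsilon$ then follows from Proposition~\ref{prop:S2_symmetry} together with the spectral mapping $\sigma(D_\varepsilon^2)=\sigma(D_\varepsilon)^2$.

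\textbf{Step 3: index ranges and counting.} The condition $0\le k<\Lambda$ becomes $k<\varepsilon^{-1}$ for the upper component and $k+1<\varepsilon^{-1}$ for the lower one. This is where care is needed.

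When $\varepsilon^{-1}\notin\mathbb{N}$, this gives $k\le N$ and $k\le N-1$, matching the statement.

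When $\varepsilon^{-1}=N\in\mathbb{N}$, the index $k=N$ gives $\lambda=1$. That value sits at the threshold $\Lambda-k=0$, where the would-be eigenfunction $\sech^0x=1$ is not in $L^2$. So it is a threshold resonance rather than a true eigenvalue.

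I would handle this as follows. Include $\lambda_N=1$ in the list with the remark already made in the statement, namely that it coincides with the edge of the essential spectrum. The count $N+1$ of distinct values then holds uniformly, consisting of $0$ together with the values $1-(1-\varepsilon k)^2$ for $k=1,\dots,N$.

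\textbf{Degeneracy claims.} Simplicity of $\lambda_0^{(+)}=0$ holds because $0$ does not appear in the lower list, since $1-\varepsilon(k+1)\ne1$. Each other value appears exactly once in each component, so it is doubly degenerate. For $D_\varepsilon$, the count $2N+1$ comes from $\pm\sqrt{\lambda_k}$ for $k\ge1$, plus $0$.

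\textbf{Main obstacle.} I expect the threshold case $\varepsilon^{-1}\in\mathbb{N}$ and the precise index range (strict versus non-strict inequality) to be the only delicate points. Everything else is the substitution carried out in Step 2.
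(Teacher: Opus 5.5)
Your route is the paper's: its proof quotes the P\"{o}schl--Teller levels and substitutes them through Proposition~\ref{prop:S3_PT}, exactly as in your Step 2. Your extras are real improvements. The factorization is literally $D_\varepsilon^2=\varepsilon^2\operatorname{diag}(A_\Lambda^*A_\Lambda,\,A_\Lambda A_\Lambda^*)$ with $\Lambda=\varepsilon^{-1}$, so the interlacing becomes structural rather than a coincidence of formulas. The paper's proof also never justifies $\sigma_{\mathrm{ess}}(D_\varepsilon^2)=[1,\infty)$, which your Weyl argument does. For $\varepsilon^{-1}\notin\mathbb{N}$, your plan establishes everything in the statement.

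The step that does not work is your resolution of the integer case in Step 3. Your diagnosis is correct, but keeping $1$ in the list and calling the count uniform does not prove $\#\sigma_{\mathrm{discrete}}(D_\varepsilon^2)=N+1$. When $\varepsilon^{-1}=N\in\mathbb{N}$, that claim is false, and so is the count $2N+1$ for $D_\varepsilon$.
\begin{itemize}
\item The value $1$ lies in $[1,\infty)=\sigma_{\mathrm{ess}}(D_\varepsilon^2)$, so it cannot be a discrete eigenvalue.
\item It is not even an eigenvalue. Since $\sech^2 x$ decays exponentially, every nonzero solution of $-\psi''-\Lambda(\Lambda+1)\sech^2(x)\,\psi=0$ either tends to a nonzero constant or grows linearly as $x\to+\infty$, so it is not in $L^2$.
\item A minor correction: the bounded threshold solution for $\Lambda=N$ is the Legendre polynomial $P_N(\tanh x)$, not $\sech^0 x=1$. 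The conclusion is unchanged.
\item The last lower entry $\lambda^{(-)}_{N-1}=1$ has the same defect, and your Step 3 does not mention it.
\end{itemize}

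What your own index analysis actually proves is this. The discrete eigenvalues are $\lambda_k^{(+)}$ for $0\le k<\varepsilon^{-1}$ and $\lambda_k^{(-)}$ for $0\le k<\varepsilon^{-1}-1$. Hence $\#\sigma_{\mathrm{discrete}}(D_\varepsilon^2)=\lceil\varepsilon^{-1}\rceil$ and $\#\sigma_{\mathrm{discrete}}(D_\varepsilon)=2\lceil\varepsilon^{-1}\rceil-1$. These agree with $N+1$ and $2N+1$ exactly when $\varepsilon^{-1}\notin\mathbb{N}$.

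The paper's proof has the same defect in hidden form. It quotes the P\"{o}schl--Teller levels for $k\in\{0,\dots,\lfloor\Lambda\rfloor\}$ instead of $0\le k<\Lambda$, which admits the non-$L^2$ threshold state whenever $\Lambda\in\mathbb{N}$. So you should record this as a correction to the proposition, or add the hypothesis $\varepsilon^{-1}\notin\mathbb{N}$, rather than absorb it by convention.
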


\begin{proof}
The eigenvalues of the P\"{o}schl--Teller operator $H_\Lambda$ are known to be~\cite{Fluegge1971}
$E_k = -(\Lambda - k)^2$ for $k \in \{0, 1, \dots, \lfloor \Lambda \rfloor\}$. Applying this to \eqref{eq:S3_dirac_PT_form}:
\begin{equation*}
    \lambda_k^{(+)}
    = 1 + \varepsilon^2 (-(\varepsilon^{-1} - k)^2)
    = 1 - (1 - \varepsilon k)^2.
\end{equation*}
The calculation for $\lambda_k^{(-)}$ is analogous. The interlacing $\lambda_k^{(-)} = \lambda_{k+1}^{(+)}$ follows by direct substitution of indices. The simplicity of the zero mode ($k=0$ in the upper entry) arises because the lower entry's index starts effectively at the first excited state of the upper entry.
\end{proof}
\begin{remark}
The exact spectrum derived in this section serves as a benchmark for the general WKB analysis developed in Sections~\ref{sec:S4_wkb} and~\ref{sec:S5_turning_points}. Proposition~\ref{prop:S6_exact} confirms that the modified Bohr--Sommerfeld condition recovers these eigenvalues exactly for all $k$. 
\end{remark}

\subsection{Spectral Gap Structure and Spacing}
\label{subsec:S3_spectral_gap}

\begin{proposition}[Spectral Spacing and Weyl Scaling]
\label{prop:S3_spacing}
The discrete spectrum of $D_\varepsilon^2$ exhibits two distinct regimes of spacing:
\begin{enumerate}
    \item \underline{Near the gap center ($k \approx 0$):} $\lambda_{k+1} - \lambda_k \approx 2\varepsilon$.
    \item \underline{Near the threshold ($k \approx N$):} $\lambda_{k+1} - \lambda_k \approx 2\varepsilon^2$.
\end{enumerate}
In the semiclassical limit $\varepsilon \to 0$, the total number of bound states for $D_\varepsilon$ scales as $\# \sigma_{\mathrm{disc}}(D_\varepsilon) = \frac{2}{\varepsilon} + \mathcal{O}(1)$.
\end{proposition}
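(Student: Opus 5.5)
The plan is to work directly from the explicit formula of Proposition~\ref{prop:S3_spectrum}. By the interlacing \eqref{eq:S3_interlacing}, the distinct eigenvalues of $D_\varepsilon^2$ are $\lambda_k := \lambda_k^{(+)} = 1-(1-\varepsilon k)^2 = \varepsilon k(2-\varepsilon k)$ for $k=0,\dots,N$. So the spacing reduces to a single exact algebraic identity. Expanding the difference of squares gives
\begin{equation*}
    \lambda_{k+1}-\lambda_k
    = (1-\varepsilon k)^2 - (1-\varepsilon(k+1))^2
    = \varepsilon\bigl(2 - 2\varepsilon k - \varepsilon\bigr)
    = 2\varepsilon(1-\varepsilon k) - \varepsilon^2 .
\end{equation*}
Both regimes are read off from this identity.

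\emph{Gap center.} For $k$ fixed, or more generally $k = o(\varepsilon^{-1})$, we have $\varepsilon k \to 0$. The identity then gives $\lambda_{k+1}-\lambda_k = 2\varepsilon(1+o(1))$, which is item~1.

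\emph{Threshold.} For $k$ near $N=\lfloor\varepsilon^{-1}\rfloor$, write $k = N - j$ with $j$ bounded. Set $\theta := \varepsilon^{-1}-N\in[0,1)$. Then $1-\varepsilon k = \varepsilon(j+\theta)$, and the identity becomes
\begin{equation*}
    \lambda_{k+1}-\lambda_k = \varepsilon^2\bigl(2(j+\theta)-1\bigr).
\end{equation*}
So the spacing is of order $\varepsilon^2$, with the constant $2$ appearing for the last gaps when $\varepsilon^{-1}\in\mathbb{N}$. I will state item~2 as the order-of-magnitude claim $\lambda_{k+1}-\lambda_k = O(\varepsilon^2)$ for $N-k=O(1)$. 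I will remark that the prefactor depends on $j$ and on the fractional part $\theta$. The symbol $\approx$ in the statement should be read in this scaling sense.

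\emph{Count.} From \eqref{eq:S3_count_dirac}, $\#\sigma_{\mathrm{disc}}(D_\varepsilon) = 2\lfloor\varepsilon^{-1}\rfloor+1$. Using $\varepsilon^{-1}-1<\lfloor\varepsilon^{-1}\rfloor\le\varepsilon^{-1}$, this equals $2/\varepsilon + O(1)$, with the $O(1)$ term lying in $(-1,1]$.

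The only delicate point is item~2. One must fix a precise meaning of ``$k\approx N$'' and of ``$\approx 2\varepsilon^2$''. The natural choice is $N-k$ bounded, which gives an $O(\varepsilon^2)$ spacing whose constant depends on $N-k$ and $\theta$. Equivalently, the spacing interpolates continuously from $2\varepsilon$ down to order $\varepsilon^2$ as $\varepsilon k$ runs from $0$ to $1$. Everything else is direct substitution.
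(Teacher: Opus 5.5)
Your proposal is correct and follows essentially the paper's route: the explicit formula $\lambda_k = 1-(1-\varepsilon k)^2$, the substitution $j = N-k$ with fractional part $\theta = \varepsilon^{-1}-N$ (the paper's $\delta$), and the exact count $2N+1$ from \eqref{eq:S3_count_dirac}. Your exact identity $\lambda_{k+1}-\lambda_k = \varepsilon^2\bigl(2(j+\theta)-1\bigr)$ is simply a sharper form of the paper's $\varepsilon^2(2j-1)\approx 2\varepsilon^2 j$, which the paper obtains by dropping $\delta$.

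One small slip: your aside that the constant $2$ appears for the last gaps when $\varepsilon^{-1}\in\mathbb{N}$ is wrong. With $\theta=0$ those gaps are $\varepsilon^2, 3\varepsilon^2, 5\varepsilon^2,\dots$. Your reading of item~2 as an $\mathcal{O}(\varepsilon^2)$ scaling claim, which is also what the paper's argument actually establishes, is the right one.
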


\begin{proof}
Equation~\eqref{eq:S3_count_dirac} gives the exact count $2N+1 \sim 2/\varepsilon$. For the spacing, we work with the upper-component eigenvalues $\lambda_k^{(+)}$ from~\eqref{eq:S3_lambda_plus}, writing $\lambda_k := \lambda_k^{(+)}$ for brevity. Expanding,
\begin{equation*}
    \lambda_k
    = 1 - (1 - \varepsilon k)^2
    = 2\varepsilon k - \varepsilon^2 k^2.
\end{equation*}
For small $k$, $\lambda_k \approx 2 \varepsilon k$, giving nearly uniform spacing $\lambda_{k+1} - \lambda_k \approx 2\varepsilon$. Near the threshold, define $j := N - k$ and $\delta := \varepsilon^{-1} - N < 1$. Then
\begin{equation*}
    1 - \lambda_{N-j}
    = (1 - \varepsilon(N-j))^2
    = \varepsilon^2 (\delta + j)^2,
\end{equation*}
so $\lambda_{N-j} \approx 1 - \varepsilon^2 j^2$, and
\begin{equation*}
    \lambda_{N-(j-1)} - \lambda_{N-j}
    \approx \varepsilon^2 j^2 - \varepsilon^2(j-1)^2
    = \varepsilon^2(2j-1)
    \approx 2\varepsilon^2 j,
\end{equation*}
showing quadratic accumulation near the threshold. In summary:
\begin{align*}
    \text{For $k$ near 0:} \quad
        &\lambda_{k+1} - \lambda_k
            \approx 2 \varepsilon \\
    \text{For $k$ near $N$:} \quad
        &\lambda_{k+1} - \lambda_k
            \approx 2 \varepsilon^2
\end{align*}
implying rough uniform filling within the interior of the gap and quadratic accumulation near the threshold. This will be related to a Weyl-type law for the general case in Section~\ref{sec:S7_spectral}. 
\end{proof}


\section{WKB Approximation for the Squared Dirac Operator}
\label{sec:S4_wkb}

\subsection{Reduction to a Semiclassical Schr\"{o}dinger Equation}
\label{subsec:S4_reduction}

We look to solve the eigenvalue problem $D_\varepsilon^2 \Psi = E^2 \Psi$. Since $D_\varepsilon^2$ is diagonal by Proposition~\ref{prop:S2_squared}, this reduces to solving two independent scalar equations
\begin{equation} \label{eq:S4_schrodinger}
    \left( - \varepsilon^2 \frac{\diff^2}{\diff x^2}
        + m(x)^2 + \sigma_\mathcal{D} m'(x)
    \right) \psi(x)
    = \mathcal{E} \psi(x),
\end{equation}
where $\sigma_\mathcal{D} \in \{\pm 1\}$ is the \emph{pseudo-spin index}, distinguishing the two diagonal entries ($\sigma_\mathcal{D} = -1$ for the upper component and $\sigma_\mathcal{D} = +1$ for the lower), and $\mathcal{E} := E^2$. Although each value of $\sigma_\mathcal{D}$ corresponds to a different component of $\Psi(x)$, we use $\psi(x)$ as a generic notation for an eigenfunction of either, treating both cases simultaneously. Rearranging gives the standard semiclassical form
\begin{equation} \label{eq:S4_semiclassical}
    -\varepsilon^2\psi''(x) = Q(x,\varepsilon)\psi(x),
\end{equation}
with
\begin{equation} \label{eq:S4_Q_def}
    Q(x,\varepsilon)
   := \mathcal{E} - m(x)^2
        - \varepsilon\sigma_\mathcal{D} m'(x).
\end{equation}

\subsection{WKB Ansatz and Phase Expansion}
\label{subsec:S4_ansatz}

\begin{lemma}[WKB Phase Equations]
\label{lem:S4_phase}
With the ansatz $\psi(x) = \exp[iS(x)/\varepsilon]$ and formal phase expansion $S(x) = \sum_{k=0}^\infty \varepsilon^k S_k(x)$, the leading and first-order equations are
\begin{align}
    \mathcal{O}(1): \quad
    &(S_0'(x))^2 - \mathcal{E} + m(x)^2
    = 0, \label{eq:S4_order1} \\
    \mathcal{O}(\varepsilon): \quad
    &2S_0'(x)S_1'(x) - iS_0''(x)
        + \sigma_\mathcal{D} m'(x)
    = 0. \label{eq:S4_ordereps}
\end{align}
\end{lemma}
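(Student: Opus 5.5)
Substitute the ansatz directly into the semiclassical equation \eqref{eq:S4_semiclassical} with $Q$ as in \eqref{eq:S4_Q_def}, then match powers of $\varepsilon$.

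First, differentiate $\psi = \exp[iS/\varepsilon]$ twice:
\begin{equation*}
    \psi' = \frac{i}{\varepsilon} S' \psi,
    \qquad
    \psi'' = \left( -\frac{1}{\varepsilon^2} (S')^2 + \frac{i}{\varepsilon} S'' \right) \psi .
\end{equation*}
Hence $-\varepsilon^2 \psi'' = \left( (S')^2 - i\varepsilon S'' \right)\psi$. Setting this equal to $Q\psi$ and dividing by the nonvanishing factor $\psi$ gives the exact (Riccati-type) phase equation
\begin{equation*}
    (S')^2 - i\varepsilon S'' - \mathcal{E} + m^2 + \varepsilon \sigma_\mathcal{D} m' = 0 .
\end{equation*}

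Next, insert the formal series $S = S_0 + \varepsilon S_1 + \varepsilon^2 S_2 + \cdots$, differentiated termwise. This gives
\begin{equation*}
    (S')^2 = (S_0')^2 + 2\varepsilon S_0' S_1' + \mathcal{O}(\varepsilon^2),
    \qquad
    \varepsilon S'' = \varepsilon S_0'' + \mathcal{O}(\varepsilon^2).
\end{equation*}
Collect coefficients order by order:
\begin{itemize}
    \item At $\varepsilon^0$ the terms are $(S_0')^2 - \mathcal{E} + m^2$, which is \eqref{eq:S4_order1}.
    \item At $\varepsilon^1$ the terms are $2S_0'S_1' - iS_0'' + \sigma_\mathcal{D} m'$, which is \eqref{eq:S4_ordereps}.
\end{itemize}
Requiring each coefficient to vanish separately, as in the standard formal WKB hierarchy, yields the lemma.

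The step that needs the most care is keeping track of which terms enter at which order. The term $-\varepsilon\sigma_\mathcal{D} m'$ in $Q$ is already of order $\varepsilon$, so it joins the first-order equation rather than the eikonal one; this is what later produces the pseudo-spin shift. Likewise, the $-i\varepsilon S''$ term contributes only $S_0''$ at first order, since $S_1''$ first appears at order $\varepsilon^2$.

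Because the expansion is formal, no convergence is claimed. We assume only that $m\in C^1$ and that the $S_k$ are smooth enough for the derivatives to make sense. The identity is exact before truncation, so the only content is this bookkeeping of powers of $\varepsilon$.
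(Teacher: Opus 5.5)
Your proposal is correct and follows the paper's proof essentially step for step. You differentiate the ansatz twice, substitute into $-\varepsilon^2\psi'' = Q\psi$, divide by $\psi$ to obtain the exact phase equation $(S')^2 - i\varepsilon S'' = \mathcal{E} - m^2 - \varepsilon\sigma_\mathcal{D} m'$, and then collect the $\varepsilon^0$ and $\varepsilon^1$ coefficients; your remark that the $\sigma_\mathcal{D} m'$ term enters only at first order matches the paper's own comment.
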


\begin{proof}
The exponential WKB ansatz is standard in semiclassical analysis; see, e.g.,~\cite{BenderOrszag1978} or~\cite{Zworski2012}. Consider the ansatz
\begin{equation*}
    \psi(x)
    = \exp \left[ \frac{iS(x)}{\varepsilon} \right]
\end{equation*}
with phase expansion
\begin{equation} \label{eq:S4_phase_expansion}
    S(x) = \sum_{k=0}^\infty \varepsilon^k S_k(x).
  \end{equation}
Because the small parameter multiplies the highest derivative of~\eqref{eq:S4_semiclassical}, solutions generically exhibit rapid variation at the scale $\varepsilon$. The exponential ansatz captures precisely this rapid-phase behavior. Computing the derivatives,
\begin{equation*}
    \psi'(x) = \frac{iS'(x)}{\varepsilon}\psi(x),
    \qquad
    \psi''(x) = \left(
        \frac{iS''(x)}{\varepsilon}
        - \frac{(S'(x))^2}{\varepsilon^2}
    \right)\psi(x).
\end{equation*}
Inserting into~\eqref{eq:S4_semiclassical} and dividing by $\psi(x)$,
\begin{equation*}
    (S'(x))^2 - \varepsilon i S''(x)
    = \mathcal{E} - m(x)^2
        - \varepsilon \sigma_\mathcal{D} m'(x).
\end{equation*}
Expanding to $\mathcal{O}(\varepsilon)$ using~\eqref{eq:S4_phase_expansion},
\begin{equation*}
    \Big( \left( S_0'(x) \right)^2
        - \mathcal{E} + m(x)^2
    \Big)
    + \varepsilon \Big( 2 S_0'(x) S_1'(x)
        - i S_0''(x)
        + \sigma_\mathcal{D} m'(x)
    \Big)
    + \mathcal{O} (\varepsilon^2)
    = 0,
\end{equation*}
from which~\eqref{eq:S4_order1} and~\eqref{eq:S4_ordereps} follow by collecting orders. The additional $\sigma_\mathcal{D}m'(x)$ term at $\mathcal{O}(\varepsilon)$ distinguishes this from the standard scalar WKB expansion; the leading term is the same in both cases.
\end{proof}

\begin{lemma}[Leading and Correction Phases]
\label{lem:S4_phases}
The solutions to the phase equations of Lemma~\ref{lem:S4_phase} are
\begin{equation} \label{eq:S4_S0}
    S_0(x)
    = S_0(x_0) + \sigma_0
        \int_{x_0}^x
        \sqrt{\mathcal{E} - m(s)^2}\,\diff s,
    \quad \sigma_0 \in \{\pm 1\},
\end{equation}
and
\begin{equation} \label{eq:S4_S1}
    S_1(x)
    = S_1(x_1)
    - \frac{i}{4}\ln\!\left(
        \frac{\mathcal{E} - m(x_1)^2}
            {\mathcal{E} - m(x)^2}
    \right)
    - \frac{\sigma_\mathcal{D}\sigma_0}{2} I_2(x),
\end{equation}
where $\sigma_0 = +1$ selects right-moving and $\sigma_0 = -1$ selects left-moving waves, and
\begin{equation} \label{eq:S4_I2_def}
    I_2(x)
    = \begin{dcases}
        \arcsin\!\left(\frac{m(x)}{\sqrt{\mathcal{E}}}\right)
        - \arcsin\!\left(\frac{m(x_1)}{\sqrt{\mathcal{E}}}\right)
        & \mathcal{E} \geq m(x)^2, \\
        -i\operatorname{arccosh}\!\left(
            \frac{|m(x)|}{\sqrt{\mathcal{E}}}\right)
        + i\operatorname{arccosh}\!\left(
            \frac{|m(x_1)|}{\sqrt{\mathcal{E}}}\right)
        & \mathcal{E} < m(x)^2.
    \end{dcases}
\end{equation}
\end{lemma}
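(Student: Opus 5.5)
The plan is to integrate the two phase equations of Lemma~\ref{lem:S4_phase} one after the other. For \eqref{eq:S4_order1} we have $(S_0')^2 = \mathcal{E} - m^2$. On any interval free of turning points, $S_0' = \sigma_0\sqrt{\mathcal{E}-m(x)^2}$ with a constant sign $\sigma_0\in\{\pm1\}$, by continuity. In the forbidden region the square root is understood as its principal branch, which is purely imaginary there. Integrating from a base point $x_0$ then gives \eqref{eq:S4_S0}. With this, $\psi\sim e^{i\sigma_0\int\sqrt{\cdot}/\varepsilon}$ propagates to the right for $\sigma_0=+1$ and to the left for $\sigma_0=-1$ in the allowed region.

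Next I would solve \eqref{eq:S4_ordereps} for $S_1'$:
\begin{equation*}
    S_1' = \frac{i S_0''}{2S_0'} - \frac{\sigma_\mathcal{D} m'}{2S_0'}.
\end{equation*}
The first term is $\tfrac{i}{2}(\ln S_0')' = \tfrac{i}{4}\bigl(\ln(\mathcal{E}-m^2)\bigr)'$. Integrating from $x_1$ gives $\tfrac{i}{4}\ln\frac{\mathcal{E}-m(x)^2}{\mathcal{E}-m(x_1)^2}$, which equals the stated $-\tfrac{i}{4}\ln\frac{\mathcal{E}-m(x_1)^2}{\mathcal{E}-m(x)^2}$. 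The constant $\sigma_0^2=1$ does not affect the logarithmic derivative, so this term is independent of $\sigma_0$.

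The second term is $-\tfrac{\sigma_\mathcal{D}\sigma_0}{2}\,\frac{m'}{\sqrt{\mathcal{E}-m^2}}$, using $1/\sigma_0=\sigma_0$. Its integral is $-\tfrac{\sigma_\mathcal{D}\sigma_0}{2}I_2(x)$ with
\begin{equation*}
    I_2(x)=\int_{x_1}^x \frac{m'(s)}{\sqrt{\mathcal{E}-m(s)^2}}\,\diff s .
\end{equation*}
I would evaluate this by substituting $u=m(s)$, which is legitimate since $m$ is monotone. In the allowed region $\mathcal{E}\ge m^2$, the antiderivative of $1/\sqrt{\mathcal{E}-u^2}$ is $\arcsin(u/\sqrt{\mathcal{E}})$, which gives the first case of \eqref{eq:S4_I2_def}.

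The main obstacle is the forbidden region $\mathcal{E}<m^2$, where the branch of the square root must be handled consistently. There I would write $\sqrt{\mathcal{E}-u^2}=i\sqrt{u^2-\mathcal{E}}$, so that $1/\sqrt{\mathcal{E}-u^2}=-i/\sqrt{u^2-\mathcal{E}}$. The antiderivative of $1/\sqrt{u^2-\mathcal{E}}$ is $\operatorname{sgn}(u)\operatorname{arccosh}(|u|/\sqrt{\mathcal{E}})$. The two forbidden components $x<x_-$ (where $m<0$) and $x>x_+$ (where $m>0$) must be treated separately, with $x_1$ taken in the same component as $x$. I would then check that the sign factor combines with the chosen branch, including the orientation of $m'$ in each component, to give exactly $-i\operatorname{arccosh}(|m(x)|/\sqrt{\mathcal{E}})+i\operatorname{arccosh}(|m(x_1)|/\sqrt{\mathcal{E}})$. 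If a residual $\operatorname{sgn}(m)$ remains on the left component, I would absorb it into the branch convention for $\sqrt{\mathcal{E}-m^2}$ there, equivalently into the choice of $\sigma_0$, which is how I expect the lemma intends its convention.

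Finally, I would verify by direct differentiation that the resulting $S_0$ and $S_1$ satisfy \eqref{eq:S4_order1} and \eqref{eq:S4_ordereps} on each turning-point-free interval. I would also note that the integration constants $S_0(x_0)$ and $S_1(x_1)$ are arbitrary; they amount only to normalization and phase.
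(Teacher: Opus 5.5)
Your proposal is correct and follows the paper's proof essentially step for step: $S_0'=\sigma_0\sqrt{\mathcal{E}-m^2}$, then solving the $\mathcal{O}(\varepsilon)$ equation for $S_1'$, then integrating the logarithmic piece (you via $\tfrac{i}{2}(\ln S_0')'$, the paper via the substitution $u=\mathcal{E}-m^2$), and finally evaluating $I_2$ by $u=m(s)$ and standard antiderivatives. Your caution about the left forbidden component is justified, and sharper than the paper's treatment: with the convention $\sqrt{\mathcal{E}-m^2}=i\sqrt{m^2-\mathcal{E}}$ fixed in the paper's proof, the factor $\operatorname{sgn}(m)=-1$ from the antiderivative $\operatorname{sgn}(u)\operatorname{arccosh}(|u|/\sqrt{\mathcal{E}})$ survives on $x<x_-$ (nothing involving $m'>0$ cancels it), giving $I_2(x)=i\operatorname{arccosh}(|m(x)|/\sqrt{\mathcal{E}})-i\operatorname{arccosh}(|m(x_1)|/\sqrt{\mathcal{E}})$ there, so the stated formula holds on that component only after taking the branch $-i\sqrt{m^2-\mathcal{E}}$ consistently in both $S_0$ and $I_2$ (equivalently $\sigma_0\mapsto-\sigma_0$), which is exactly your proposed absorption and which the paper's appeal to ``standard antiderivatives'' passes over silently.
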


\begin{proof}
\subsubsection*{Leading Order Phase}

For the $\mathcal{O}(1)$ condition~\eqref{eq:S4_order1},
\begin{equation*}
    S_0'(x)
    = \sigma_0 \sqrt{\mathcal{E} - m(x)^2},
    \quad
    \sigma_0 \in \{ \pm 1 \},
\end{equation*}
where $\sigma_0 = +1$ selects right-moving and $\sigma_0 = -1$ selects left-moving waves. The quantity $\mathcal{E} - m(x)^2$ may be positive or negative: in the classically allowed region
$\sqrt{\mathcal{E} - m(x)^2}$ is taken as the positive real square root, while in the classically forbidden region it is taken as $i\sqrt{m(x)^2 - \mathcal{E}}$, so that $S_0(x)$ becomes purely imaginary there and the WKB solution $e^{iS_0 / \varepsilon}$ decays exponentially. Integrating from a reference point $x_0$ gives~\eqref{eq:S4_S0}.

\subsubsection*{Correction Phase}

For the $\mathcal{O}(\varepsilon)$ condition~\eqref{eq:S4_ordereps},
\begin{equation*}
    S_1'(x)
    = \frac{i S_0''(x) - \sigma_\mathcal{D} m'(x)}
        {2 S_0'(x)}.
\end{equation*}
Differentiating $S_0'(x)$,
\begin{equation*}
    S_0''(x)
    = - \sigma_0
        \frac{m(x) m'(x)}
            {\sqrt{\mathcal{E} - m(x)^2}},
\end{equation*}
which gives
\begin{equation*}
    S_1'(x)
    = \frac{- i m(x) m'(x)}
            {2 \big( \mathcal{E} - m(x)^2 \big)}
        - \frac{ \sigma_\mathcal{D} \sigma_0 m'(x)}
            {2 \sqrt{\mathcal{E} - m(x)^2}}.
\end{equation*}
Note that the second term is entirely due to the additional $\sigma_\mathcal{D} m'(x)$ in the Schr\"{o}dinger-type equation; without it, $S_1'$ would match the standard scalar WKB form. Integrating from a reference point $x_1$,
\begin{equation*}
    S_1(x)
    = S_1(x_1) - \frac{i}{2}
        \underbrace{\int_{x_1}^x
            \frac{m(s) m'(s)}{\mathcal{E} - m(s)^2}
            \, \diff s}_{I_1(x)}
    - \frac{\sigma_\mathcal{D} \sigma_0}{2}
        \underbrace{\int_{x_1}^x
            \frac{m'(s)}{\sqrt{\mathcal{E} - m(s)^2}}
            \, \diff s}_{I_2(x)}.
\end{equation*}
For the first integral $I_1(x)$, the substitution $u = \mathcal{E} - m(s)^2$ with $\diff u = - 2 m'(s) m(s) \, \diff s$ gives
\begin{equation*}
    I_1(x)
    = \frac{1}{2} \ln \left(
        \frac{\mathcal{E} - m(x_1)^2}
        {\mathcal{E} - m(x)^2}
    \right).
\end{equation*}
For the second integral, the substitution $v = m(s)$ gives
\begin{equation*}
    I_2(x)
    = \int_{m(x_1)}^{m(x)}
        \frac{\diff v}{\sqrt{\mathcal{E} - v^2}},
\end{equation*}
which evaluates to~\eqref{eq:S4_I2_def} using the standard antiderivatives. We do not evaluate this integral over the turning points where $\mathcal{E} = m(x)^2$, treating the classically allowed and classically forbidden regions independently. Note that at a classical turning point where $m(x)^2 = \mathcal{E}$, this expression contributes a discrete phase: $\arcsin(\pm 1) = \pm \pi/2$, while $\operatorname{arccosh}(1) = 0$. This foreshadows the modified semiclassical quantization condition derived later. Combining $I_1$ and $I_2$ gives~\eqref{eq:S4_S1}. The reference point $x_1$ remains free at this stage; it will be fixed to $x_1 = x_\tau$ when evaluating the solution near each turning point in Section~\ref{subsec:S5_linearize_wkb}, which simplifies the amplitude terms $A_\chi$ considerably.
\end{proof}

\subsection{Final WKB Form}
\label{subsec:S4_wkb_form}

\begin{proposition}[WKB Solution]
\label{prop:S4_wkb}
Away from turning points, the WKB solution to~\eqref{eq:S4_semiclassical} takes the form
\begin{equation} \label{eq:S4_wkb_final}
    \psi_\chi(x)
    = \frac{CA_\chi P_\chi(x)}
        {\sqrt[4]{\mathcal{E} - m(x)^2}}
    \exp\!\left[
        \frac{i\sigma_0}{\varepsilon}
        \int_{x_0}^x
            \sqrt{\mathcal{E} - m(s)^2}\,\diff s
    \right],
\end{equation}
where $C := \exp[iS_0(x_0)/\varepsilon + iS_1(x_1)]
\sqrt[4]{\mathcal{E} - m(x_1)^2}$, and with
$\chi := \operatorname{sgn}(\mathcal{E} - m(x)^2)$,
\begin{equation} \label{eq:S4_P_A_plus}
    P_+(x)
    = \exp \left[
        - \frac{i \sigma_\mathcal{D} \sigma_0}{2}
        \arcsin \! \left(
            \frac{m(x)}{\sqrt{\mathcal{E}}}
        \right)
    \right], \qquad
    A_+
    = \exp \left[
        \frac{i \sigma_\mathcal{D} \sigma_0}{2}
        \arcsin \! \left(
            \frac{m(x_1)}{\sqrt{\mathcal{E}}}
        \right)
    \right],
\end{equation}
in the classically allowed region ($\chi = +1$), and
\begin{equation} \label{eq:S4_P_A_minus}
    P_-(x)
    = \exp \! \left[
        \frac{\sigma_\mathcal{D} \sigma_0}{2}
        \operatorname{arccosh} \! \left(
            \frac{|m (x)|}{\sqrt{\mathcal{E}}}
        \right)
    \right], \qquad
    A_-
    = \exp \! \left[
        - \frac{\sigma_\mathcal{D} \sigma_0}{2}
        \operatorname{arccosh} \! \left(
            \frac{|m (x_1)|}{\sqrt{\mathcal{E}}}
        \right)
    \right],
\end{equation}
in the classically forbidden region ($\chi = -1$). The solution is valid provided the validity criterion
\begin{equation} \label{eq:S4_wkb_validity}
    \frac{\varepsilon|m(x)m'(x)|}
        {|\mathcal{E} - m(x)^2|^{3/2}}
    \ll 1
\end{equation}
holds, which fails near turning points where $\mathcal{E} - m(x)^2 \to 0$.
\end{proposition}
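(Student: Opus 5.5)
The plan is to substitute the results of Lemma~\ref{lem:S4_phases} into the truncated ansatz $\psi = \exp[i(S_0 + \varepsilon S_1)/\varepsilon] = \exp[iS_0/\varepsilon]\exp[iS_1]$ and simplify each factor separately. There are three steps: the $S_0$ factor, the logarithmic part of $S_1$, and the $I_2$ part of $S_1$. The validity criterion is handled afterwards by a size comparison.

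\emph{The $S_0$ factor.} By \eqref{eq:S4_S0}, this factor equals $\exp[iS_0(x_0)/\varepsilon]$ times the oscillatory (or decaying) exponential in \eqref{eq:S4_wkb_final}.

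\emph{The logarithmic part of $S_1$.} By \eqref{eq:S4_S1}, it contributes
\begin{equation*}
\exp\!\left[\tfrac14 \ln\!\big((\mathcal{E}-m(x_1)^2)/(\mathcal{E}-m(x)^2)\big)\right]
= \sqrt[4]{\mathcal{E}-m(x_1)^2}\,\big/\sqrt[4]{\mathcal{E}-m(x)^2}.
\end{equation*}
Combined with $e^{iS_1(x_1)}$ and $e^{iS_0(x_0)/\varepsilon}$, this yields exactly the constant $C$ and the amplitude denominator. Here $x_1$ is taken in the same region as $x$, so the branch of the fourth root is fixed consistently.

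\emph{The $I_2$ part of $S_1$.} It contributes $\exp[-\tfrac{i\sigma_\mathcal{D}\sigma_0}{2} I_2(x)]$, and we split into cases using \eqref{eq:S4_I2_def}.
\begin{itemize}
\item If $\chi=+1$, then $I_2$ is a difference of arcsines. The $x$-dependent piece gives $P_+(x)$, and the $x_1$ piece, with the opposite sign, gives $A_+$.
\item If $\chi=-1$, then $-\tfrac{i}{2}\sigma_\mathcal{D}\sigma_0 \cdot(-i\operatorname{arccosh}) = -\tfrac12\sigma_\mathcal{D}\sigma_0\operatorname{arccosh}$. This conflicts in sign with the stated $P_-$.
\end{itemize}
This sign is the step I expect to be delicate. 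The sign depends on the branch chosen for $\sqrt{\mathcal{E}-m^2}=i\sqrt{m^2-\mathcal{E}}$ in the denominator of $I_2$. It also depends on the sign of $m$, since $\int \diff v/\sqrt{v^2-\mathcal{E}}$ equals $\operatorname{sgn}(v)\operatorname{arccosh}(|v|/\sqrt{\mathcal{E}})$. I would recompute $I_2$ directly in the forbidden region using $1/(i\sqrt{m^2-\mathcal{E}}) = -i/\sqrt{m^2-\mathcal{E}}$. Then I would track the orientation (left versus right forbidden region) and check that it reproduces \eqref{eq:S4_P_A_minus}. If the residual sign is $\operatorname{sgn}(m)$, I would either absorb it into $\sigma_0$, which labels growing versus decaying branches there, or note it as the convention implicit in the statement.

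\emph{Validity criterion.} I would require $|\varepsilon S_1'| \ll |S_0'|$, i.e.\ that the first correction is small relative to the leading phase and the expansion is consistent. Using the formula for $S_1'$, its dominant term near turning points is $m m'/(2(\mathcal{E}-m^2))$. Dividing by $|S_0'| = |\mathcal{E}-m^2|^{1/2}$ gives \eqref{eq:S4_wkb_validity}. The $\sigma_\mathcal{D}$ term contributes only $\varepsilon|m'|/|\mathcal{E}-m^2|$, which is of lower singularity and is dominated by \eqref{eq:S4_wkb_validity} as long as $m$ stays away from $0$ near turning points, i.e.\ $\mathcal{E}>0$. A similar estimate on $\varepsilon^2 S_2'$, which is equivalent to $|\varepsilon S_1''|\ll |S_0'|^2$, gives the same scaling. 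Finally, the blow-up as $\mathcal{E}-m^2\to0$ shows that the criterion fails at $x_\pm$.
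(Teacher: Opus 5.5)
Your route is the paper's: substitute $S_0,S_1$ from Lemma~\ref{lem:S4_phases} into the ansatz, read off $C$ and the $(\mathcal{E}-m^2)^{-1/4}$ amplitude from the logarithm and $A_\chi P_\chi$ from $I_2$, and get the validity criterion from $|\varepsilon S_1'|\ll|S_0'|$. The allowed-region case and the criterion are fine as you have them. The paper's proof only asserts that collecting terms gives the stated formulas. You are right that in the forbidden region it does not, and the discrepancy is real, not a matter of convention.

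With the branch $\sqrt{\mathcal{E}-m^2}=i\sqrt{m^2-\mathcal{E}}$ fixed in Lemma~\ref{lem:S4_phases}, the $\sigma_\mathcal{D}$-part of $iS_1$ is $-\tfrac{\sigma_\mathcal{D}\sigma_0}{2}\int_{x_1}^x m'(s)\,(m(s)^2-\mathcal{E})^{-1/2}\,\mathrm{d}s$. Since $\int \mathrm{d}v/\sqrt{v^2-\mathcal{E}}=\operatorname{sgn}(v)\operatorname{arccosh}(|v|/\sqrt{\mathcal{E}})$, this gives
\[
P_-(x)=\exp\Big[-\tfrac{\operatorname{sgn}(m)\,\sigma_\mathcal{D}\sigma_0}{2}\operatorname{arccosh}\bigl(\tfrac{|m(x)|}{\sqrt{\mathcal{E}}}\bigr)\Big],\qquad A_-=\exp\Big[\tfrac{\operatorname{sgn}(m)\,\sigma_\mathcal{D}\sigma_0}{2}\operatorname{arccosh}\bigl(\tfrac{|m(x_1)|}{\sqrt{\mathcal{E}}}\bigr)\Big].
\]
So the stated $P_-$ and $A_-$ (both of them, not only $P_-$) are correct on the left forbidden region, where $m<0$, and have the wrong sign on the right one, where $m>0$. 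Conversely, the forbidden-region $I_2$ in Lemma~\ref{lem:S4_phases} is only the $m>0$ case, which is why literal substitution gave you the opposite sign.

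Here is an independent check. For $\psi=a\,e^{-\sigma_0\varepsilon^{-1}\int\kappa}$ with $\kappa=\sqrt{m^2-\mathcal{E}}$, the $\mathcal{O}(\varepsilon)$ transport equation is $\sigma_0(2\kappa a'+\kappa'a)+\sigma_\mathcal{D}m'a=0$. At $\mathcal{E}=0$ with $\sigma_\mathcal{D}=-1$, $\sigma_0=+1$, $m>0$, it forces $a'=0$. That is as it must be, since $e^{-M/\varepsilon}$ (with $M'=m$) solves that equation exactly. The stated $P_-$ would instead give $a\propto m^{-1}$ as $\mathcal{E}\to0$.

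Neither of your proposed ways out closes this. You cannot absorb $\operatorname{sgn}(m)$ into $\sigma_0$. Everything depends only on $S_0'=\sigma_0\sqrt{\mathcal{E}-m^2}$, and $\sigma_0$ also sits in the leading exponential and decides decay versus growth. Flipping it in $P_-$ alone is inconsistent. Flipping it throughout the right region amounts to reading $\sqrt{\mathcal{E}-m^2}$ as $-i\sqrt{m^2-\mathcal{E}}$ there. That side-dependent branch is the only \emph{implicit convention} under which the stated formulas hold on both sides. But it contradicts the branch fixed in Lemma~\ref{lem:S4_phases}, and it makes the decaying branch on the right $\sigma_0=-1$, contrary to the choice $\sigma_0=\tau$ in Lemma~\ref{lem:S5_wkb_overlap}.

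So with the paper's conventions the proposition is false as stated for $m>0$. What your computation actually establishes is the $\operatorname{sgn}(m)$-corrected version above. The slip is harmless for Theorem~\ref{thm:S5_bs}: $A_-=1$ at $x_1=x_\tau$ whatever the sign, and the $P_-$ correction is discarded as subleading in the overlap matching. It does, however, change the forbidden-region amplitude of the approximate eigenfunctions.
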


\begin{proof}
Combining $S_0$ and $S_1$ from Lemma~\ref{lem:S4_phases} into the ansatz gives~\eqref{eq:S4_wkb_final} after collecting the exponential and logarithmic terms. The factor $1/\sqrt[4]{\mathcal{E} - m(x)^2}$ is the standard WKB amplitude arising from the transport equation. The $\sigma_\mathcal{D}$-dependent contribution in $P_\chi$ originates from the $I_2$ term in $S_1$, which is absent in the scalar Schr\"{o}dinger case. Unlike the scalar case, this term produces an additional spin-dependent phase and distinguishes the two branches.

The validity criterion~\eqref{eq:S4_wkb_validity} follows from requiring $|\varepsilon S_1'| \ll |S_0'|$. Near a turning point, the first term of $S_1'$ dominates: from the expression derived in Lemma~\ref{lem:S4_phases}, $S_1'(x) \sim -\frac{im(x)m'(x)}{2(\mathcal{E} - m(x)^2)}$, which diverges as $\mathcal{E} - m(x)^2 \to 0$ and grows faster than the second term $- \frac{ \sigma_\mathcal{D} \sigma_0 m'(x)} {2 \sqrt{\mathcal{E} - m(x)^2}}$. Imposing $|\varepsilon S_1'| \ll |S_0'|$ on this dominant contribution gives
\begin{equation*}
    \varepsilon \cdot
        \frac{|m(x)m'(x)|}{|\mathcal{E} - m(x)^2|}
    \ll \sqrt{|\mathcal{E} - m(x)^2|},
\end{equation*}
which rearranges to~\eqref{eq:S4_wkb_validity}. Note that the second term of $S_1'$ gives the weaker criterion $\varepsilon|m'(x)|/|\mathcal{E} - m(x)^2| \ll 1$, which is implied by~\eqref{eq:S4_wkb_validity} near a turning point. Substituting the linearization $|\mathcal{E} - m(x)^2| \approx |\lambda||x - x_\tau|$ into~\eqref{eq:S4_wkb_validity} shows the criterion reduces to $|x - x_\tau|^{3/2} \gg \varepsilon$, i.e.\ $|x - x_\tau| \gg \varepsilon^{2/3}$, which identifies the natural inner scale at turning points.
\end{proof}


\section{Turning Point Analysis and Connection Formulae}
\label{sec:S5_turning_points}

We work throughout with the Schr\"{o}dinger equation \eqref{eq:S4_semiclassical}--\eqref{eq:S4_Q_def} derived in Section~\ref{sec:S4_wkb}. The WKB solution~\eqref{eq:S4_wkb_final} breaks down when $\mathcal{E} - m(x)^2 = 0$, since the amplitude factor $1/\sqrt[4]{\mathcal{E} - m(x)^2}$ diverges at such points. We call these \emph{turning points}: locations $x_\tau$ satisfying
\begin{equation} \label{eq:S5_turning_point}
    \mathcal{E} - m(x_\tau)^2 = 0,
\end{equation}
which mark the boundaries between the \emph{classically allowed} ($\mathcal{E} - m(x)^2 > 0$) and \emph{classically forbidden} ($\mathcal{E} - m(x)^2 < 0$) regions.

To construct a uniform approximation near each turning point, the real line is decomposed into three zones. The precise boundaries of these zones depend on the natural inner length scale, which is determined in Section~\ref{subsec:S5_linearize_Q} via a dominant balance argument. Anticipating that result (see Lemma~\ref{lem:S5_airy} below), $\Delta = \mathcal{O} (\varepsilon^{2/3})$, we define:
\begin{itemize}
    \item \underline{Outer (WKB) region:}
        $|x - x_\tau| \gg \varepsilon^{2/3}$, where the validity criterion~\eqref{eq:S4_wkb_validity} is satisfied and the WKB approximation holds (see Lemma~\ref{lem:S5_airy} for the derivation of this scale).
    \item \underline{Inner (Airy) region:}
        $|x - x_\tau| \ll 1$, where $Q(x, \varepsilon)$ may be linearized and the resulting equation solved exactly in terms of Airy functions.
    \item \underline{Overlap region:}
        $\varepsilon^{2/3} \ll |x - x_\tau| \ll 1$, where both approximations are simultaneously valid and asymptotic matching between the WKB and Airy solutions is performed.
\end{itemize}
The existence of a non-empty overlap region is guaranteed for $\varepsilon$ sufficiently small, since the inner and outer regions overlap whenever $\varepsilon^{2/3} \ll 1$.

To construct a uniform approximation across a turning point, we proceed in three steps: linearize $Q$ and reduce to Airy's equation (Section~\ref{subsec:S5_linearize_Q}), linearize the WKB solution in the overlap region (Section~\ref{subsec:S5_linearize_wkb}), and match the two representations to determine the connection coefficients (Section~\ref{subsec:S5_connection}).

\subsection{Linearizing the Potential Function}
\label{subsec:S5_linearize_Q}

\begin{lemma}[Airy Reduction]
\label{lem:S5_airy}
Near each turning point $x_\tau$, the equation~\eqref{eq:S4_semiclassical} reduces to the Airy equation
\begin{equation} \label{eq:S5_airy}
    \psi''(\zeta) = \zeta\,\psi(\zeta),
    \qquad
    \zeta := \lambda^{1/3}\varepsilon^{-2/3}(x - x_\tau),
\end{equation}
with errors $\mathcal{O}(\varepsilon + \varepsilon\delta_x + \delta_x^2)$, where
\begin{equation} \label{eq:S5_lambda}
    \lambda := 2m(x_\tau)m'(x_\tau),
\end{equation}
and $\lambda^{1/3}$ denotes the real cube root. In particular,
\begin{equation} \label{eq:S5_linearized_Q}
    Q(x,\varepsilon)
    = -\lambda\delta_x
        + \mathcal{O}\!\left(
            \varepsilon + \varepsilon\delta_x
            + \delta_x^2
        \right),
    \qquad \delta_x := x - x_\tau.
\end{equation}
The natural inner length scale is $\Delta = \mathcal{O}(\varepsilon^{2/3})$, confirming the outer region boundary $|x - x_\tau| \gg \varepsilon^{2/3}$.
\end{lemma}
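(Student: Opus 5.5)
The plan is to Taylor expand $Q(x,\varepsilon)$ from \eqref{eq:S4_Q_def} about the turning point $x_\tau$ defined by \eqref{eq:S5_turning_point}, and then rescale the independent variable so that the second-derivative term and the linear term of $Q$ balance.

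\textbf{Step 1 (expansion of $Q$).} Write $\delta_x = x - x_\tau$. Since $\mathcal{E} = m(x_\tau)^2$, we have
\begin{equation*}
    \mathcal{E} - m(x)^2 = -\big(m(x)^2 - m(x_\tau)^2\big) = -2m(x_\tau)m'(x_\tau)\,\delta_x + \mathcal{O}(\delta_x^2) = -\lambda\delta_x + \mathcal{O}(\delta_x^2).
\end{equation*}
The quadratic remainder needs $m^2 \in C^2$ near $x_\tau$. The standing hypothesis is only $m\in C^1$, so I would either assume the extra smoothness locally or accept an $o(\delta_x)$ remainder; I flag this as the only genuinely delicate point. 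For the subleading term, expanding $m'(x) = m'(x_\tau) + \mathcal{O}(\delta_x)$ gives
\begin{equation*}
    \varepsilon\sigma_\mathcal{D} m'(x) = \mathcal{O}(\varepsilon) + \mathcal{O}(\varepsilon\delta_x).
\end{equation*}
Adding the two expansions yields \eqref{eq:S5_linearized_Q}. By the nondegeneracy hypothesis, $\lambda \neq 0$.

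\textbf{Step 2 (rescaling to Airy).} Substitute $x - x_\tau = \Delta\,\zeta$ into $-\varepsilon^2\psi'' = Q\psi$. With $\zeta$-derivatives this becomes
\begin{equation*}
    -\frac{\varepsilon^2}{\Delta^2}\psi_{\zeta\zeta} = \big(-\lambda\Delta\zeta + \text{error}\big)\psi .
\end{equation*}
Dominant balance of the two retained terms requires $\varepsilon^2/\Delta^2 = \lambda\Delta$, so $\Delta = \lambda^{-1/3}\varepsilon^{2/3}$, with the real cube root taken so that the sign of $\lambda$ is absorbed. This is exactly the definition of $\zeta$ in \eqref{eq:S5_airy}. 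Dividing through by $\lambda\Delta = \lambda^{2/3}\varepsilon^{2/3}$ gives $\psi_{\zeta\zeta} = \zeta\psi$ plus the error terms. The statement records those errors in the unrescaled form $\mathcal{O}(\varepsilon + \varepsilon\delta_x + \delta_x^2)$, matching \eqref{eq:S5_linearized_Q}. Relative to the retained scale $\varepsilon^{2/3}$, they are $\mathcal{O}(\varepsilon^{1/3})$ and $\mathcal{O}(\delta_x^2/\varepsilon^{2/3})$. These are small whenever $|\delta_x| \ll \varepsilon^{1/3}$, in particular throughout the inner region near $|\delta_x|\sim\varepsilon^{2/3}$.

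\textbf{Step 3 (inner scale and consistency).} The inner scale $\Delta=\mathcal{O}(\varepsilon^{2/3})$ comes out of Step 2 directly. It is consistent with the validity criterion \eqref{eq:S4_wkb_validity} computed in Proposition~\ref{prop:S4_wkb}: substituting $|\mathcal{E}-m^2|\approx|\lambda||\delta_x|$ there gives $|\delta_x|\gg\varepsilon^{2/3}$. This confirms the outer region boundary.

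The main obstacle is bookkeeping rather than ideas. One must state precisely in which sense the equation "reduces" to Airy, namely that the neglected terms are uniformly small on the inner region. One must also handle the sign of $\lambda$, which differs at $x_-$ and $x_+$ since $m(x_-)<0<m(x_+)$ while $m'>0$. Using the real cube root makes $\zeta$ increase into the forbidden region at $x_+$ and decrease into it at $x_-$, so the same Airy equation applies at both turning points.
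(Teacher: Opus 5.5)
Your proposal is correct and follows essentially the paper's route: Taylor-expand $Q$ about $x_\tau$, rescale $\delta_x=\Delta\zeta$, and balance $\varepsilon^2/\Delta^2$ against $\lambda\Delta$ to get $\Delta=\lambda^{-1/3}\varepsilon^{2/3}$. The $\mathcal{O}(\varepsilon)$ and $\mathcal{O}(\varepsilon\delta_x)$ terms are then subleading; you are also right that the $\mathcal{O}(\delta_x^2)$ remainder needs more than the standing $C^1$ hypothesis, which the paper's expansion assumes silently by using $m''(x_\tau)$. One small correction to your closing remark: the real cube root makes $\zeta>0$, and increasing, on the forbidden side at \emph{both} turning points (it is $x$, not $\zeta$, that decreases into the forbidden region at $x_-$), which is what the paper's sign table records and what the later Airy matching relies on.
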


\begin{proof}
Expanding $Q$ near $x_\tau$ via Taylor series, and using the fact that $x_\tau$ is a root of the $\mathcal{O}(1)$ component of $Q$,
\begin{equation*}
    Q(x, \varepsilon)
    = \underbrace{
        - \varepsilon \sigma_\mathcal{D}
            m'(x_\tau)
        }_{C_{1, \mathcal{O}(\varepsilon)}}
    - \delta_x
    \big( \underbrace{
            2 m(x_\tau) m'(x_\tau)
        }_{\lambda}
        + \underbrace{
            \varepsilon \sigma_\mathcal{D}
                m''(x_\tau)
        }_{C_{2, \mathcal{O}(\varepsilon)}} \big)
    + \mathcal{O} \big( \delta_x^2 \big).
\end{equation*}
Introducing $\zeta = \delta_x/\Delta$ with $\Delta = \mathcal{O}(\varepsilon^\alpha)$ and applying the chain rule, the orders of each coefficient are
\begin{equation*}
    \frac{\varepsilon^2}{\Delta^2}
    = \mathcal{O}(\varepsilon^{2(1-\alpha)}),
    \quad
    \lambda\Delta = \mathcal{O}(\varepsilon^\alpha),
    \quad
    C_{2,\mathcal{O}(\varepsilon)}\Delta
    = \mathcal{O}(\varepsilon^{1+\alpha}),
    \quad
    C_{1,\mathcal{O}(\varepsilon)}
    = \mathcal{O}(\varepsilon).
\end{equation*}
Dominant balance between the left-hand side and the $\lambda\Delta$ term requires
\begin{equation*}
    \mathcal{O}(\varepsilon^{2(1-\alpha)})
    = \mathcal{O}(\varepsilon^\alpha)
    \implies \alpha = \frac{2}{3}
    \implies \Delta = \mathcal{O}(\varepsilon^{2/3}).
\end{equation*}
The remaining terms $C_{2,\mathcal{O}(\varepsilon)}\Delta = \mathcal{O}(\varepsilon^{5/3})$ and $C_{1,\mathcal{O}(\varepsilon)} = \mathcal{O}(\varepsilon)$ are both asymptotically smaller than $\mathcal{O}(\varepsilon^{2/3})$ and may be dropped. Choosing exactly $\Delta = \lambda^{-1/3}\varepsilon^{2/3}$ gives the Airy equation~\eqref{eq:S5_airy}. Since $\lambda^{1/3}$ denotes the real cube root, $\zeta$ is always real. Since $\lambda < 0$ at the left turning point, $\lambda^{1/3} < 0$ there, and since $\lambda > 0$ at the right turning point, $\lambda^{1/3} > 0$ there. The sign of $\zeta$ in each of the four regions is therefore determined by the sign of $\lambda^{1/3}$ and the side of the turning point, and is summarized in Table~\ref{tab:zeta_signs}.
\end{proof}

\begin{table}[ht]
    \centering
    \begin{tabular}{|ll|cc|ccccc|}
        \hline
        Turning Point & Side
            & $\tau$ & $\chi$
            & $\lambda$
                & $\delta_x = x - x_\tau$
            & $\Delta$
            & $\zeta = \delta_x/\Delta$
            & $\zeta^{3/2}$ \\
        \hline
        Left & Forbidden (left)
            & $-1$ & $-1$
            & $< 0$ & $< 0$ & $< 0$ & $> 0$ & real \\
        Left & Allowed (right)
            & $-1$ & $+1$
            & $< 0$ & $> 0$ & $< 0$ & $< 0$ & imaginary \\
        Right & Allowed (left)
            & $+1$ & $+1$
            & $> 0$ & $< 0$ & $> 0$ & $< 0$ & imaginary \\
        Right & Forbidden (right)
            & $+1$ & $-1$
            & $> 0$ & $> 0$ & $> 0$ & $> 0$ & real \\
        \hline
    \end{tabular}
    \caption{Sign of $\zeta$ and character of $\zeta^{3/2}$ in each of the four regions near a turning point, under the nondegenerate turning point assumption of Section~\ref{sec:S2_setup}. The local orientation is determined by the sign of $\partial_x (m^2 - E)$ at $x_\tau$. Here $\Delta = \lambda^{-1/3} \varepsilon^{2/3}$ is a \emph{signed} rescaling factor (not a positive length scale), whose sign is determined by $\operatorname{sgn} (\lambda^{1/3})$. The sign of $\zeta$ determines whether the local solution is oscillatory ($\zeta < 0$, classically allowed) or exponential ($\zeta > 0$, classically forbidden).}
    \label{tab:zeta_signs}
\end{table}

Regardless of which turning point is under consideration, $\zeta < 0$ always corresponds to the classically allowed region and $\zeta > 0$ always corresponds to the classically forbidden region. Consequently, $\zeta^{3/2}$ is purely imaginary in the allowed region, producing oscillatory behavior, and real in the forbidden region, producing exponential decay. This uniform correspondence is what makes the Airy matching procedure tractable.

\subsection{Linearizing the WKB Solution}
\label{subsec:S5_linearize_wkb}

We use the parameter
\begin{equation*}
    \tau := \operatorname{sgn}(m(x_\tau))
    = \begin{cases}
        -1 & \text{left turning point},
            \ m(x_\tau) < 0 \\
        +1 & \text{right turning point},
            \ m(x_\tau) > 0
    \end{cases}
\end{equation*}
to distinguish the two turning points compactly, noting that
\begin{equation} \label{eq:S5_mtp}
    m(x_\tau)
    = \tau \sqrt{\mathcal{E}}.
\end{equation}
In addition, recall the parameter $\chi := \operatorname{sgn}(\mathcal{E} - m(x)^2)$ from Proposition~\ref{prop:S4_wkb}. In the overlap region, $\varepsilon^{2/3} \ll |x - x_\tau| \ll 1$, so
\begin{equation} \label{eq:S5_zeta_bound_above_1}
    |\zeta|
    = |\lambda^{1/3}\varepsilon^{-2/3}(x - x_\tau)|
    \gg 1.
\end{equation}
The linearization of $\mathcal{E} - m(x)^2$ gives
\begin{equation} \label{eq:S5_lin_E_x}
    \mathcal{E} - m(x)^2
    = -\lambda\delta_x
        + \mathcal{O}(\delta_x^2),
\end{equation}
and in the overlap region,
\begin{equation} \label{eq:S5_lin_E_zeta}
    \mathcal{E} - m(x)^2
    = -(\lambda\varepsilon)^{2/3}\zeta
        + \mathcal{O}(\zeta^2\varepsilon^{4/3}).
\end{equation}

\begin{lemma}[Linearized WKB Components]
\label{lem:S5_linearized_wkb}
In the overlap region, the three components of the WKB solution~\eqref{eq:S4_wkb_final} linearize as follows.

\noindent\emph{(i) Phase integral:}
\begin{equation} \label{eq:S5_S0_linearized}
    S_\tau(\zeta)
    = \frac{2i\tau\varepsilon}{3}\zeta^{3/2}
        + \mathcal{O}(\varepsilon^{5/3}\zeta^{5/2}).
\end{equation}

\noindent\emph{(ii) Amplitude terms} (with $x_1 = x_\tau$):
\begin{equation} \label{eq:S5_A_plus_eval}
    A_{+,\tau}
    = \exp\!\left[
        \frac{i\pi\tau\sigma_\mathcal{D}\sigma_0}{4}
    \right], \qquad
    A_{-} = 1.
\end{equation}

\noindent\emph{(iii) Phase functions:}
\begin{equation} \label{eq:S5_P_plus_linearized}
    P_{+,\tau}(\zeta)
    = \exp\!\left[
        -\frac{i\tau\sigma_\mathcal{D}\sigma_0}{2}
        \left(
            \frac{\pi}{2}
            - \frac{(\varepsilon\lambda)^{1/3}}
                {\sqrt{\mathcal{E}}}
                \sqrt{|\zeta|}
        \right)
    \right]
    \exp\!\left[
        \mathcal{O}(\varepsilon|\zeta|^{3/2})
    \right],
\end{equation}
\begin{equation} \label{eq:S5_P_minus_linearized}
    P_{-,\tau}(\zeta)
    = \exp\!\left[
        \frac{\sigma_\mathcal{D}\sigma_0}{2}
        \frac{(\varepsilon\lambda)^{1/3}}
            {\sqrt{\mathcal{E}}}
            \sqrt{\zeta}
    \right]
    \exp\!\left[
        \mathcal{O}(\varepsilon|\zeta|^{3/2})
    \right].
\end{equation}
\end{lemma}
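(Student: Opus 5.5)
The plan is to treat the three parts separately. In each, write $\delta_x = \Delta\zeta$ with $\Delta = \lambda^{-1/3}\varepsilon^{2/3}$. The linearizations \eqref{eq:S5_lin_E_x}--\eqref{eq:S5_lin_E_zeta} are then substituted into the exact expressions from Lemma~\ref{lem:S4_phases} and Proposition~\ref{prop:S4_wkb}. Throughout, the remainders are tracked in the overlap region, where $|\zeta|\gg 1$ but $\varepsilon^{2/3}|\zeta|\ll 1$.

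\emph{(i) Phase integral.} Take the reference point $x_0 = x_\tau$, so that $S_\tau(\zeta)$ denotes $\int_{x_\tau}^x \sqrt{\mathcal{E}-m(s)^2}\,\diff s$. Change variables $s = x_\tau + \Delta\eta$, so that $\diff s = \Delta\,\diff\eta$. By \eqref{eq:S5_lin_E_zeta}, the integrand is $\sqrt{-(\lambda\varepsilon)^{2/3}\eta}\,\bigl(1+\mathcal{O}(\varepsilon^{2/3}\eta)\bigr)$. The prefactor is $\Delta(\lambda\varepsilon)^{1/3} = \varepsilon$, and integrating $\sqrt{-\eta}$ from $0$ to $\zeta$ gives $\tfrac{2}{3}$ times a $3/2$-power. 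The remainder $\varepsilon\cdot\varepsilon^{2/3}\int|\eta|^{3/2}$ is $\mathcal{O}(\varepsilon^{5/3}|\zeta|^{5/2})$, as required.

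The main obstacle is the branch bookkeeping. We need $\sqrt{-\zeta}$, together with the orientation of the integral when $\Delta<0$ at the left turning point, to collapse into the single form $i\tau\zeta^{3/2}$. I would fix the branches using Table~\ref{tab:zeta_signs}: in the forbidden region the square root is $i\sqrt{m^2-\mathcal{E}}$. Multiplying by $\operatorname{sgn}\Delta = \tau$ then gives $\tfrac{2i\tau\varepsilon}{3}\zeta^{3/2}$ with $\zeta^{3/2}$ real. In the allowed region, $\zeta^{3/2} = (-1)^{3/2}|\zeta|^{3/2}$ is imaginary, and the same formula is consistent once the principal branch is chosen. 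I would check the four rows of the table case by case.

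\emph{(ii) Amplitudes.} Setting $x_1 = x_\tau$ in \eqref{eq:S4_P_A_plus}--\eqref{eq:S4_P_A_minus} and using $m(x_\tau) = \tau\sqrt{\mathcal{E}}$ from \eqref{eq:S5_mtp} gives $\arcsin(\tau) = \tau\pi/2$ and $\operatorname{arccosh}(1) = 0$. This yields $A_{+,\tau} = e^{i\pi\tau\sigma_\mathcal{D}\sigma_0/4}$ and $A_- = 1$ immediately.

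\emph{(iii) Phase functions.} Write $m(x)/\sqrt{\mathcal{E}} = \tau\sqrt{1-(\mathcal{E}-m^2)/\mathcal{E}}$. This uses the fact that $m$ keeps the sign $\tau$ near $x_\tau$. Set $u := |\mathcal{E}-m(x)^2|/\mathcal{E} \approx (\varepsilon\lambda)^{2/3}|\zeta|/\mathcal{E}$, reading $(\varepsilon\lambda)^{2/3}$ as $|\varepsilon\lambda|^{2/3}$, which is small in the overlap region.

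In the allowed case, use the expansion $\arcsin(\sqrt{1-u}) = \tfrac{\pi}{2} - \sqrt{u} + \mathcal{O}(u^{3/2})$, together with oddness of $\arcsin$, to get $\tau\bigl(\tfrac{\pi}{2} - \tfrac{(\varepsilon\lambda)^{1/3}}{\sqrt{\mathcal{E}}}\sqrt{|\zeta|}\bigr)$.

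In the forbidden case, use $\operatorname{arccosh}(\sqrt{1+u}) = \sqrt{u} + \mathcal{O}(u^{3/2})$, applied to $|m|$ so that no $\tau$ appears.

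The errors have two sources. The first is $u^{3/2} = \mathcal{O}(\varepsilon|\zeta|^{3/2})$. The second comes from the $\mathcal{O}(\zeta^2\varepsilon^{4/3})$ correction in \eqref{eq:S5_lin_E_zeta}, entering through $\sqrt{u}$. That correction contributes $\mathcal{O}(\varepsilon|\zeta|^{3/2})$ after factoring, since $\sqrt{u}\,\varepsilon^{2/3}|\zeta| \sim \varepsilon|\zeta|^{3/2}$. Both errors are absorbed into the factor $\exp[\mathcal{O}(\varepsilon|\zeta|^{3/2})]$. Exponentiating with the prefactors $-\tfrac{i\sigma_\mathcal{D}\sigma_0}{2}$ and $\tfrac{\sigma_\mathcal{D}\sigma_0}{2}$ then gives \eqref{eq:S5_P_plus_linearized} and \eqref{eq:S5_P_minus_linearized}.
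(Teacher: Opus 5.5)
Your route is essentially the paper's: set $x_0=x_1=x_\tau$, insert the linearization of $\mathcal{E}-m^2$, and expand $\arcsin$ or $\operatorname{arccosh}$ near $\pm1$. Part (ii) is identical. In (iii), your identities $\arcsin\sqrt{1-u}=\tfrac{\pi}{2}-\arcsin\sqrt{u}$ and $\operatorname{arccosh}\sqrt{1+u}=\operatorname{arcsinh}\sqrt{u}$, with $u=|\mathcal{E}-m^2|/\mathcal{E}$, tidy up the paper's approach of linearizing $m$ itself. The argument error becomes a relative factor $1+\mathcal{O}(\varepsilon^{2/3}|\zeta|)$ inside $\sqrt{u}$, so you never need the singular derivative $1/\sqrt{2y}$.

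\textbf{The branch claim in (i) is wrong.} Your own change of variables gives $S_\tau=\tau\varepsilon\int_0^{\zeta}\sqrt{-\eta}\,\diff\eta$. For $\zeta<0$ the integrand is positive on $(\zeta,0)$, so $S_\tau=-\tfrac{2\tau\varepsilon}{3}|\zeta|^{3/2}$. This is also clear directly: at the right turning point, $\int_{x_+}^{x}\sqrt{\mathcal{E}-m^2}\,\diff s<0$ for $x<x_+$. The principal branch, however, gives $\zeta^{3/2}=e^{3\pi i/2}|\zeta|^{3/2}=-i|\zeta|^{3/2}$. That makes the right-hand side of \eqref{eq:S5_S0_linearized} equal to $+\tfrac{2\tau\varepsilon}{3}|\zeta|^{3/2}$, the opposite sign. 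On the allowed side, (i) holds only with the convention $\zeta^{3/2}:=+i|\zeta|^{3/2}$. Equivalently, define $\zeta^{3/2}:=\sqrt{\zeta^{3}}$ with the principal root, which also gives the right value for $\zeta>0$. You need to state this convention explicitly. The case-by-case check over Table~\ref{tab:zeta_signs} that you propose would expose the problem, not confirm the principal branch.

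The paper's proof has the same defect. Its principal-root manipulations $\sqrt{-\lambda u}=\sqrt{-\lambda}\,\sqrt{u}$ and $\delta_x^{3/2}=\varepsilon\zeta^{3/2}/\sqrt{\lambda}$ are not both valid on the allowed side. Lemma~\ref{lem:S5_wkb_overlap} explicitly uses $\zeta^{3/2}=-i|\zeta|^{3/2}$. Yet the proof of Theorem~\ref{thm:S5_bs} quietly uses the correct sign $\phi_L(x)=\phi_L(x_\tau)-\tfrac{2\tau}{3}|\zeta|^{3/2}$ in \eqref{eq:S5_phi_rho_lin}.

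\textbf{A similar but harmless issue in (iii).} What you actually obtain is $\sqrt{u}=|\varepsilon\lambda|^{1/3}|\zeta|^{1/2}/\sqrt{\mathcal{E}}\ge 0$. Lemma~\ref{lem:S5_airy} takes $\lambda^{1/3}$ to be the real cube root, so $(\varepsilon\lambda)^{1/3}<0$ at $\tau=-1$. Hence \eqref{eq:S5_P_plus_linearized} and \eqref{eq:S5_P_minus_linearized} are correct only if $(\varepsilon\lambda)^{1/3}$ is read as $|\varepsilon\lambda|^{1/3}$. Reading $(\varepsilon\lambda)^{2/3}$ as $|\varepsilon\lambda|^{2/3}$, as you do, is automatic; it is the $1/3$-power that needs the absolute value. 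The paper's $\sqrt{2y}$ makes the same implicit choice. These $|\zeta|^{1/2}$ terms cancel in the later matching, so nothing downstream depends on their sign.
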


\begin{proof}
\subsubsection*{(i) Phase Integral}

Recall the integral determining $S_0(x)$ from Equation~\eqref{eq:S4_S0}. Near a turning point $x_\tau$, we write $S_\tau(x) := S_0(x)$ to represent this dependence explicitly. Linearizing via~\eqref{eq:S5_lin_E_x} and substituting $u = s - x_\tau$,
\begin{equation*}
    \sqrt{\mathcal{E} - m(s)^2}
    = \sqrt{-\lambda u + \mathcal{O}(u^2)}
    = \sqrt{-\lambda}
        \left( \sqrt{u}
            + \mathcal{O}(u^{3/2})
        \right).
\end{equation*}
Since $x_0$ is arbitrary, we choose $x_0 = x_\tau$ (consistent with the choice $x_1 = x_\tau$ for the correction phase in Lemma~\ref{lem:S4_phases}), and integrate from $x_\tau$ to $x$:
\begin{equation*}
    S_{\tau}(x)
    = \frac{2}{3}\sqrt{-\lambda}\,\delta_x^{3/2}
        + \mathcal{O}(\delta_x^{5/2}).
\end{equation*}
Using $\delta_x^{3/2} = \varepsilon\zeta^{3/2}/\sqrt{\lambda}$ from~\eqref{eq:S5_airy},
\begin{equation*}
    S_{\tau}(\zeta)
    = \frac{2\varepsilon}{3}
        \frac{\sqrt{-\lambda}}{\sqrt{\lambda}}
            \zeta^{3/2}
    + \mathcal{O}(\varepsilon^{5/3}\zeta^{5/2}).
\end{equation*}
The ratio $\sqrt{-\lambda}/\sqrt{\lambda}$ depends on the sign of $\lambda$, which equals the sign of $\tau$:
\begin{equation*}
    \lambda < 0: \
    \frac{\sqrt{-\lambda}}{\sqrt{\lambda}} = -i,
    \qquad
    \lambda > 0: \
    \frac{\sqrt{-\lambda}}{\sqrt{\lambda}} = i,
\end{equation*}
where the principal branch of the square root is assumed. Hence in either case $\sqrt{-\lambda}/\sqrt{\lambda} = \tau i$, giving~\eqref{eq:S5_S0_linearized}.

\subsubsection*{(ii) Amplitude Terms}

Setting $x_1 = x_\tau$ in Equations~\eqref{eq:S4_P_A_plus} and~\eqref{eq:S4_P_A_minus}, and using $m(x_\tau)/\sqrt{\mathcal{E}} = \tau$ from~\eqref{eq:S5_mtp}:
\begin{equation*}
    A_{+, \tau}
    = \exp \left[
        \frac{i \sigma_\mathcal{D} \sigma_0}{2}
        \arcsin(\tau)
    \right]
    = \exp \left[
        \frac{i \pi \tau \sigma_\mathcal{D} \sigma_0}{4}
    \right],
\end{equation*}
using $\arcsin(\pm 1) = \pm \pi/2$. For $A_-$, since $|m(x_\tau)| = \sqrt{\mathcal{E}}$ and $\operatorname{arccosh}(1) = 0$,
\begin{equation*}
    A_{-} = \exp\!\left[
        -\frac{\sigma_\mathcal{D}\sigma_0}{2}
        \operatorname{arccosh}(1)
    \right] = 1,
\end{equation*}
where the subscript $\tau$ is removed since there is no $\tau$ dependence. This gives~\eqref{eq:S5_A_plus_eval}.

\subsubsection*{(iii) Phase Functions}

We first linearize the argument $m(x)/\sqrt{\mathcal{E}}$ around $x_\tau$. Using $m(x_\tau)/\sqrt{\mathcal{E}} = \tau$ and $\delta_x = \lambda^{-1/3}\varepsilon^{2/3}\zeta$ from~\eqref{eq:S5_airy},
\begin{align*}
    \frac{m(x)}{\sqrt{\mathcal{E}}}
    &= \tau
    + \frac{m'(x_\tau)}{\sqrt{\mathcal{E}}}
        \left(
            \frac{\varepsilon^2}{\lambda}
        \right)^{1/3} \zeta
    + \mathcal{O}(\zeta^2\varepsilon^{4/3}).
\end{align*}
Reorganizing the second term using the definition of $\lambda$~\eqref{eq:S5_lambda} and $m(x_\tau)\sqrt{\mathcal{E}} = \tau\mathcal{E}$,
\begin{equation*}
    \frac{m'(x_\tau)}{\sqrt{\mathcal{E}}}
        \left(\frac{\varepsilon^2}{\lambda}\right)^{1/3}
    = \frac{m'(x_\tau)}{\sqrt{\mathcal{E}}}
        \frac{(\varepsilon\lambda)^{2/3}}{\lambda}
    = \frac{m'(x_\tau)}{\sqrt{\mathcal{E}}}
        \frac{(\varepsilon\lambda)^{2/3}}
            {2m(x_\tau)m'(x_\tau)}
    = \frac{\tau(\varepsilon\lambda)^{2/3}}
        {2\mathcal{E}},
\end{equation*}
giving the linearization
\begin{equation} \label{eq:S5_m_linearized}
    \frac{m(x)}{\sqrt{\mathcal{E}}}
    = \tau \left(
        1 + \frac{(\varepsilon\lambda)^{2/3}}
            {2\mathcal{E}} \zeta
    \right)
    + \mathcal{O}(\zeta^2\varepsilon^{4/3}).
\end{equation}

\noindent\textit{Allowed region ($\chi = +1$, $\zeta < 0$):}
With $\zeta < 0$, the second term in~\eqref{eq:S5_m_linearized} is negative, so
\begin{equation*}
    \frac{m(x)}{\sqrt{\mathcal{E}}}
    = \tau\left(
        1 - \frac{(\varepsilon\lambda)^{2/3}}
                {2\mathcal{E}}|\zeta|
    \right)
    + \mathcal{O}(\zeta^2\varepsilon^{4/3}).
\end{equation*}
Since $|\eta| = \frac{(\varepsilon\lambda)^{2/3}}{2\mathcal{E}}|\zeta| \sim |x - x_\tau| \ll 1$, we apply the standard expansion for $0 < y \ll 1$,
\begin{equation*}
    \arcsin(\pm(1-y))
    = \pm\left(\frac{\pi}{2} - \sqrt{2y}\right)
        + \mathcal{O}(y^{3/2}).
\end{equation*}
The $\mathcal{O}(\zeta^2\varepsilon^{4/3})$ error in the argument propagates through $\arcsin$ with derivative
\begin{equation*}
    \frac{\diff}{\diff(\pm(1-y))}\arcsin(\pm(1-y))
    = \frac{1}{\sqrt{1-(1-y)^2}}
    \approx \frac{1}{\sqrt{2y}}
    \sim \varepsilon^{-1/3}|\zeta|^{-1/2},
\end{equation*}
contributing an additional correction of order $\mathcal{O}(\varepsilon|\zeta|^{3/2})$. Since $y^{3/2} = \mathcal{O}(\varepsilon|\zeta|^{3/2})$, both errors are the same order, yielding
\begin{equation*}
    \arcsin\!\left(\frac{m(x)}{\sqrt{\mathcal{E}}}\right)
    = \tau\left(\frac{\pi}{2}
        - \frac{(\varepsilon\lambda)^{1/3}}
            {\sqrt{\mathcal{E}}}|\zeta|^{1/2}
    \right)
    + \mathcal{O}(\varepsilon|\zeta|^{3/2}),
\end{equation*}
which gives~\eqref{eq:S5_P_plus_linearized}.

\noindent\textit{Forbidden region ($\chi = -1$, $\zeta > 0$):}
For the forbidden region we linearize $|m(x)|/\sqrt{\mathcal{E}}$ instead. An analogous calculation using $m'(x_\tau) > 0$ and $\operatorname{sgn}(m(x_\tau)) = \tau$ gives
\begin{equation*}
    \frac{|m(x)|}{\sqrt{\mathcal{E}}}
    = 1 + \frac{(\varepsilon\lambda)^{2/3}}
        {2\mathcal{E}}\zeta
    + \mathcal{O}(\zeta^2\varepsilon^{4/3}).
\end{equation*}
With $\zeta > 0$, $\eta > 0$ and $\eta \ll 1$. Applying the analogous expansion for $0 < y \ll 1$,
\begin{equation*}
    \operatorname{arccosh}(1+y)
    = \sqrt{2y} + \mathcal{O}(y^{3/2}),
\end{equation*}
with the same error propagation argument gives
\begin{equation*}
    \operatorname{arccosh}\!\left(
        \frac{|m(x)|}{\sqrt{\mathcal{E}}}
    \right)
    = \frac{(\varepsilon\lambda)^{1/3}}
        {\sqrt{\mathcal{E}}}\sqrt{\zeta}
    + \mathcal{O}(\varepsilon|\zeta|^{3/2}),
\end{equation*}
which gives~\eqref{eq:S5_P_minus_linearized}.
\end{proof}

\begin{lemma}[WKB in the Overlap Region]
\label{lem:S5_wkb_overlap}
In the overlap region, both cases reduce to
\begin{equation} \label{eq:S5_wkb_overlap}
    \psi_{\chi,\tau}(\zeta)
    = C\!\left(
        \left(-(\lambda\varepsilon)^{2/3}\zeta\right)^{-1/4}
        + \mathcal{O}(\varepsilon^{1/2}\zeta^{3/4})
    \right)
    \exp\!\left[
        -\frac{2\tau\sigma_0}{3}|\zeta|^{3/2}
        + \mathcal{O}\!\left(
            \varepsilon^{1/3}|\zeta|^{1/2}
            + \varepsilon|\zeta|^{3/2}
            + \varepsilon^{5/3}\zeta^{5/2}
        \right)
    \right].
\end{equation}
The full oscillatory solution in the allowed region is
\begin{equation} \label{eq:S5_wkb_full_oscillatory}
    \Psi_{+,\tau}(\zeta)
    = P_\tau\,\psi_{+,\tau}^{(+)}(\zeta)
        + N_\tau\,\psi_{+,\tau}^{(-)}(\zeta),
\end{equation}
where in the classically allowed region ($\chi = +1$, $\zeta < 0$),
\begin{equation} \label{eq:S5_wkb_allowed}
    \psi_{+,\tau}^{(\sigma_0)}(\zeta)
    = C
    \bigg( \left(
        -(\lambda\varepsilon)^{2/3}\zeta
        \right)^{-1/4}
        + \mathcal{O}(\varepsilon^{1/2}\zeta^{3/4})
    \bigg)
    \exp \left[
        i\tau\sigma_0\!\left(
            \frac{\sigma_\mathcal{D}(\varepsilon\lambda)^{1/3}}
                {2\sqrt{\mathcal{E}}}
            |\zeta|^{1/2}
            + \frac{2}{3}|\zeta|^{3/2}
        \right)
        + \mathcal{O}\!\left(
            \varepsilon|\zeta|^{3/2}
            + \varepsilon^{5/3}\zeta^{5/2}
        \right)
    \right],
\end{equation}
and the physical (decaying) solution in the forbidden region (with $\sigma_0 = \tau$) is
\begin{equation} \label{eq:S5_wkb_full_forbidden}
    \Psi_{-,\tau}(\zeta)
    = C\!\left(
        \left(-(\lambda\varepsilon)^{2/3}\zeta\right)^{-1/4}
        + \mathcal{O}(\varepsilon^{1/2}\zeta^{3/4})
    \right)
    \exp\!\left[
        \frac{\tau\sigma_\mathcal{D}
                (\varepsilon\lambda)^{1/3}}
            {2\sqrt{\mathcal{E}}}\zeta^{1/2}
        - \frac{2}{3}\zeta^{3/2}
        + \mathcal{O}(\varepsilon|\zeta|^{3/2}
            + \varepsilon^{5/3}\zeta^{5/2})
    \right].
\end{equation}
\end{lemma}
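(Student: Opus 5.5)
We substitute the three linearizations of Lemma~\ref{lem:S5_linearized_wkb} directly into the WKB form~\eqref{eq:S4_wkb_final} of Proposition~\ref{prop:S4_wkb}. Throughout we take $x_0 = x_1 = x_\tau$, so that $C$ is the constant of Proposition~\ref{prop:S4_wkb} and $A_\chi$ is given by~\eqref{eq:S5_A_plus_eval}. The solution then factors into an amplitude, $(\mathcal{E}-m^2)^{-1/4}$, times an exponent, $i\sigma_0 S_\tau/\varepsilon + \log(A_\chi P_\chi)$. We treat these two factors separately and then specialize to $\chi=\pm1$.

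\textbf{Amplitude.} By~\eqref{eq:S5_lin_E_zeta}, $\mathcal{E}-m(x)^2 = -(\lambda\varepsilon)^{2/3}\zeta\,(1+\mathcal{O}(\varepsilon^{2/3}\zeta))$. Raising to the power $-1/4$ and expanding $(1+w)^{-1/4} = 1+\mathcal{O}(w)$ gives the leading term $(-(\lambda\varepsilon)^{2/3}\zeta)^{-1/4}$. The remainder is of relative size $\varepsilon^{2/3}\zeta$, hence of absolute size $\varepsilon^{-1/6}|\zeta|^{-1/4}\cdot \varepsilon^{2/3}|\zeta| = \mathcal{O}(\varepsilon^{1/2}|\zeta|^{3/4})$, which is exactly the stated amplitude error.

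\textbf{Exponent.} By Lemma~\ref{lem:S5_linearized_wkb}(i), $i\sigma_0 S_\tau/\varepsilon = -\tfrac{2\tau\sigma_0}{3}\zeta^{3/2} + \mathcal{O}(\varepsilon^{2/3}\zeta^{5/2})$, and the branch is chosen so that this equals $-\tfrac{2\tau\sigma_0}{3}|\zeta|^{3/2}$ with $\zeta^{3/2}$ real or imaginary as recorded in Table~\ref{tab:zeta_signs}. Next we combine $A_\chi P_\chi$:
\begin{itemize}
\item In the allowed region, $A_{+,\tau}P_{+,\tau}$ cancels the $\pm i\pi\tau\sigma_\mathcal{D}\sigma_0/4$ constants exactly. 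The remaining factor is $\exp[\tfrac{i\tau\sigma_\mathcal{D}\sigma_0}{2}(\varepsilon\lambda)^{1/3}\mathcal{E}^{-1/2}|\zeta|^{1/2}]$ times $e^{\mathcal{O}(\varepsilon|\zeta|^{3/2})}$.
\item In the forbidden region, $A_-=1$ and $P_{-,\tau}$ contributes the $\zeta^{1/2}$ term directly.
\end{itemize}
In both cases this $\zeta^{1/2}$ term is $\mathcal{O}(\varepsilon^{1/3}|\zeta|^{1/2})$. This gives the unified form~\eqref{eq:S5_wkb_overlap}, where the $\varepsilon^{1/3}|\zeta|^{1/2}$ correction is simply absorbed into the error.

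\textbf{Specializations.} For~\eqref{eq:S5_wkb_allowed}, we keep the $|\zeta|^{1/2}$ term explicitly. Since $\zeta<0$, the leading phase must be written as $i\tau\sigma_0\tfrac{2}{3}|\zeta|^{3/2}$, and the sign bookkeeping is fixed via $\sqrt{-\lambda}/\sqrt{\lambda}=\tau i$. Summing over $\sigma_0=\pm1$ with coefficients $P_\tau, N_\tau$ then gives~\eqref{eq:S5_wkb_full_oscillatory}. For~\eqref{eq:S5_wkb_full_forbidden}, we impose decay, which selects $\sigma_0=\tau$. With $\tau^2=1$ this yields the factor $e^{-\frac23\zeta^{3/2}}$ and the coefficient $\tau\sigma_\mathcal{D}$ on $\zeta^{1/2}$.

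\textbf{Main obstacle.} We expect the hardest step to be consistent branch and sign bookkeeping. Three things must agree: the meaning of $\zeta^{3/2}$ and of $(-\zeta)^{-1/4}$ for $\zeta<0$, the signed $\lambda^{1/3}$ at the left turning point, and the relation $\sqrt{-\lambda}/\sqrt{\lambda}=\tau i$. In particular, turning $\zeta^{3/2}$ into $\pm i|\zeta|^{3/2}$ in the allowed region, and tracking the sign this induces in the $|\zeta|^{1/2}$ term, must be checked case by case against Table~\ref{tab:zeta_signs}. We would carry this out in four separate rows, one for each combination of $\tau$ and $\chi$. The error terms themselves should pose no difficulty: they just add, since they are exponentiated remainders that are small in the overlap region $1\ll|\zeta|\ll\varepsilon^{-2/3}$.
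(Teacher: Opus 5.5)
Your route is the paper's own. You use the same expansion of $(\mathcal{E}-m^2)^{-1/4}$ with remainder $\mathcal{O}(\varepsilon^{1/2}|\zeta|^{3/4})$, the same cancellation of the constants $\pm i\pi\tau\sigma_\mathcal{D}\sigma_0/4$ between $A_{+,\tau}$ and $P_{+,\tau}$, $A_-=1$ with decay forcing $\sigma_0=\tau$, and the same absorption of the $\varepsilon^{1/3}|\zeta|^{1/2}$ term into the error because it is small against $|\zeta|^{3/2}$.

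The branch question you single out is not settled in the paper by a case check. The paper simply takes $\zeta^{3/2}=-i|\zeta|^{3/2}$ for $\zeta<0$, and that convention, applied to Lemma~\ref{lem:S5_linearized_wkb}(i), is what yields the stated phase $+i\tau\sigma_0\tfrac{2}{3}|\zeta|^{3/2}$. Be warned that your four-row check, done against the actual integral $\varepsilon^{-1}\int_{x_\tau}^{x}\sqrt{\mathcal{E}-m(s)^2}\,\mathrm{d}s$, gives $-i\tau\sigma_0\tfrac{2}{3}|\zeta|^{3/2}$ instead. The integrand is positive and is integrated backwards from $x_+$ and forwards from $x_-$. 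The paper's own \eqref{eq:S5_wkb_oscillatory} in fact pairs $P_\tau$ with $\exp[-\tfrac{2i\tau}{3}|\zeta|^{3/2}]$. So you can reproduce \eqref{eq:S5_wkb_allowed} only by adopting the same convention. Likewise, the $|\zeta|^{3/2}$ in \eqref{eq:S5_wkb_overlap} is literally correct only for $\zeta>0$.

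The step that actually fails is your treatment of the remainders. You correctly divide the remainder of Lemma~\ref{lem:S5_linearized_wkb}(i) by $\varepsilon$ to get $i\sigma_0 S_\tau/\varepsilon=-\tfrac{2\tau\sigma_0}{3}\zeta^{3/2}+\mathcal{O}(\varepsilon^{2/3}\zeta^{5/2})$. You then assert that this gives \eqref{eq:S5_wkb_overlap}, whose exponent remainder is $\mathcal{O}(\varepsilon^{5/3}\zeta^{5/2})$, a factor $\varepsilon$ smaller. Nothing supports that stronger bound. The $\delta_x^{5/2}$ term of $S_\tau$ is generically nonzero, since its coefficient is proportional to $(m^2)''(x_\tau)$. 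The paper's stated $\varepsilon^{5/3}\zeta^{5/2}$ is just the remainder of $S_\tau$ itself, carried into the exponent $iS_\tau/\varepsilon$ without the division.

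Your closing claim that all remainders are small on $1\ll|\zeta|\ll\varepsilon^{-2/3}$ also fails for your own remainder. The condition $\varepsilon^{2/3}|\zeta|^{5/2}\ll 1$ requires $|\zeta|\ll\varepsilon^{-4/15}$, i.e.\ $|x-x_\tau|\ll\varepsilon^{2/5}$. The fix is to state the lemma with $\mathcal{O}(\varepsilon^{2/3}|\zeta|^{5/2})$ in place of $\mathcal{O}(\varepsilon^{5/3}\zeta^{5/2})$. You should also restrict the overlap window to $1\ll|\zeta|\ll\varepsilon^{-4/15}$. That window is still nonempty for small $\varepsilon$, so asymptotic matching can be carried out inside it.
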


\begin{proof}
Assembling the components from Lemma~\ref{lem:S5_linearized_wkb}, the amplitude expands as
\begin{equation*}
    (\mathcal{E} - m(x)^2)^{-1/4}
    = \left(-(\lambda\varepsilon)^{2/3}\zeta\right)^{-1/4}
        + \mathcal{O}(\varepsilon^{1/2}\zeta^{3/4}),
\end{equation*}
using $(f + \delta)^{-1/4} \approx f^{-1/4} - \delta/(4f^{5/4})$ with $f = -(\lambda\varepsilon)^{2/3} \zeta$ and $\delta = \mathcal{O} (\varepsilon^{4/3}\zeta^2)$.

\noindent\textit{Classically allowed region ($\chi = +1$, $\zeta < 0$):} Combining with $A_{+,\tau}P_{+,\tau}(\zeta)$ from~\eqref{eq:S5_A_plus_eval} and~\eqref{eq:S5_P_plus_linearized},
\begin{equation*}
    A_{+,\tau}P_{+,\tau}(\zeta)
    = \exp\!\left[
        \frac{i\tau\sigma_\mathcal{D}\sigma_0
            (\varepsilon\lambda)^{1/3}}
            {2\sqrt{\mathcal{E}}}
        \sqrt{|\zeta|}
        + \mathcal{O}(\varepsilon|\zeta|^{3/2})
    \right],
\end{equation*}
where $\zeta^{3/2} = -i|\zeta|^{3/2}$ for $\zeta < 0$ is used in assembling the full exponential, yielding~\eqref{eq:S5_wkb_allowed}. Within the classically allowed region, $P_\tau$ (mnemonic: \emph{positive} direction, $\sigma_0 = +1$) and $N_\tau$ (mnemonic: \emph{negative} direction, $\sigma_0 = -1$) label the coefficients for right-moving and left-moving waves respectively, as in~\eqref{eq:S5_wkb_full_oscillatory}.

\noindent\textit{Classically forbidden region ($\chi = -1$, $\zeta > 0$):} Combining with $A_{-,\tau}P_{-,\tau}(\zeta)$ from~\eqref{eq:S5_A_plus_eval} and~\eqref{eq:S5_P_minus_linearized}, the WKB solution must decay as $|x - x_\tau| \to \infty$ to be physical. The $\zeta^{3/2}$ term is leading in the overlap region since $|\zeta| \gg 1$, so we require $\sigma_0\tau = +1$, fixing $\sigma_0 = \tau$. This gives~\eqref{eq:S5_wkb_full_forbidden}.

\noindent In the overlap region $|\zeta| \gg 1$ from~\eqref{eq:S5_zeta_bound_above_1}. The phase functions $P_{\chi,\tau}$ contribute a term of order $\varepsilon^{1/3}|\zeta|^{1/2}$ against the leading $|\zeta|^{3/2}$ from $S_\tau$, with ratio
\begin{equation*}
    \frac{\varepsilon^{1/3}|\zeta|^{1/2}}{|\zeta|^{3/2}}
    = \frac{\varepsilon^{1/3}}{|\zeta|}
    \ll 1,
\end{equation*}
so the phase functions are subleading and may be dropped at leading order, reducing both cases to~\eqref{eq:S5_wkb_overlap}.
\end{proof}

\subsection{WKB Connection Formula}
\label{subsec:S5_connection}

\begin{lemma}[Connection Coefficients]
\label{lem:S5_connection}
The coefficients $P_\tau$ and $N_\tau$ in~\eqref{eq:S5_wkb_full_oscillatory} are
\begin{equation} \label{eq:S5_connection_coeffs}
    P_\tau = \begin{cases}
        -iC & \tau = -1 \\
        C   & \tau = +1
    \end{cases}, \qquad
    N_\tau = \begin{cases}
        C   & \tau = -1 \\
        -iC & \tau = +1
    \end{cases}.
\end{equation}
\end{lemma}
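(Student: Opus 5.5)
The plan is to carry out standard Airy matching at each turning point. The physical solution in the inner region must decay into the forbidden side, so I take it to be a multiple of $\operatorname{Ai}(\zeta)$ from Lemma~\ref{lem:S5_airy}, say $\psi = K\operatorname{Ai}(\zeta)$; the $\operatorname{Bi}$ component is excluded because it grows as $\zeta \to +\infty$. Table~\ref{tab:zeta_signs} shows that $\zeta>0$ is always the forbidden side, so the same inner solution serves both turning points; only the meaning of the sign of $\zeta$ relative to $x$ changes with $\tau$.

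First I match on the forbidden side. For $\zeta \to +\infty$,
\begin{equation*}
\operatorname{Ai}(\zeta) \sim \frac{1}{2\sqrt{\pi}}\,\zeta^{-1/4} e^{-\frac{2}{3}\zeta^{3/2}} .
\end{equation*}
I compare this with $\Psi_{-,\tau}$ from~\eqref{eq:S5_wkb_full_forbidden}, dropping the subleading $\varepsilon^{1/3}\zeta^{1/2}$ phase as justified in Lemma~\ref{lem:S5_wkb_overlap}. The prefactor there is $(-(\lambda\varepsilon)^{2/3}\zeta)^{-1/4} = (\lambda\varepsilon)^{-1/6}(-1)^{-1/4}\zeta^{-1/4}$. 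This fixes $K$ in terms of $C$, up to a branch phase $(-1)^{-1/4}$ and the real factors $2\sqrt\pi\,(\lambda\varepsilon)^{-1/6}$. I will absorb the real constants into the overall normalization, as the statement implicitly does, and track only the phases.

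Next I match on the allowed side. For $\zeta \to -\infty$,
\begin{equation*}
\operatorname{Ai}(\zeta) \sim \frac{1}{\sqrt{\pi}}\,|\zeta|^{-1/4}\sin\!\left(\tfrac{2}{3}|\zeta|^{3/2}+\tfrac{\pi}{4}\right).
\end{equation*}
I write the sine as $\frac{1}{2i}\big(e^{i\pi/4}e^{i\frac{2}{3}|\zeta|^{3/2}} - e^{-i\pi/4}e^{-i\frac{2}{3}|\zeta|^{3/2}}\big)$. Then I compare with~\eqref{eq:S5_wkb_full_oscillatory}--\eqref{eq:S5_wkb_allowed}, where the exponent of $\psi^{(\sigma_0)}_{+,\tau}$ is $i\tau\sigma_0\frac{2}{3}|\zeta|^{3/2}$. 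The right-moving piece ($\sigma_0=+1$) therefore carries $e^{+i\frac23|\zeta|^{3/2}}$ when $\tau=+1$ and $e^{-i\frac23|\zeta|^{3/2}}$ when $\tau=-1$. This swap is exactly why $P_\tau$ and $N_\tau$ exchange roles between the two cases in~\eqref{eq:S5_connection_coeffs}.

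For each $\tau$ I read off the two coefficients. I multiply through by the factor fixed in the first step, and I account for the allowed-side prefactor $(-(\lambda\varepsilon)^{2/3}\zeta)^{-1/4} = (\lambda\varepsilon)^{-1/6}|\zeta|^{-1/4}$, which has no branch phase since $-\zeta>0$. The result should be two phases whose ratio is $e^{\pm i\pi/2}=\pm i$. After normalizing one coefficient to $C$, the other is $-iC$. The $\sigma_{\mathcal D}$ phases inside $A_{+,\tau}P_{+,\tau}$ are already accounted for by the choice $x_1 = x_\tau$ in Lemma~\ref{lem:S5_linearized_wkb}, so they should not enter the connection coefficients at leading order.

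The main obstacle is the bookkeeping of branches and phase conventions. This includes the meaning of $(-\zeta)^{-1/4}$ and $\zeta^{3/2}$ for $\zeta$ of either sign, the sign of $\lambda^{1/3}$ and hence of $\Delta$ at the left turning point, and the global constant $C$, which is defined only up to a common phase and must be chosen consistently between the two sides. I would first fix the principal branch throughout and check the $\tau=+1$ case against the textbook connection formula $2\cos(\phi-\pi/4)$. I would then obtain $\tau=-1$ by the reflection $x-x_\tau \mapsto -(x-x_\tau)$, which exchanges right- and left-movers and produces the swapped pattern in~\eqref{eq:S5_connection_coeffs}. If an overall phase discrepancy remains, I would absorb it into the free normalization $C$, since only the ratio $P_\tau/N_\tau$ enters the subsequent Bohr--Sommerfeld matching.
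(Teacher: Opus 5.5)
Your route is the paper's: keep only $\mathrm{Ai}$, fix its coefficient on the forbidden side including the branch phase $(-1)^{-1/4}=e^{-i\pi/4}$, then expand the sine on the allowed side. The gap is the step that decides which exponential $P_\tau$ multiplies, and that step determines the answer. Carrying out your plan, the forbidden-side match gives $K=2\sqrt{\pi}\,C\,\bigl(-(\lambda\varepsilon)^{2/3}\bigr)^{-1/4}$. Then for $\zeta\to-\infty$, for either value of $\tau$,
\[
K\,\mathrm{Ai}(\zeta)\approx\bigl((\lambda\varepsilon)^{2/3}|\zeta|\bigr)^{-1/4}\Bigl(-iC\,e^{\frac{2i}{3}|\zeta|^{3/2}}+C\,e^{-\frac{2i}{3}|\zeta|^{3/2}}\Bigr).
\]
You read the pairing off \eqref{eq:S5_wkb_allowed}, so that $P_\tau$ carries $e^{+\frac{2i\tau}{3}|\zeta|^{3/2}}$. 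This gives $P_{+1}=-iC$, $N_{+1}=C$, $P_{-1}=C$, $N_{-1}=-iC$, which is exactly the reverse of \eqref{eq:S5_connection_coeffs}.

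The paper's proof instead uses \eqref{eq:S5_wkb_oscillatory}, where $P_\tau$ multiplies $\exp[-\frac{2i\tau}{3}|\zeta|^{3/2}]$. That is also what a direct evaluation with the positive root gives: $\frac{1}{\varepsilon}\int_{x_+}^{x}\sqrt{\mathcal{E}-m(s)^2}\,\diff s\approx-\frac{2}{3}|\zeta|^{3/2}$ for $x<x_+$, and $\frac{1}{\varepsilon}\int_{x_-}^{x}\sqrt{\mathcal{E}-m(s)^2}\,\diff s\approx+\frac{2}{3}|\zeta|^{3/2}$ for $x>x_-$. So the exponent printed in \eqref{eq:S5_wkb_allowed} cannot be taken literally here. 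With the correct pairing, the display above gives $P_{+1}=C$, $N_{+1}=-iC$, $P_{-1}=-iC$, $N_{-1}=C$, as claimed. Your planned check against $2\cos(\phi-\pi/4)$, with $\phi=\frac{1}{\varepsilon}\int_x^{x_+}\sqrt{\mathcal{E}-m(s)^2}\,\diff s$, would have exposed this, since there the right-mover is $e^{-i\phi}$.

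Your fallbacks do not repair this. The wrong pairing turns $P_\tau/N_\tau=i\tau$ into $-i\tau$. That changes the ratio, not an overall phase, so it cannot be absorbed into $C$. Nor is $C$ free: the lemma expresses both allowed-side coefficients through the same $C$ that appears in the decaying solution \eqref{eq:S5_wkb_full_forbidden}. For the same reason you should not ``absorb the real constants'' or ``normalize one coefficient to $C$''. The factors $2\sqrt{\pi}\cdot\frac{1}{\sqrt{\pi}}\cdot\frac{1}{2}$ cancel exactly. The phases $e^{-i\pi/4}\cdot\frac{1}{i}\,e^{i\pi/4}=-i$ and $e^{-i\pi/4}\cdot\bigl(-\frac{1}{i}\bigr)e^{-i\pi/4}=1$ also come out exactly. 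Both coefficients are therefore fixed by the forbidden-side match, with no normalization choice left.
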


\begin{proof}
In the overlap regions, the general solution to the Airy equation~\eqref{eq:S5_airy} is a linear combination
\begin{equation*}
    \varphi_\tau(\zeta)
    = a\,\mathrm{Ai}(\zeta) + b\,\mathrm{Bi}(\zeta).
\end{equation*}

\subsubsection*{Classically Forbidden Region ($\zeta \to +\infty$)}

The evanescent region corresponds to the forbidden region. In the limit $\zeta \to +\infty$ (see, e.g.,~\cite[§9.7]{DLMF} or~\cite[§10.4]{AbramowitzStegun1964}),
\begin{equation*}
    \mathrm{Ai}(\zeta)
    \approx \frac{1}{2\sqrt{\pi}\zeta^{1/4}}
        \exp\!\left(-\tfrac{2}{3}\zeta^{3/2}\right),
    \qquad
    \mathrm{Bi}(\zeta)
    \approx \frac{1}{\sqrt{\pi}\zeta^{1/4}}
    \exp\!\left(\tfrac{2}{3}\zeta^{3/2}\right).
\end{equation*}
For a physically acceptable solution, the wavefunction must decay as $|x - x_\tau| \to +\infty$. Since $\mathrm{Bi}(\zeta)$ is unbounded, we must set $b = 0$. Matching $\varphi_\tau$ with~\eqref{eq:S5_wkb_overlap} in the overlap region gives
\begin{equation} \label{eq:S5_a}
    a = \frac{2\sqrt{\pi}C}
        {\sqrt[4]{-(\lambda\varepsilon)^{2/3}}}.
\end{equation}

\subsubsection*{Oscillatory Region ($\zeta \to -\infty$)}

With $b = 0$, we use the Airy asymptotic as $\zeta \to -\infty$:
\begin{equation*}
    \mathrm{Ai}(\zeta)
    \approx \frac{1}{\sqrt{\pi}|\zeta|^{1/4}}
        \sin\!\left(
            \tfrac{2}{3}|\zeta|^{3/2} + \tfrac{\pi}{4}
        \right).
\end{equation*}
Writing $\varphi_\tau$ in terms of complex exponentials with $a$ from~\eqref{eq:S5_a},
\begin{equation} \label{eq:S5_airy_exp}
    \varphi_\tau(\zeta)
    \approx \frac{-iC}
        {\sqrt[4]{-(\lambda\varepsilon)^{2/3}|\zeta|}}
    \left(
        e^{i\pi/4}
            \exp\!\left[\tfrac{2i}{3}|\zeta|^{3/2}\right]
        - e^{-i\pi/4}
            \exp\!\left[-\tfrac{2i}{3}|\zeta|^{3/2}\right]
    \right).
\end{equation}
The general WKB solution~\eqref{eq:S5_wkb_full_oscillatory} in the overlap region via~\eqref{eq:S5_wkb_overlap} is
\begin{equation} \label{eq:S5_wkb_oscillatory}
    \Psi_{+,\tau}(\zeta)
    \approx
    \frac{1}
        {\sqrt[4]{(\lambda\varepsilon)^{2/3}|\zeta|}}
    \left(
        P_\tau \exp\!\left[
                -\frac{2i\tau}{3}|\zeta|^{3/2}
            \right]
        + N_\tau \exp\!\left[
                \frac{2i\tau}{3}|\zeta|^{3/2}
            \right]
    \right),
\end{equation}
where $\zeta = -|\zeta|$ is used in the denominator. Using
\begin{equation} \label{eq:S5_ratio}
    \frac{(\lambda\varepsilon)^{1/6}}
        {\left(-(\lambda\varepsilon)^{2/3}\right)^{1/4}}
    = e^{-i\pi/4},
\end{equation}
we now equate coefficients of the exponentials for both turning points.

\noindent Left turning point ($\tau = -1$): The first (second) term of~\eqref{eq:S5_wkb_oscillatory} matches the first (second) term of~\eqref{eq:S5_airy_exp}, giving
\begin{equation*}
    P_- = -iC,
    \qquad
    N_- = C.
\end{equation*}
\noindent Right turning point ($\tau = +1$): The roles of $P$ and $N$ are exchanged, giving
\begin{equation*}
    N_+ = -iC,
    \qquad
    P_+ = C.
\end{equation*}
These yield~\eqref{eq:S5_connection_coeffs}.
\end{proof}

\subsection{Modified Bohr-Sommerfeld Quantization Condition}
\label{subsec:S5_bs}

\begin{theorem}[Modified Bohr-Sommerfeld Condition]
\label{thm:S5_bs}
Under the assumptions of Section~\ref{sec:S2_setup}, with $m'(x_\tau) > 0$ and two simple turning points $x_- < x_+$, global consistency of the WKB solution holds under the following condition on $\mathcal{E}$
\begin{equation} \label{eq:S5_bs}
    \frac{1}{\varepsilon}
    \int_{x_-}^{x_+}
        \sqrt{\mathcal{E} - m(x)^2}\,\diff x
    = \left(k + \frac{\sigma_\mathcal{D}+1}{2}\right)\pi
        + \mathcal{O}(\varepsilon),
    \quad k = 0, 1, 2, \ldots.
\end{equation}
In particular,
\begin{equation*}
    \frac{1}{\varepsilon}
    \int_{x_-}^{x_+}
        \sqrt{\mathcal{E} - m(x)^2}\,\diff x
    + \mathcal{O}(\varepsilon)
    = \begin{cases}
        k\pi & \sigma_\mathcal{D} = -1 \\
        (k+1)\pi & \sigma_\mathcal{D} = +1
    \end{cases}
\end{equation*}
\end{theorem}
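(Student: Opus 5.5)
The plan is the classical two-sided matching argument, carried out with the spin-dependent phases $P_\chi$, $A_\chi$ of Proposition~\ref{prop:S4_wkb} retained. In the allowed region $x_- < x < x_+$ we build the WKB solution twice. First we start from the decaying solution in the left forbidden region and connect through $x_-$. Second we start from the decaying solution in the right forbidden region and connect through $x_+$. Global consistency means these two oscillatory representations describe the same function up to a constant multiple, and that requirement will produce \eqref{eq:S5_bs}.

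\emph{Step 1: the two oscillatory solutions.} Lemma~\ref{lem:S5_connection} gives $(P_-,N_-) = (-iC, C)$ at the left turning point and $(P_+,N_+) = (C,-iC)$ at the right one. Substituting into \eqref{eq:S5_wkb_full_oscillatory} and using \eqref{eq:S4_wkb_final} away from the turning points, we write each solution in the allowed region. For the left one we take $x_0 = x_1 = x_-$; for the right one, $x_0 = x_1 = x_+$. Each sum of two exponentials is then rewritten as a single real sinusoid:
\begin{equation*}
    \psi_L(x) \propto \frac{\cos \Phi_L(x)}{\sqrt[4]{\mathcal{E}-m(x)^2}}, \qquad
    \psi_R(x) \propto \frac{\cos \Phi_R(x)}{\sqrt[4]{\mathcal{E}-m(x)^2}}.
\end{equation*}
The phases are
\begin{equation*}
    \Phi_L(x) = \frac{1}{\varepsilon}\int_{x_-}^{x}\sqrt{\mathcal{E}-m^2}\,\diff s - \frac{\pi}{4} + \frac{\sigma_\mathcal{D}}{2}\Big(\arcsin\tfrac{m(x)}{\sqrt{\mathcal{E}}} + \tfrac{\pi}{2}\Big),
\end{equation*}
and $\Phi_R(x)$, which is defined the same way with $\int_x^{x_+}$ and $\arcsin(m/\sqrt{\mathcal{E}}) - \pi/2$. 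The signs of the spin terms are to be fixed by tracking $\sigma_0 = \pm1$ through $A_{+,\tau}P_{+,\tau}$. The key point is that in $A_{+,\tau}P_{+,\tau}$ the spin phase enters multiplied by $\sigma_0$. It therefore behaves exactly like an $x$-dependent correction to the classical action, odd in the direction of propagation, and can be absorbed into the argument of the cosine rather than the amplitude.

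\emph{Step 2: the consistency condition.} The functions $\cos\Phi_L$ and $\cos\Phi_R$ are proportional on an open interval, with an $x$-independent ratio, iff $\Phi_L(x) + \Phi_R(x) \in \pi\mathbb{Z}$ for all $x$. We first verify that this sum is independent of $x$. The integrals add to $\varepsilon^{-1}\int_{x_-}^{x_+}\sqrt{\mathcal{E}-m^2}\,\diff x$. The $\arcsin(m(x)/\sqrt{\mathcal{E}})$ terms should cancel between the two sides, because they enter with opposite orientation. This cancellation is an important sanity check on the sign bookkeeping. What remains are constants: $-\pi/4-\pi/4$ from the Airy connection, plus $\pm\sigma_\mathcal{D}\pi/2$ collected from the two endpoint values $\arcsin(\mp1) = \mp\pi/2$. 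Setting the total equal to $(n+1)\pi$ and reindexing to $k\ge 0$ yields $\varepsilon^{-1}\int\sqrt{\mathcal{E}-m^2} = (k + \tfrac12 + \tfrac{\sigma_\mathcal{D}}{2})\pi$. This is the standard Maslov $\tfrac12$ shifted by $\sigma_\mathcal{D}/2$.

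\emph{Step 3: errors and the main obstacle.} The $\mathcal{O}(\varepsilon)$ remainder comes from two sources. One is the neglected $S_2$ term in the phase expansion, which contributes $\mathcal{O}(\varepsilon)$ to the phase on compact subsets of the allowed region. The other is the overlap-region errors in Lemmas~\ref{lem:S5_linearized_wkb}--\ref{lem:S5_wkb_overlap}. When these are evaluated at a matching point with $|x - x_\tau| \sim \varepsilon^{\beta}$, $2/3 < \beta < 1$ suitably chosen, they vanish as $\varepsilon\to0$; in the final statement we record them as $\mathcal{O}(\varepsilon)$.

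I expect the main obstacle to be the sign bookkeeping in Step 2. Three things must all be tracked consistently: the orientation of $\zeta$ (Table~\ref{tab:zeta_signs}), the branch $\sigma_0 = \tau$, and whether the endpoint spin phases add or cancel. An error there would flip $\sigma_\mathcal{D}$ in the answer. To pin the sign down, I will check the outcome against an independent fact: for $\sigma_\mathcal{D} = -1$, $k=0$, the condition must return $\mathcal{E}=0$, the exact zero mode of the upper component found in Proposition~\ref{prop:S3_spectrum}. I will further cross-check against the interlacing $\lambda_k^{(-)} = \lambda_{k+1}^{(+)}$ of that proposition. This interlacing forces the $\sigma_\mathcal{D}=+1$ branch to be shifted by exactly one full $\pi$ relative to $\sigma_\mathcal{D}=-1$, which is consistent with \eqref{eq:S5_bs}.
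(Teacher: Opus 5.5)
Your strategy is sound in outline, and it is the same two-sided matching the paper uses: connect from each forbidden region, write each result as a real sinusoid, and demand proportionality in the allowed region. The paper phrases this as computing $B_L/B_R$ for one global solution at each turning point and equating the two. The gap is that the one thing the theorem is about, the sign of the $\sigma_\mathcal{D}$-dependent endpoint phase, is never derived, and the formula you commit to has it wrong.

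By Lemma~\ref{lem:S4_phases}, $S_1'$ contains the term $-\sigma_\mathcal{D}\sigma_0 m'/(2\sqrt{\mathcal{E}-m^2})$. So with $x_0=x_1=x_-$, the branch $\sigma_0$ carries the phase $\sigma_0\bigl[\phi_L(x)-\tfrac{\sigma_\mathcal{D}}{2}\bigl(\rho(x)+\tfrac{\pi}{2}\bigr)\bigr]$, where $\rho=\arcsin(m/\sqrt{\mathcal{E}})$. This is exactly the relative sign in \eqref{eq:S5_global_wkb}. Physically, the spin term is the first-order correction to the local momentum $\sqrt{\mathcal{E}-m^2-\varepsilon\sigma_\mathcal{D}m'}$, so it must lower the action when $\sigma_\mathcal{D}m'>0$. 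The correct phases are therefore
$\Phi_L=\phi_L-\tfrac{\sigma_\mathcal{D}}{2}(\rho+\tfrac{\pi}{2})-\tfrac{\pi}{4}$ (not $+\tfrac{\sigma_\mathcal{D}}{2}(\cdots)$ as you wrote) and $\Phi_R=\varepsilon^{-1}\int_x^{x_+}\sqrt{\mathcal{E}-m^2}\,\diff s+\tfrac{\sigma_\mathcal{D}}{2}(\rho-\tfrac{\pi}{2})-\tfrac{\pi}{4}$. Then $\Phi_L+\Phi_R=\Phi-\tfrac{\pi}{2}-\tfrac{\sigma_\mathcal{D}\pi}{2}$, and requiring this to lie in $\pi\mathbb{Z}$ gives \eqref{eq:S5_bs}.

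Your safeguards would not catch the error. Read literally, your $\Phi_L$ and $\Phi_R$ have the same sign in front of $\tfrac{\sigma_\mathcal{D}}{2}$, so their sum is $\Phi-\tfrac{\pi}{2}+\sigma_\mathcal{D}\rho(x)$, which is not even $x$-independent. The $\arcsin$-cancellation check tells you one of the two must flip, but not which one. Flipping $\Phi_R$ instead of $\Phi_L$ gives $\Phi_L+\Phi_R=\Phi-\tfrac{\pi}{2}+\tfrac{\sigma_\mathcal{D}\pi}{2}$. That sum is equally $x$-independent, but it yields $\Phi=(k+\tfrac{1-\sigma_\mathcal{D}}{2})\pi$, the opposite shift, with no level at $\Phi=0$ in the $\sigma_\mathcal{D}=-1$ sector.

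Calibrating the sign against the zero mode or the $\tanh$ interlacing substitutes the expected answer for a derivation. The statement is about what the WKB construction itself forces. Moreover, $\mathcal{E}=0$ lies outside the theorem's hypotheses: there $\lambda=2m(x_\tau)m'(x_\tau)=0$ and the two turning points merge.

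There is also a smaller slip in Step 3. The matching point must lie in the overlap region $\varepsilon^{2/3}\ll|x-x_\tau|\ll 1$, i.e.\ $|x-x_\tau|\sim\varepsilon^{\beta}$ with $0<\beta<2/3$. Your range $2/3<\beta<1$ places it inside the Airy zone, where the outer WKB form is not valid.
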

\begin{proof}
Having established the connection coefficients~\eqref{eq:S5_connection_coeffs}, we impose global consistency of the WKB solution~\eqref{eq:S5_wkb_full_oscillatory} across the full classically allowed region $x_- < x < x_+$. The key insight is that the $|\zeta|^{1/2}$ corrections cancel identically when the global and local WKB representations are matched, so the full $\sigma_\mathcal{D}$-dependent phase is retained.

\subsubsection*{Global WKB Solution}

The general WKB solution valid throughout the classically allowed region is a superposition of the right-moving ($\sigma_0 = +1$) and left-moving ($\sigma_0 = -1$) branches of~\eqref{eq:S4_wkb_final}:
\begin{equation} \label{eq:S5_global_wkb}
    \Psi_+(x)
    = \frac{1}{(\mathcal{E} - m(x)^2)^{1/4}}
    \left[
        B_R\exp\!\left(
            i\phi_L(x)
            - \frac{i\sigma_\mathcal{D}}{2}\rho(x)
        \right)
        + B_L\exp\!\left(
            -i\phi_L(x)
            + \frac{i\sigma_\mathcal{D}}{2}\rho(x)
        \right)
    \right],
\end{equation}
where $B_R, B_L \in \mathbb{C}$ are the amplitudes for right and left moving waves respectively, $\sigma_0$ is absorbed into the definition of the two amplitudes, and
\begin{equation} \label{eq:S5_phi_rho_def}
    \phi_L(x)
   := \frac{1}{\varepsilon}
        \int_{x_{-}}^x
        \sqrt{\mathcal{E} - m(s)^2} \, \diff s,
    \qquad
    \rho(x)
    := \arcsin\!\left(
        \frac{m(x)}{\sqrt{\mathcal{E}}}
    \right).
\end{equation}
Note that $\phi_L(x_-) = 0$ and $\phi_L(x_+) = \Phi$, where
\begin{equation} \label{eq:S5_Phi}
    \Phi
    := \frac{1}{\varepsilon}
        \int_{x_-}^{x_+}
        \sqrt{\mathcal{E} - m(s)^2} \, \diff s
    \geq 0
\end{equation}
is the total accumulated phase across the well. This expression is valid in the classically allowed region, away from both turning points: for $x$ satisfying
\begin{equation*}
    x_- + \mathcal{O}(\varepsilon^{2/3})
    < x <
    x_+ - \mathcal{O}(\varepsilon^{2/3}).
\end{equation*}
Since both $\Psi_+(x)$ and $\Psi_{+,\tau}(\zeta)$ are valid in the overlap region, we linearize $\Psi_+(x)$ near $x_\tau$ and match it to $\Psi_{+,\tau}(\zeta)$.

\subsubsection*{Linearization in the Overlap Region}

Using Lemma~\ref{lem:S5_linearized_wkb}(i), near each $x_\tau$,
\begin{equation} \label{eq:S5_phi_rho_lin}
    \phi_L(x)
    = \phi_L(x_\tau)
        - \frac{2\tau}{3}|\zeta|^{3/2}
        + \mathcal{O}(\varepsilon^{5/3}\zeta^{5/2}),
    \qquad
    \rho(x)
    = \tau\!\left(
        \frac{\pi}{2}
        - \frac{(\varepsilon\lambda)^{1/3}}
            {\sqrt{\mathcal{E}}}|\zeta|^{1/2}
    \right)
    + \mathcal{O}(\varepsilon|\zeta|^{3/2}),
\end{equation}
where the linearization of $\phi_L$ near $x_-$ gives $\phi_L(x_- + \delta) = \frac{2}{3}|\zeta|^{3/2} + \mathcal{O}(\varepsilon^{5/3}\zeta^{5/2})$ (so $\phi_L(x_-) = 0$), and near $x_+$ gives $\phi_L(x_+ - \delta) = \Phi - \frac{2}{3}|\zeta|^{3/2} + \mathcal{O}(\varepsilon^{5/3}\zeta^{5/2})$ (so $\phi_L(x_+) = \Phi$), and the linearization of $\rho$ follows Lemma~\ref{lem:S5_linearized_wkb}(iii). Substituting into~\eqref{eq:S5_global_wkb}, the global solution near $x_\tau$ becomes
\begin{align} \label{eq:S5_global_near_tp}
    \Psi_+(x)
    &= \frac{1}{(\mathcal{E}-m(x)^2)^{1/4}}
    \Bigg[
        B_R\exp\!\left(
            i\phi_L(x_\tau)
            - \frac{2i\tau}{3}|\zeta|^{3/2}
            - \frac{i\pi\tau\sigma_\mathcal{D}}{4}
            + \frac{i\tau\sigma_\mathcal{D}
                (\varepsilon\lambda)^{1/3}}
                {2\sqrt{\mathcal{E}}}
            |\zeta|^{1/2}
        \right)
        \nonumber \\
        &\hspace{30pt}
        + B_L\exp\!\left(
            -i\phi_L(x_\tau)
            + \frac{2i\tau}{3}|\zeta|^{3/2}
            + \frac{i\pi\tau\sigma_\mathcal{D}}{4}
            - \frac{i\tau\sigma_\mathcal{D}
                (\varepsilon\lambda)^{1/3}}
                {2\sqrt{\mathcal{E}}}
            |\zeta|^{1/2}
        \right)
    \Bigg]
    \exp\!\left[
        \mathcal{O}\!\left(
            \varepsilon|\zeta|^{3/2}
            + \varepsilon^{5/3}\zeta^{5/2}
        \right)
    \right].
\end{align}
The connection formula from Lemma~\ref{lem:S5_connection} and Lemma~\ref{lem:S5_wkb_overlap} gives near $x_\tau$,
\begin{align} \label{eq:S5_connection_near_tp}
    \Psi_{+,\tau}(x)
    &=
    \bigg( \left(
        -(\lambda\varepsilon)^{2/3}\zeta
        \right)^{-1/4}
        + \mathcal{O}(\varepsilon^{1/2}\zeta^{3/4})
    \bigg)
    \Bigg[
        P_\tau\exp\!\left(
            \frac{2i\tau}{3}|\zeta|^{3/2}
            + \frac{i\tau\sigma_\mathcal{D}
                (\varepsilon\lambda)^{1/3}}
                    {2\sqrt{\mathcal{E}}}
                |\zeta|^{1/2}
        \right)
        \nonumber \\
        &\hspace{30pt}
        + N_\tau\exp\!\left(
            -\frac{2i\tau}{3}|\zeta|^{3/2}
            - \frac{i\tau\sigma_\mathcal{D}
                (\varepsilon\lambda)^{1/3}}
                    {2\sqrt{\mathcal{E}}}
                |\zeta|^{1/2}
        \right)
    \Bigg]
    \exp\!\left[
        \mathcal{O}\!\left(
            \varepsilon|\zeta|^{3/2}
            + \varepsilon^{5/3}\zeta^{5/2}
        \right)
    \right].
\end{align}

\subsubsection*{Matching}

The $|\zeta|^{1/2}$ corrections appear in both~\eqref{eq:S5_global_near_tp} and~\eqref{eq:S5_connection_near_tp} with identical coefficient $\frac{i \tau \sigma_\mathcal{D} (\varepsilon \lambda)^{1/3}} {2 \sqrt{\mathcal{E}}} |\zeta|^{1/2}$; since this factor appears in every exponential term on both sides, it factors out and cancels identically when coefficients of $\exp( \pm \frac{2i}{3} |\zeta|^{3/2})$ are equated in the matching step below. The amplitude error $\mathcal{O} (\varepsilon^{1/2} \zeta^{3/4})$ is a common multiplicative factor on both sides and cancels in the ratio $B_L/B_R$. Setting $|\zeta| \sim \varepsilon^\nu$ for $0 < \nu < 2/3$, the error terms satisfy
\begin{equation*}
    \mathcal{O}\!\left(
        \varepsilon|\zeta|^{3/2}
        + \varepsilon^{5/3}|\zeta|^{5/2}
    \right)
    = \mathcal{O}(\varepsilon^{1+3\nu/2}).
\end{equation*}
Equating coefficients of $\exp(\pm\frac{2i}{3}|\zeta|^{3/2})$ gives
\begin{equation} \label{eq:S5_matching_BR}
    B_R\exp\!\left[
        i\phi_L(x_\tau)
        - \frac{i\pi\tau\sigma_\mathcal{D}}{4}
    \right]
    = N_\tau\left(
        1 + \mathcal{O}(\varepsilon^{1+3\nu/2})
    \right),
\end{equation}
\begin{equation} \label{eq:S5_matching_BL}
    B_L\exp\!\left[
        -i\phi_L(x_\tau)
        + \frac{i\pi\tau\sigma_\mathcal{D}}{4}
    \right]
    = P_\tau\left(
        1 + \mathcal{O}(\varepsilon^{1+3\nu/2})
    \right).
\end{equation}

\subsubsection*{Consistency Condition and Quantization}

Dividing~\eqref{eq:S5_matching_BL} by~\eqref{eq:S5_matching_BR},
\begin{equation} \label{eq:S5_ratio_general}
    \frac{B_L}{B_R}
    = \frac{P_\tau}{N_\tau}
        \exp\!\left[
            2i\phi_L(x_\tau)
            - \frac{i\pi\tau\sigma_\mathcal{D}}{2}
        \right]
    \left(1 + \mathcal{O}(\varepsilon^{1+3\nu/2})\right).
\end{equation}
From~\eqref{eq:S5_connection_coeffs}, $P_\tau/N_\tau = i\tau$. Evaluating at each turning point:

\noindent Left turning point: ($\tau = -1$, $\phi_L(x_-) = 0$):
\begin{equation} \label{eq:S5_ratio_left}
    \frac{B_L}{B_R}
    = -i\exp\!\left[
        \frac{i\pi\sigma_\mathcal{D}}{2}
    \right]
    \left(1 + \mathcal{O}(\varepsilon^{1+3\nu/2})\right).
\end{equation}
\noindent Right turning point: ($\tau = +1$, $\phi_L(x_+) = \Phi$):
\begin{equation} \label{eq:S5_ratio_right}
    \frac{B_L}{B_R}
    = i\exp\!\left[
        2i\Phi - \frac{i\pi\sigma_\mathcal{D}}{2}
    \right]
    \left(1 + \mathcal{O}(\varepsilon^{1+3\nu/2})\right).
\end{equation}
Equating~\eqref{eq:S5_ratio_left} and~\eqref{eq:S5_ratio_right} and using $-1 = e^{i\pi(2k+1)}$,
\begin{equation*}
    \Phi = \left(
        k + \frac{\sigma_\mathcal{D}+1}{2}
    \right)\pi
    + \mathcal{O}(\varepsilon^{1+3\nu/2}).
\end{equation*}
Since $\Phi \geq 0$, we restrict $k \geq 0$. Since $0 < \nu < 2/3$, we have $1 + 3 \nu/2 \in (1, 2)$, so the error is $\mathcal{O}(\varepsilon^{1+3\nu/2})$ for any fixed $\nu \in (0, 2/3)$. The uniform $\mathcal{O}(\varepsilon)$ bound stated in~\eqref{eq:S5_bs} corresponds to taking $\nu \to 0^+$, which gives the weakest (most conservative) estimate; choosing $\nu$ closer to $2/3$ would yield $\mathcal{O}(\varepsilon^{2-})$ but requires $|\zeta|$ to be taken exponentially close to $1$, making the overlap region correspondingly thin. The
$\mathcal{O}(\varepsilon)$ bound is therefore uniform and sufficient for the leading-order quantization condition. The condition is asymptotically valid as $\varepsilon \to 0$, with relative errors of order $\mathcal{O}(\varepsilon)$. The confirmation against the P\"{o}schl--Teller benchmark in Section~\ref{sec:S6_confirmation} is therefore nontrivial: for that potential the condition is exact for all $k$. 
\end{proof}

\begin{corollary}[Zero-energy mode via $D_\varepsilon^2$] \label{cor:exact_zero_mode}
Under the assumptions of Section~\ref{sec:S2_setup}, suppose $m$ has a single
simple zero $x_0$ (i.e.\ $m(x_0)=0$, $m'(x_0)\neq 0$) and that
$m(x)\to c_{\pm}>0$ as $x\to\pm\infty$ with the same sign on both sides.
Then $D_\varepsilon^2$ has a zero eigenvalue, and consequently $D_\varepsilon$
has an eigenvalue at $E=0$.
\end{corollary}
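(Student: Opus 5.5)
The plan is to exhibit the zero mode explicitly rather than pass through the asymptotic quantization rule, since Theorem~\ref{thm:S5_bs} only holds up to $\mathcal{O}(\varepsilon)$ errors and cannot by itself prove that $0$ is an exact eigenvalue. I read the hypothesis as saying that $|m|$ tends to positive constants $c_\pm$ at $\pm\infty$, with $m$ changing sign across its unique simple zero $x_0$. In the normalization \eqref{eq:S2_mass_req} this means $m \to -c_-$ at $-\infty$ and $m \to +c_+$ at $+\infty$. The literal reading, in which $m$ has the same sign at both ends, is incompatible with a single simple zero. The first step is to fix this orientation; the case $m'(x_0)<0$ is handled by the symmetric argument at the end.

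The factorization in Proposition~\ref{prop:S2_squared} reduces the problem to a first-order equation. Writing $\Psi = (u,v)^T$, the form \eqref{eq:S2_dirac} gives
\begin{equation*}
D_\varepsilon \Psi = \bigl( (m - \varepsilon\partial_x) v,\; (m+\varepsilon\partial_x) u \bigr)^T .
\end{equation*}
So it suffices to find a nonzero $u \in L^2(\mathbb{R})$ with $(m+\varepsilon\partial_x)u = 0$ and then set $v=0$. Solving the ODE gives
\begin{equation*}
u(x) = \exp\Bigl[ -\frac{1}{\varepsilon}\int_{x_0}^{x} m(s)\,\diff s \Bigr].
\end{equation*}
This $u$ is $C^1$ because $m \in C^1$.

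The main step is to check that $u$ is square integrable. Since $m(s) \ge c_+/2$ for $s$ larger than some $R$, the integral $\int_{x_0}^x m$ grows at least like $\tfrac{c_+}{2}(x-R)$ as $x\to+\infty$. Likewise $m(s)\le -c_-/2$ for $s<-R$ makes $\int_{x_0}^x m = -\int_x^{x_0} m$ grow at least like $\tfrac{c_-}{2}|x|$ as $x\to-\infty$. Hence $u$ decays like $e^{-c|x|/(2\varepsilon)}$ at both ends. Since $m$ is bounded, $u' = -m u/\varepsilon$ is also in $L^2$. Therefore $u \in H^1$, so $(u,0)^T$ lies in the domain of $D_\varepsilon$ and is a genuine eigenfunction with $D_\varepsilon (u,0)^T = 0$.

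It follows that $D_\varepsilon^2 (u,0)^T = 0$. The eigenfunction sits in the upper component, $\sigma_\mathcal{D}=-1$, and is an eigenvector of $U$ only up to the phase convention in Proposition~\ref{prop:S2_symmetry}. This matches the $k=0$, $\sigma_\mathcal{D}=-1$ branch of \eqref{eq:S5_bs}: at $\mathcal{E}=0$ the turning points collapse to $x_0$, so $\Phi=0$, and the formula gives $\mathcal{E}=0$ as the paper claims.

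If instead $m'(x_0)<0$, so that $m$ goes from positive to negative, I would use the lower component. Solving $(m-\varepsilon\partial_x)v=0$ gives $v = \exp[\frac{1}{\varepsilon}\int_{x_0}^x m]$, which decays by the same argument, and $(0,v)^T$ is the zero mode with $\sigma_\mathcal{D}=+1$. I expect no serious obstacle. The only point needing care is the domain and decay check, which rests entirely on the sign change of $m$: with the same sign at both ends, neither $u$ nor $v$ would decay.
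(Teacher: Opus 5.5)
Your argument is correct and is essentially the paper's: the paper also takes $\psi_0=\exp\bigl(-\frac{1}{\varepsilon}\int_{x_0}^x m\bigr)$, checks that it is annihilated by $-\varepsilon^2\partial_x^2+m^2-\varepsilon m'$, obtains decay from $\int_{x_0}^x m\to+\infty$ at both ends, and then invokes $\ker D_\varepsilon^2=\ker D_\varepsilon$. As you observe, that decay requires $m$ to go from negative to positive, not the literal same-sign hypothesis. Your first-order version with the $H^1$ domain check is slightly cleaner. Your placement $(u,0)^T$ in the upper, $\sigma_\mathcal{D}=-1$, component is also the consistent one, whereas the paper's proof writes $(0,\psi_0)^T$ even though the operator it uses is the upper diagonal entry.

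The one slip is your aside about $U$. The operator $U=-i\sigma_y$ sends $(u,0)^T$ to $(0,u)^T$, so $(u,0)^T$ is not an eigenvector in any phase convention. In fact $U$ does not anticommute with $D_\varepsilon$, because $-i\varepsilon\partial_x\sigma_y$ commutes with $\sigma_y$; the operator that does is $\sigma_z$, and $(u,0)^T$ is its $+1$ eigenvector. This plays no role in the corollary.
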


\begin{proof}
\textbf{Bohr--Sommerfeld estimate of the lowest eigenvalue.}
Consider the Bohr--Sommerfeld quantization condition
\eqref{eq:S5_bs} in the sector $\sigma_{\mathcal{D}} = -1$. For $\mathcal{E} > 0$, the equation $m(x)^2 = \mathcal{E}$ has two turning points $x_- < x_+$ (which bound $x_0$), and the lowest eigenvalue corresponds to $k = 0$:
\begin{equation} \label{eq:bs_k0}
    \int_{x_-}^{x_+}
        \sqrt{\mathcal{E} - m(x)^2} \, \diff x
    = \mathcal{O}(\varepsilon).
\end{equation}
Because $m$ has a \emph{single} simple zero at $x_0$, the Taylor expansion gives
\begin{equation*}
    m(x)
    = m'(x_0) (x-x_0)
        + \mathcal{O} \! \left((x-x_0)^2\right),
\end{equation*}
and squaring,
\begin{equation*}
    m(x)^2
    = m'(x_0)^2 (x-x_0)^2
        + \mathcal{O} \! \left((x-x_0)^3\right).
\end{equation*}
For small $\mathcal{E}$ the turning points are close to $x_0$, so setting $\mu := |m'(x_0)| > 0$ and $u = x-x_0$ the leading-order turning points satisfy $\mu^2 u_{\pm}^2 = \mathcal{E}$, i.e. \ $u_\pm = \pm\sqrt{\mathcal{E}} / \mu$. Substituting
$u = (\sqrt{\mathcal{E}} / \mu) \sin\theta$ in the leading-order integral:
\begin{align*}
    \int_{-\sqrt{\mathcal{E}}/\mu}
        ^{\sqrt{\mathcal{E}}/\mu}
        \sqrt{\mathcal{E} - \mu^2 u^2} \, \diff u
    &= \frac{\mathcal{E}}{\mu}
        \int_{-\pi/2}^{\pi/2}
        \cos^2\theta \, \diff \theta
    = \frac{\pi \, \mathcal{E}}{2\mu}.
\end{align*}
The error incurred by replacing $m(x)^2$ by its leading-order term is $\mathcal{O}(\mathcal{E}^{3/2})$ uniformly in a neighborhood of $x_0$, so the full integral satisfies
\begin{equation*}
    \int_{x_-}^{x_+}
        \sqrt{\mathcal{E} - m(x)^2} \, \diff x
    = \frac{\pi \, \mathcal{E}}{2|m'(x_0)|}
      + \mathcal{O} \! \left(\mathcal{E}^{3/2}\right).
\end{equation*}
Inserting this into \eqref{eq:bs_k0} and solving for the lowest eigenvalue $\mathcal{E}_0$ yields
\begin{equation*}
    \mathcal{E}_0 = \mathcal{O}(\varepsilon),
\end{equation*}
so the Bohr--Sommerfeld approximation predicts a near-zero mode.

\smallskip
\noindent\textbf{Exact zero mode via supersymmetric
factorisation.}
The Bohr--Sommerfeld estimate shows $\mathcal{E}_0 \to 0$ but does not by itself establish an \emph{exact} zero eigenvalue. We now supply that argument directly. Write $D_\varepsilon^2$ in its block-diagonal form:
\begin{equation*}
    D_\varepsilon^2 = 
    \begin{pmatrix}
        H_+ & 0 \\
        0 & H_-
    \end{pmatrix},
    \qquad
    H_{\pm}
    := -\varepsilon^2\partial_x^2
        + m(x)^2 \mp \varepsilon \, m'(x),
\end{equation*}
as follows from the anticommutation
relations (see Section~\ref{sec:S2_setup}). Define
\begin{equation*}
    M(x)
    := \int_{x_0}^{x} m(t) \, \diff t,
    \qquad
    \psi_0(x)
   := \exp \! \left(
        -\frac{M(x)}{\varepsilon}
    \right).
\end{equation*}
A direct computation gives
\begin{equation*}
    H_- \, \psi_0
    = \left(
        - \varepsilon^2\partial_x^2 + m^2
        - \varepsilon m'
    \right) \psi_0
    = \bigl(
        \varepsilon m'(x) - m(x)^2 + m(x)^2
        - \varepsilon m'(x)
    \bigr) \psi_0 = 0.
  \end{equation*}
It remains to verify that $\psi_0 \in L^2(\mathbb{R})$.  Since $m$ has a single simple zero at $x_0$ and $m(x)\to c_{\pm}>0$ as $x \to \pm \infty$ (same sign on both sides), the antiderivative $M(x)$ satisfies $M(x) \to +\infty$ as $x \to \pm \infty$. Hence $\psi_0(x) \to 0$ exponentially in both directions, so $\psi_0 \in L^2(\mathbb{R}) \setminus \{0\}$.

The function $(0, \psi_0)^{T}$ is a nonzero element of $\ker D_\varepsilon^2$, so $0$ is an eigenvalue of $D_\varepsilon^2$. Since $D_\varepsilon$ is self-adjoint, $\ker D_\varepsilon^2 = \ker D_\varepsilon$, so $D_\varepsilon(0, \psi_0)^{T} = 0$ and $E=0$ is an eigenvalue of $D_\varepsilon$.
\end{proof}

\section{Confirmation Against the Benchmark Case}
\label{sec:S6_confirmation}

\begin{proposition}[Exact Recovery for $m(x) = \tanh x$]
\label{prop:S6_exact}
For $m(x) = \tanh x$, the modified Bohr-Sommerfeld condition~\eqref{eq:S5_bs} recovers the exact P\"{o}schl--Teller eigenvalues of Proposition~\ref{prop:S3_spectrum} for all $k$, with no asymptotic error.
\end{proposition}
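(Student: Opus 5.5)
The plan is to evaluate the phase integral $\Phi$ of \eqref{eq:S5_Phi} in closed form for $m(x)=\tanh x$. We then show that the condition \eqref{eq:S5_bs}, with the $\mathcal{O}(\varepsilon)$ remainder dropped, can be solved exactly for $\mathcal{E}$ and gives the formulas of Proposition~\ref{prop:S3_spectrum}. For $0<\mathcal{E}<1$ the turning points are $x_\pm=\pm\operatorname{arctanh}\sqrt{\mathcal{E}}$. Set $a:=\sqrt{\mathcal{E}}\in(0,1)$.

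\textbf{Step 1: change of variables.} Substitute $t=\tanh x$, so that $\diff x=\diff t/(1-t^2)$. The integral becomes
\begin{equation*}
    \int_{x_-}^{x_+}\sqrt{\mathcal{E}-\tanh^2 x}\,\diff x
    =\int_{-a}^{a}\frac{\sqrt{a^2-t^2}}{1-t^2}\,\diff t .
\end{equation*}

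\textbf{Step 2: trigonometric substitution.} Put $t=a\sin\theta$ with $\theta\in(-\pi/2,\pi/2)$. Writing $a^2\cos^2\theta=(1-a^2\sin^2\theta)-(1-a^2)$, the integrand splits as
\begin{equation*}
    \int_{-\pi/2}^{\pi/2}\frac{a^2\cos^2\theta}{1-a^2\sin^2\theta}\,\diff\theta
    =\pi-(1-a^2)\int_{-\pi/2}^{\pi/2}\frac{\diff\theta}{1-a^2\sin^2\theta}.
\end{equation*}
The remaining integral equals $\pi/\sqrt{1-a^2}$. This can be seen with the substitution $u=\tan\theta$, which reduces it to $\int_{\mathbb{R}}\diff u/(1+(1-a^2)u^2)$.

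Combining the two steps gives the exact identity
\begin{equation*}
    \int_{x_-}^{x_+}\sqrt{\mathcal{E}-\tanh^2 x}\,\diff x=\pi\bigl(1-\sqrt{1-\mathcal{E}}\bigr).
\end{equation*}
I expect this integral evaluation to be the main obstacle, since it is the only genuinely non-routine computation in the argument.

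\textbf{Step 3: solving the quantization condition.} Insert the identity into \eqref{eq:S5_bs} without the remainder. With $n:=k+(\sigma_\mathcal{D}+1)/2$, the condition reads $\varepsilon^{-1}(1-\sqrt{1-\mathcal{E}})=n$, that is $\sqrt{1-\mathcal{E}}=1-\varepsilon n$. This forces $1-\varepsilon n\ge 0$, i.e.\ $n\le N=\lfloor\varepsilon^{-1}\rfloor$, and then
\begin{equation*}
    \mathcal{E}=1-(1-\varepsilon n)^2 .
\end{equation*}
For $\sigma_\mathcal{D}=-1$ (upper component) we have $n=k$ with $k=0,\dots,N$, which is exactly $\lambda_k^{(+)}$ in \eqref{eq:S3_lambda_plus}. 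This includes $\mathcal{E}=0$ at $k=0$, where the integral degenerates and vanishes consistently. For $\sigma_\mathcal{D}=+1$ (lower component) we have $n=k+1$ with $k=0,\dots,N-1$, which is exactly $\lambda_k^{(-)}$ in \eqref{eq:S3_lambda_minus}.

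\textbf{Step 4: conclusion.} The two families match Proposition~\ref{prop:S3_spectrum} both in their values and in their index ranges, including the interlacing \eqref{eq:S3_interlacing}. Since no approximation entered the evaluation of $\Phi$, the rule with the remainder set to zero reproduces the Pöschl--Teller eigenvalues exactly, for all $k$.
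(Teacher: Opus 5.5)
Your proposal is correct and follows the paper's proof essentially step for step: the same substitutions $t=\tanh x$ and $t=\sqrt{\mathcal{E}}\sin\theta$, the same closed form $\Phi(\mathcal{E})=\frac{\pi}{\varepsilon}\bigl(1-\sqrt{1-\mathcal{E}}\bigr)$, and the same inversion of the quantization condition to $\mathcal{E}=1-(1-\varepsilon n)^2$. The differences are minor. You evaluate the $\theta$-integral by hand (splitting off $1-\mathcal{E}\sin^2\theta$ and substituting $u=\tan\theta$) where the paper cites Gradshteyn--Ryzhik 3.621.3, and you explicitly check that the constraint $1-\varepsilon n\ge 0$ reproduces the index ranges $k\le N$ and $k\le N-1$, which the paper leaves implicit.
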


\begin{proof}
With $m(x) = \tanh(x)$ and eigenvalue $\mathcal{E}$, the turning points $x_- < 0 < x_+$ satisfying~\eqref{eq:S5_turning_point} are
\begin{equation*}
    x_- = -\operatorname{arctanh}\!\left(\sqrt{\mathcal{E}}\right),
    \qquad
    x_+ = +\operatorname{arctanh}\!\left(\sqrt{\mathcal{E}}\right),
\end{equation*}
since $\tanh^2(x_\tau) = \mathcal{E}$ and $\operatorname{arctanh}$ is odd. The quantization integral~\eqref{eq:S5_Phi} is therefore
\begin{equation*}
    \Phi(\mathcal{E})
    = \frac{1}{\varepsilon}
        \int_{-\operatorname{arctanh}(\sqrt{\mathcal{E}})}
            ^{+\operatorname{arctanh}(\sqrt{\mathcal{E}})}
            \sqrt{\mathcal{E} - \tanh^2(x)}\,\diff x.
\end{equation*}
The substitution $u = \tanh x$, $\diff u = (1-u^2)\diff x$, with limits $u = \pm\sqrt{\mathcal{E}}$, gives
\begin{equation*}
    \Phi(\mathcal{E})
    = \frac{1}{\varepsilon}
    \int_{-\sqrt{\mathcal{E}}}^{\sqrt{\mathcal{E}}}
        \frac{\sqrt{\mathcal{E} - u^2}}{1 - u^2}\,\diff u.
\end{equation*}
The further substitution $u = \sqrt{\mathcal{E}}\sin\theta$, $\diff u = \sqrt{\mathcal{E}}\cos\theta \, \diff\theta$, with limits $\theta = \pm\pi/2$, gives
\begin{equation*}
    \Phi(\mathcal{E})
    = \frac{\mathcal{E}}{\varepsilon}
        \int_{-\pi/2}^{\pi/2}
        \frac{\cos^2\theta}
            {1 - \mathcal{E}\sin^2\theta}\,\diff\theta
    = \frac{2\mathcal{E}}{\varepsilon}
        \int_0^{\pi/2}
        \frac{\cos^2\theta}
            {1 - \mathcal{E}\sin^2\theta}\,\diff\theta,
\end{equation*}
where symmetry of the integrand is used. Applying the closed form~\cite[3.621.3]{GradshteynRyzhik2007}
\begin{equation*}
    \int_0^{\pi/2}
        \frac{\cos^2\theta}{1 - k\sin^2\theta}\,\diff\theta
    = \frac{\pi}{2k}\left(1 - \sqrt{1-k}\right),
    \quad 0 < k < 1,
\end{equation*}
with $k = \mathcal{E} \in (0,1)$,
\begin{equation*}
    \Phi(\mathcal{E})
    = \frac{\pi}{\varepsilon}
        \left(1 - \sqrt{1 - \mathcal{E}}\right).
\end{equation*}
Substituting into the modified Bohr-Sommerfeld condition~\eqref{eq:S5_bs} and solving for $\mathcal{E}$,
\begin{equation*}
    \mathcal{E}_n^{(\sigma_\mathcal{D})}
    = 1 - \left[
        1 - \varepsilon\!\left(
            n + \frac{\sigma_\mathcal{D}+1}{2}
        \right)
    \right]^2.
\end{equation*}
For the upper component $\sigma_\mathcal{D} = -1$,
\begin{equation*}
    \mathcal{E}_n^{(-1)}
    = 1 - (1 - \varepsilon n)^2
    = \lambda_n^{(+)},
\end{equation*}
and for the lower component $\sigma_\mathcal{D} = +1$,
\begin{equation*}
    \mathcal{E}_n^{(+1)}
    = 1 - \big(1 - \varepsilon(n+1)\big)^2
    = \lambda_n^{(-)}.
\end{equation*}
These match the exact P\"{o}schl--Teller eigenvalues~\eqref{eq:S3_lambda_plus} and~\eqref{eq:S3_lambda_minus} precisely. The agreement is exact for all $n$, not merely asymptotic in $\varepsilon$, confirming both the correctness of the modified quantization condition~\eqref{eq:S5_bs} and the one-level offset between components encoded in the $(\sigma_\mathcal{D}+1)/2$ term. In particular, setting $n = 0$ and $\sigma_\mathcal{D} = -1$ gives $\mathcal{E} = 0$, recovering the topologically protected zero mode exclusively from the upper component, consistent with $\lambda_0^{(+)} = 0$ and the interlacing relation~\eqref{eq:S3_interlacing}.
\end{proof}

\section{Spectral Consequences and Weyl Law}
\label{sec:S7_spectral}

Formally, the modified Bohr-Sommerfeld condition~\eqref{eq:S5_bs} determines the eigenvalues $\mathcal{E}_n$ of $D_\varepsilon^2$. Recalling from Section~\ref{sec:S2_setup} that non-zero eigenvalues of $D_\varepsilon^2$ are two-fold degenerate with corresponding eigenvalues $\pm\sqrt{\mathcal{E}_n}$ of $D_\varepsilon$, the quantization condition~\eqref{eq:S5_bs} simultaneously determines the positive and negative branches of the spectrum of $D_\varepsilon$.

\subsection{Weyl-Type Counting Law}
\label{subsec:S7_weyl}

\begin{corollary}[Weyl-Type Counting Law]
\label{cor:S7_weyl}
In the semiclassical limit $\varepsilon \to 0$, Theorem~\ref{thm:S5_bs} suggests the following asymptotics for the eigenvalue counting function $N(\mathcal{E})$ for $D_{\varepsilon}^2$
\begin{equation} \label{eq:S7_weyl_D2}
    N(\mathcal{E})
    \sim \frac{1}{\pi\varepsilon}
    \int_{\{x\,:\,m(x)^2 < \mathcal{E}\}}
        \sqrt{\mathcal{E} - m(x)^2}\,\diff x,
\end{equation}
and, substituting $\mathcal{E} = \mu^2$, the following for the positive eigenvalue counting function for $D_\varepsilon$
\begin{equation} \label{eq:S7_weyl_D}
    N(\mu)
    \sim \frac{1}{\pi\varepsilon}
    \int_{\{x\,:\,m(x)^2 < \mu^2\}}
        \sqrt{\mu^2 - m(x)^2}\,\diff x.
\end{equation}
By the spectral symmetry established in Section~\ref{sec:S2_setup}, the total number of eigenvalues with $|\mu| \leq L$ would then be approximately $2N(L)$, not counting any zero mode.
\end{corollary}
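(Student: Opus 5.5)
The plan is to read the Weyl law off Theorem~\ref{thm:S5_bs} by inverting the quantization condition and counting admissible integers $k$. Fix $\mathcal{E}$ in the gap $(0,1)$ and define the phase function
\begin{equation*}
    \Phi(\mathcal{E}') := \frac{1}{\varepsilon}\int_{\{x\,:\,m(x)^2<\mathcal{E}'\}} \sqrt{\mathcal{E}'-m(x)^2}\,\diff x .
\end{equation*}
Under the monotonicity and nondegeneracy hypotheses of Section~\ref{sec:S2_setup}, the set $\{m^2<\mathcal{E}'\}$ is the interval $(x_-,x_+)$. First I would check that $\Phi$ is continuous and strictly increasing in $\mathcal{E}'$ on $(0,1)$. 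By the dominated or monotone convergence theorem, $\varepsilon\,\Phi'(\mathcal{E}') = \tfrac12\int_{x_-}^{x_+}(\mathcal{E}'-m^2)^{-1/2}\,\diff x > 0$; the boundary terms vanish because the integrand is zero at the turning points. Also $\Phi(0^+)=0$. Hence, for each sector $\sigma_\mathcal{D}$ and each $k\ge 0$, the condition \eqref{eq:S5_bs} has at most one solution $\mathcal{E}_k^{(\sigma_\mathcal{D})}$, and these solutions are ordered in $k$.

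Next I count. For $\sigma_\mathcal{D}=-1$, the eigenvalues below $\mathcal{E}$ are those with $k\pi + \mathcal{O}(\varepsilon)\le \Phi(\mathcal{E})$, so their number is $\Phi(\mathcal{E})/\pi + \mathcal{O}(1)$. For $\sigma_\mathcal{D}=+1$ the count is $\Phi(\mathcal{E})/\pi - 1 + \mathcal{O}(1)$. The $\mathcal{O}(\varepsilon)$ error in \eqref{eq:S5_bs} shifts each threshold only slightly, so it changes the count by at most $\mathcal{O}(1)$.

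The multiplicity convention needs care. Proposition~\ref{prop:S2_symmetry} shows that nonzero eigenvalues of $D_\varepsilon^2$ come in pairs, one from each diagonal block. This is consistent with the interlacing seen in Proposition~\ref{prop:S3_spectrum}: $\mathcal{E}_k^{(+1)}\approx\mathcal{E}_{k+1}^{(-1)}$. I therefore count distinct eigenvalues of $D_\varepsilon^2$, which is the $\sigma_\mathcal{D}=-1$ count, giving $N(\mathcal{E}) = \Phi(\mathcal{E})/\pi + \mathcal{O}(1)$. Since $\Phi(\mathcal{E})\asymp \varepsilon^{-1}$ for fixed $\mathcal{E}>0$, this yields \eqref{eq:S7_weyl_D2} in the sense $N(\mathcal{E})/\big(\Phi(\mathcal{E})/\pi\big)\to 1$.

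For \eqref{eq:S7_weyl_D}, I substitute $\mathcal{E}=\mu^2$ with $\mu\in(0,1)$. The map $\mu\mapsto\mu^2$ is a monotone bijection, and Proposition~\ref{prop:S2_symmetry} assigns to each nonzero eigenvalue $\mathcal{E}_n$ of $D_\varepsilon^2$ exactly one positive eigenvalue $\sqrt{\mathcal{E}_n}$ of $D_\varepsilon$. So the positive counting function for $D_\varepsilon$ equals $N(\mu^2)$, up to the zero mode, which contributes $\mathcal{O}(1)$. The $\pm$ symmetry $E\mapsto -E$ then gives the total $2N(L)$ excluding the zero mode.

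The main obstacle is that Theorem~\ref{thm:S5_bs} is only a formal consistency condition for WKB quasimodes. It does not prove that every true eigenvalue lies near some $\mathcal{E}_k$, or that the correspondence is one-to-one. That is why the statement says the result is \emph{suggested}, and I would present the argument at that level. If I wanted rigor, I would first try a Dirichlet--Neumann bracketing or min--max comparison against the classical Weyl law for $-\varepsilon^2\partial_x^2+m^2$, treating $\pm\varepsilon m'$ as a bounded $\mathcal{O}(\varepsilon)$ perturbation. That perturbation shifts eigenvalues by $\mathcal{O}(\varepsilon)$, which alters counts only by $\mathcal{O}(1)$ relative to $\varepsilon^{-1}$ for $\mathcal{E}$ away from $0$ and $1$. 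A uniform treatment near the threshold $\mathcal{E}\to1$, where the turning points run to infinity, I would exclude.
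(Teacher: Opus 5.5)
Your proposal is correct and follows the same route as the paper: invert the Bohr--Sommerfeld condition of Theorem~\ref{thm:S5_bs} to get $n(\mathcal{E}) = \Phi(\mathcal{E})/\pi - (\sigma_\mathcal{D}+1)/2 + \mathcal{O}(\varepsilon)$, identify this with the eigenvalue count up to an $\mathcal{O}(1)$ shift that is negligible against $\varepsilon^{-1}$, then substitute $\mathcal{E}=\mu^2$ and use the $\pm E$ symmetry. The paper's argument is two lines, and your additions fill in what it leaves implicit: strict monotonicity of $\Phi$, counting distinct eigenvalues of $D_\varepsilon^2$ through the $\sigma_\mathcal{D}=-1$ sector, reading $\sim$ as a ratio tending to $1$, and the min--max remark on how one would make the leading-order statement rigorous.
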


\begin{proof}
Solving Theorem~\ref{thm:S5_bs} for $n$ gives
\begin{equation} \label{eq:S7_n_of_E}
    n(\mathcal{E})
    = \frac{1}{\pi\varepsilon}
    \int_{x_-}^{x_+}
        \sqrt{\mathcal{E} - m(x)^2}
    \, \diff x
    - \frac{\sigma_\mathcal{D}+1}{2}
    + \mathcal{O}(\varepsilon).
\end{equation}
In the limit $\varepsilon \to 0$, the subleading constant and error are negligible, and $n(\mathcal{E})$ approximates the number of eigenvalues below $\mathcal{E}$, giving~\eqref{eq:S7_weyl_D2}. The result~\eqref{eq:S7_weyl_D} follows by substituting $\mathcal{E} = \mu^2$.
\end{proof}

\subsection{Semiclassical Interpretation}
\label{subsec:S7_interpretation}

The integrand $p(x) = \sqrt{\mathcal{E} - m(x)^2}$ plays the role of a local momentum, so the counting function is proportional to the classical phase space volume
\begin{equation*}
    \int_{\{m(x)^2 < \mathcal{E}\}} p(x) \, \diff x.
\end{equation*}
This is the natural analogue of the Weyl law for the Schr\"{o}dinger operator~\cite{DimassiSjostrand1999, Zworski2012}, with $m(x)^2$ replacing the potential $V(x)$. The $\sigma_\mathcal{D}$-dependent shift in~\eqref{eq:S7_n_of_E} is subleading in $\varepsilon$ but encodes a spectral asymmetry between the two components of $D_\varepsilon^2$: the
$\sigma_\mathcal{D} = -1$ condition admits $n = 0$ with $\mathcal{E} = 0$, while the $\sigma_\mathcal{D} = +1$ condition does not. This one-level offset is the semiclassical manifestation of the topologically protected zero mode established in Section~\ref{sec:S2_setup}.

\section{Numerical Results}
\label{sec:numerical-results}

The preceding sections establish the modified Bohr--Sommerfeld condition~\eqref{eq:S5_bs} analytically and confirm it exactly against the P\"oschl--Teller benchmark. We now verify the condition numerically across three mass profiles: the hyperbolic tangent $m(x) = \tanh (x)$, the error function $m(x) = \operatorname{erf} (x)$, and the algebraic sigmoid $m(x) = x / \sqrt{1 + x^2 / 3}$. The first admits an exact closed-form spectrum, providing a precise benchmark; the latter two do not: the quantization integral for $\operatorname{erf}$ has no closed form, and for the algebraic sigmoid it reduces to an elliptic integral, so numerical verification is essential.

Eigenvalues are computed by discretizing the squared operator $D_\varepsilon^2$ on a uniform grid of $N$ points over $[-12, \ 12 ]$, approximating the second derivative via a five-point finite difference stencil with coefficients $5/2$, $-4/3$, $1/12$ (divided by $h^2$). The grid size adapts to the semiclassical parameter: a minimum of $N=1200$ points is used, increasing to $N = \lfloor 24 / (\varepsilon / 5) \rfloor + 1$ for larger $\varepsilon$ to ensure sufficient resolution, so that the condition $h \ll \varepsilon$ is satisfied in all cases. The resulting dense matrix eigenvalue problem is solved with \texttt{scipy.linalg.eigh}, restricting to the lowest $\lfloor \varepsilon^{-1} \rfloor + 2$ eigenvalues to target only the bound states, and discarding any eigenvalues above the threshold $\mathcal{E} = 0.999$ to exclude continuum states. Boundary effects are negligible since all mass profiles satisfy $|m(\pm 12)| \approx 1$, placing the domain endpoints well into the classically forbidden region for all eigenvalues of interest.

For each profile we present three diagnostics. The eigenfunction plots compare numerical eigenfunctions against the WKB approximation~\eqref{eq:S4_wkb_final} for representative states, illustrating both the oscillatory structure in the classically allowed region and the exponential decay outside it. The spectrum plots overlay numerical eigenvalues with the WKB predictions from~\eqref{eq:S5_bs} across several values of $\varepsilon$ showing the $\mathcal{O} (\varepsilon)$ agreement and the expected threshold near $\mathcal{E} = 1$. The phase check plots verify the quantization condition directly: for each numerically computed eigenvalue $\mathcal{E}_n$ the phase $\Phi(\mathcal{E}_n) / \pi \varepsilon$ is computed by numerical quadrature and compared against the predicted integer $n + (\sigma_\mathcal{D} + 1)/2$ with residuals confirming agreement well within the $\mathcal{O} (\varepsilon)$ error bound.

\subsection{P\"oschl--Teller: Hyperbolic Tangent Mass Profile}

\begin{figure}[H]
    \centering
    \begin{subfigure}{\linewidth}
        \centering
        \includegraphics[width=\linewidth]{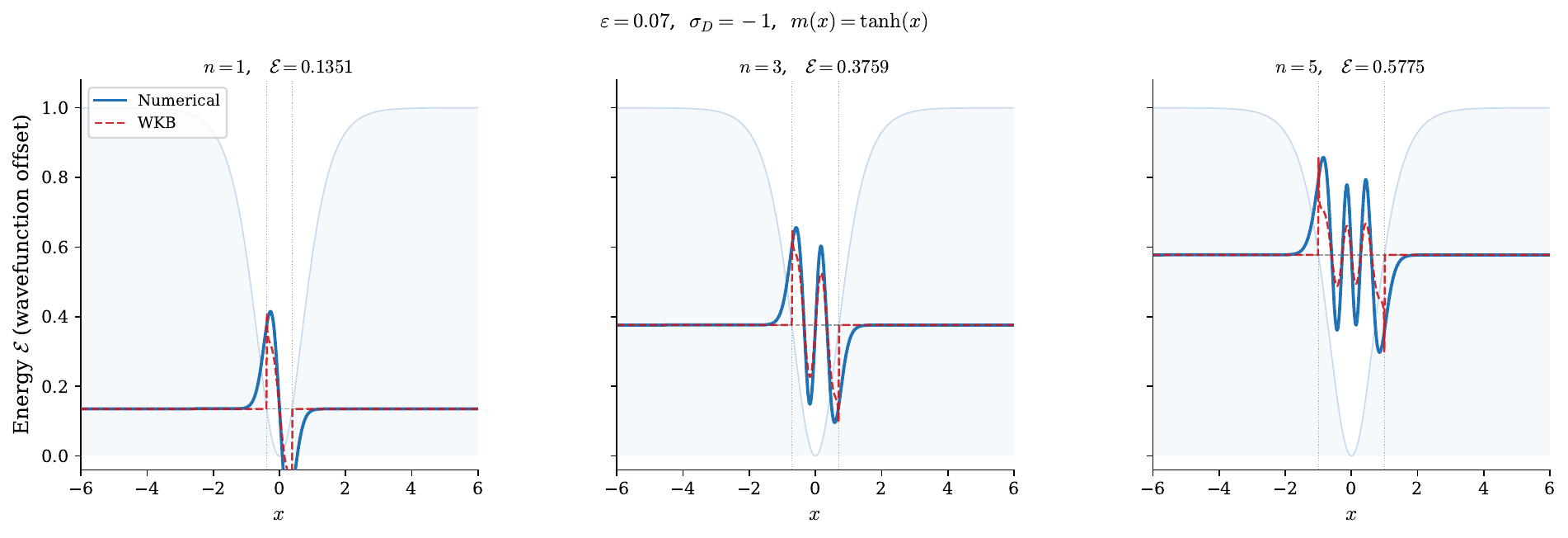}
        \caption{Eigenfunctions (numerical vs.\ WKB) for selected states.}
        \label{fig:tanh_eigenfunctions}
    \end{subfigure}

    \vspace{1em}

    \begin{subfigure}{\linewidth}
        \centering
        \includegraphics[width=\linewidth]{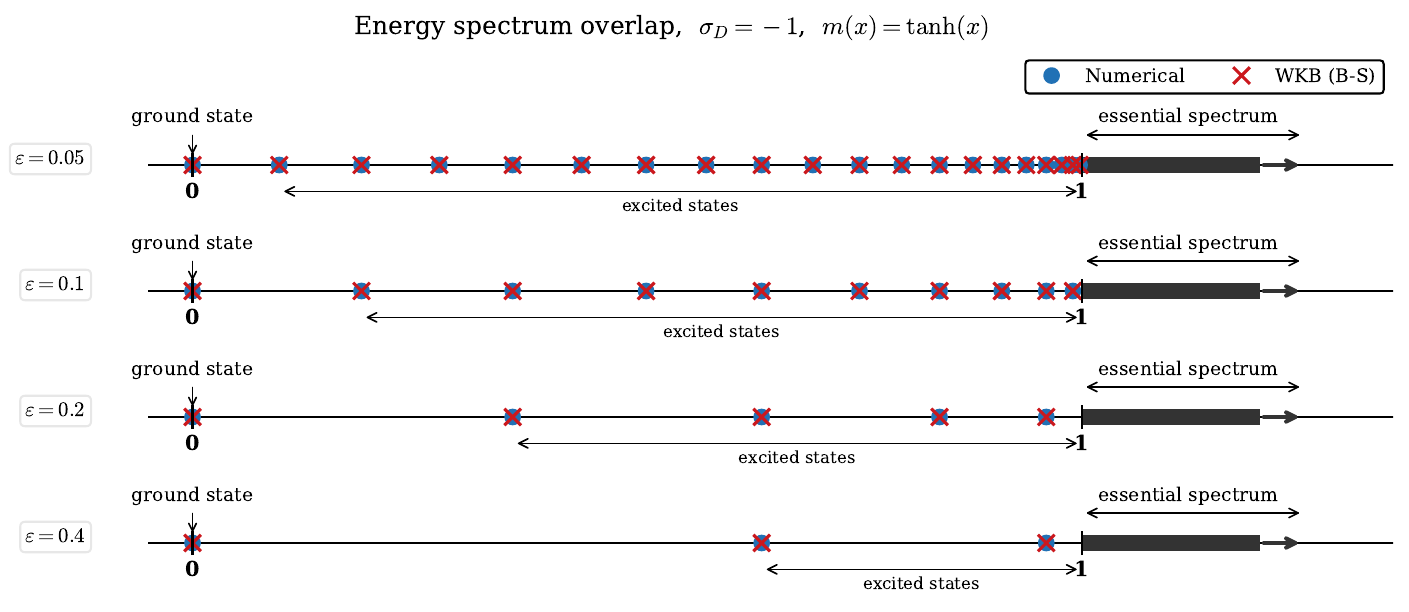}
        \caption{Energy spectrum overlap between numerical and WKB (Bohr--Sommerfeld) eigenvalues across several values of $\varepsilon$.}
        \label{fig:tanh_eigenfunctions}
    \end{subfigure}

    \caption{Results for $m(x) = \tanh(x)$, $\sigma_D = -1$. (a)~Eigenfunctions at representative energy levels; (b)~energy spectrum showing numerical and WKB eigenvalues.}
    \label{fig:tanh_main}
\end{figure}

Figure~\ref{fig:tanh_main} shows exact agreement between numerical and WKB eigenvalues for all $n$, consistent with Proposition~\ref{prop:S6_exact}. The phase check (Figure~\ref{fig:tanh_phase}) confirms this: residuals are of order $10^{-5}$, far below the $\mathcal{O}(\varepsilon)$ bound. 

\begin{figure}[H]
    \centering
    \includegraphics[width=\linewidth]{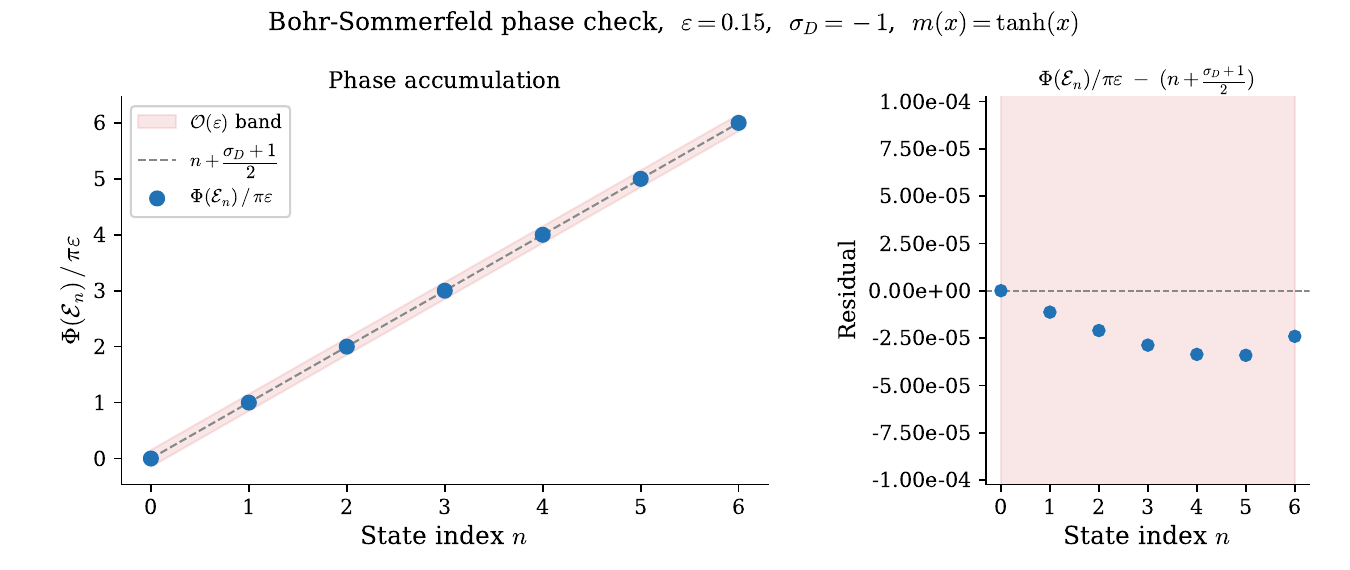}
    \caption{Bohr--Sommerfeld phase check for $m(x) = \tanh(x)$, $\varepsilon = 0.15$, $\sigma_D = -1$. Left: phase accumulation $\Phi (\mathcal{E}_n) / \pi \varepsilon$ vs.\ the predicted $n + (\sigma_D + 1) / 2$. Right: residuals, confirming agreement to $\mathcal{O}(\varepsilon)$.}
    \label{fig:tanh_phase}
\end{figure}

\subsection{Error Function Mass Profile}

\begin{figure}[H]
    \centering
    \begin{subfigure}{\linewidth}
        \centering
        \includegraphics[width=\linewidth]{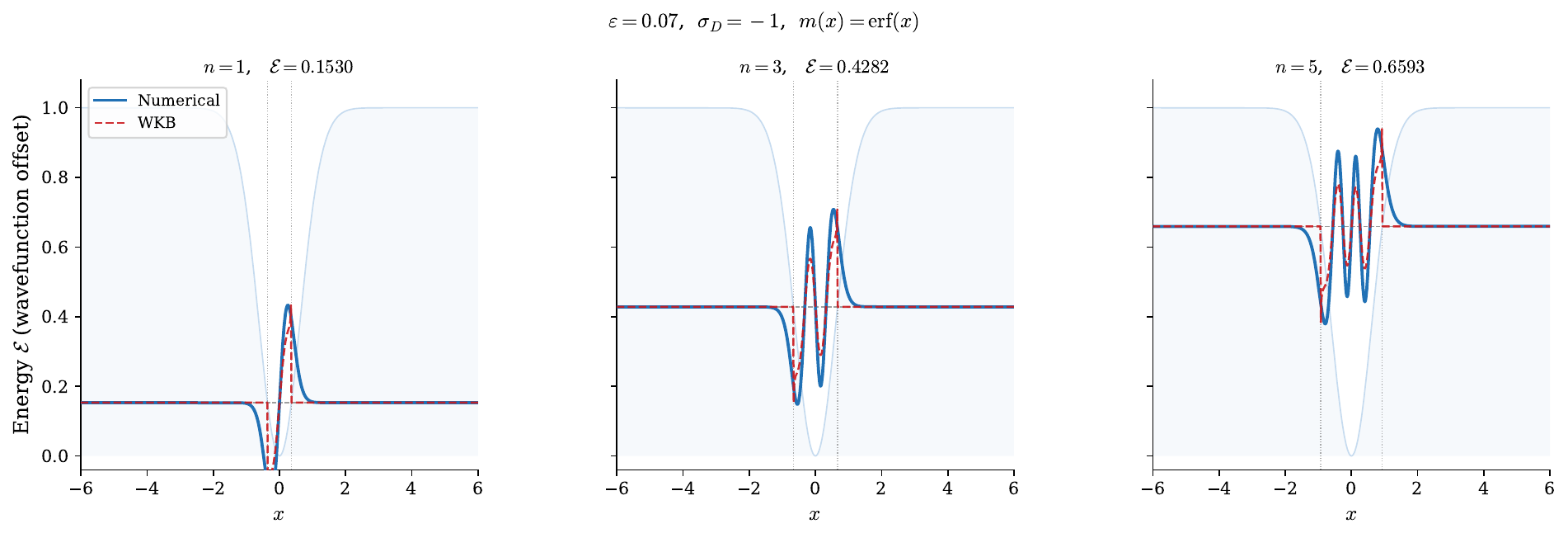}
        \caption{Eigenfunctions (numerical vs.\ WKB) for selected states.}
        \label{fig:erf_eigenfunctions}
    \end{subfigure}

    \vspace{1em}

    \begin{subfigure}{\linewidth}
        \centering
        \includegraphics[width=\linewidth]{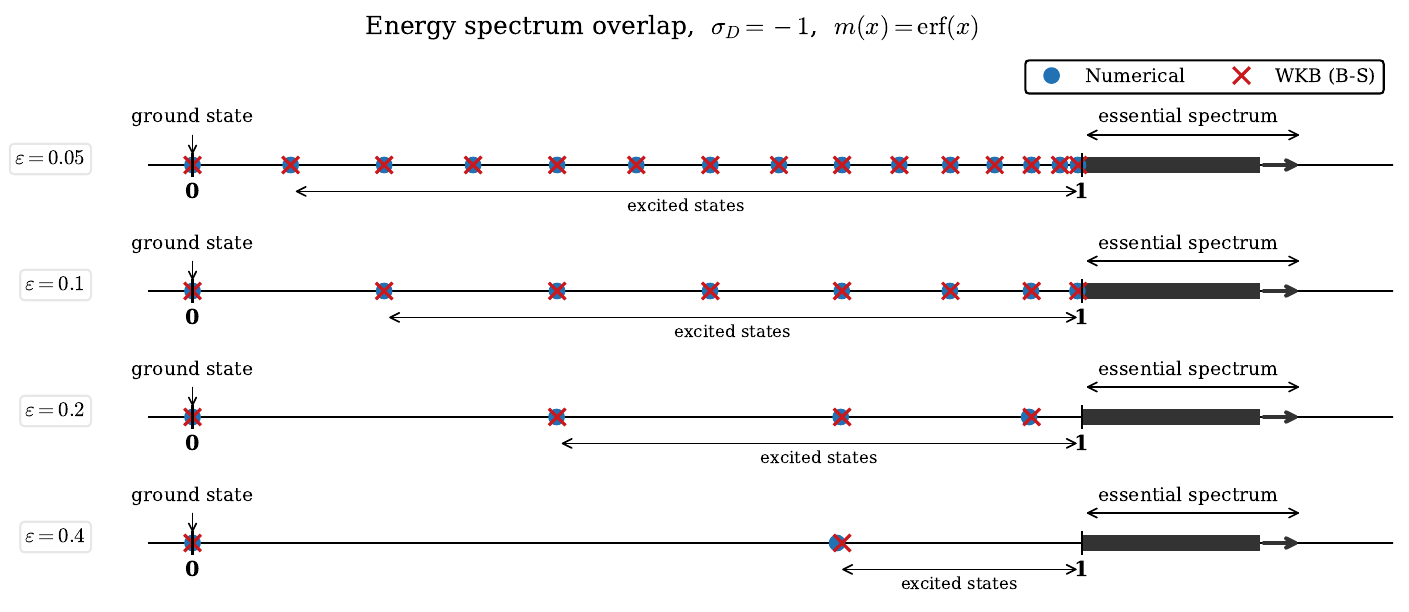}
        \caption{Energy spectrum overlap between numerical and WKB (Bohr--Sommerfeld) eigenvalues across several values of $\varepsilon$.}
        \label{fig:erf_eigenfunctions}
    \end{subfigure}

    \caption{Results for $m(x) = \operatorname{erf}(x)$, $\sigma_D = -1$. (a)~Eigenfunctions at representative energy levels; (b)~energy spectrum showing numerical and WKB eigenvalues.}
    \label{fig:erf_main}
\end{figure}

Figure~\ref{fig:erf_main} shows $\mathcal{O}(\varepsilon)$ agreement between numerical and WKB eigenvalues; here the quantization integral $\Phi(\mathcal{E})$ has no closed  form and is computed numerically via bisection. The phase check residuals (Figure~\ref{fig:erf_phase}) are consistent with the asymptotic error bound.

\begin{figure}[H]
    \centering
    \includegraphics[width=\linewidth]{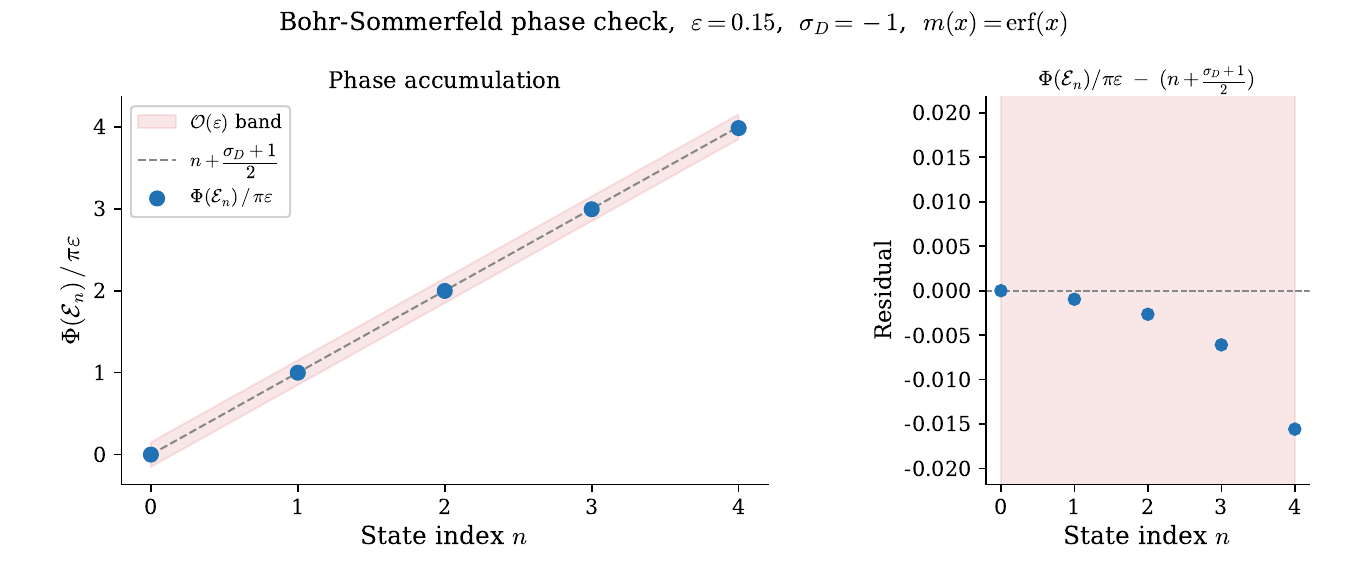}
    \caption{Bohr--Sommerfeld phase check for $m(x) = \operatorname{erf}(x)$, $\varepsilon = 0.15$, $\sigma_D = -1$. Left: phase accumulation $\Phi (\mathcal{E}_n) / \pi \varepsilon$ vs.\ the predicted $n + (\sigma_D + 1) / 2$. Right: residuals, confirming agreement to $\mathcal{O}(\varepsilon)$.}
    \label{fig:erf_phase}
\end{figure}

\subsection{Algebraic Sigmoid Mass Profile}

\begin{figure}[H]
    \centering
    \begin{subfigure}{\linewidth}
        \centering
        \includegraphics[width=\linewidth]{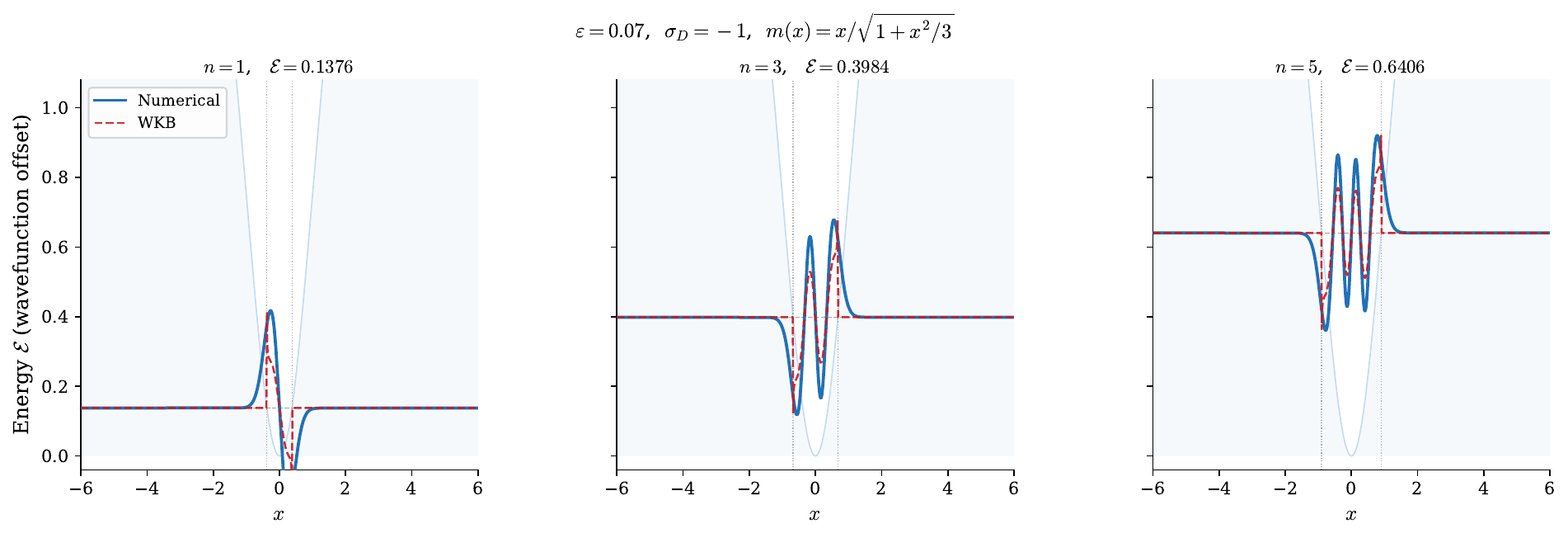}
        \caption{Eigenfunctions (numerical vs.\ WKB) for selected states.}
        \label{fig:algSig_eigenfunctions}
    \end{subfigure}

    \vspace{1em}

    \begin{subfigure}{\linewidth}
        \centering
        \includegraphics[width=\linewidth]{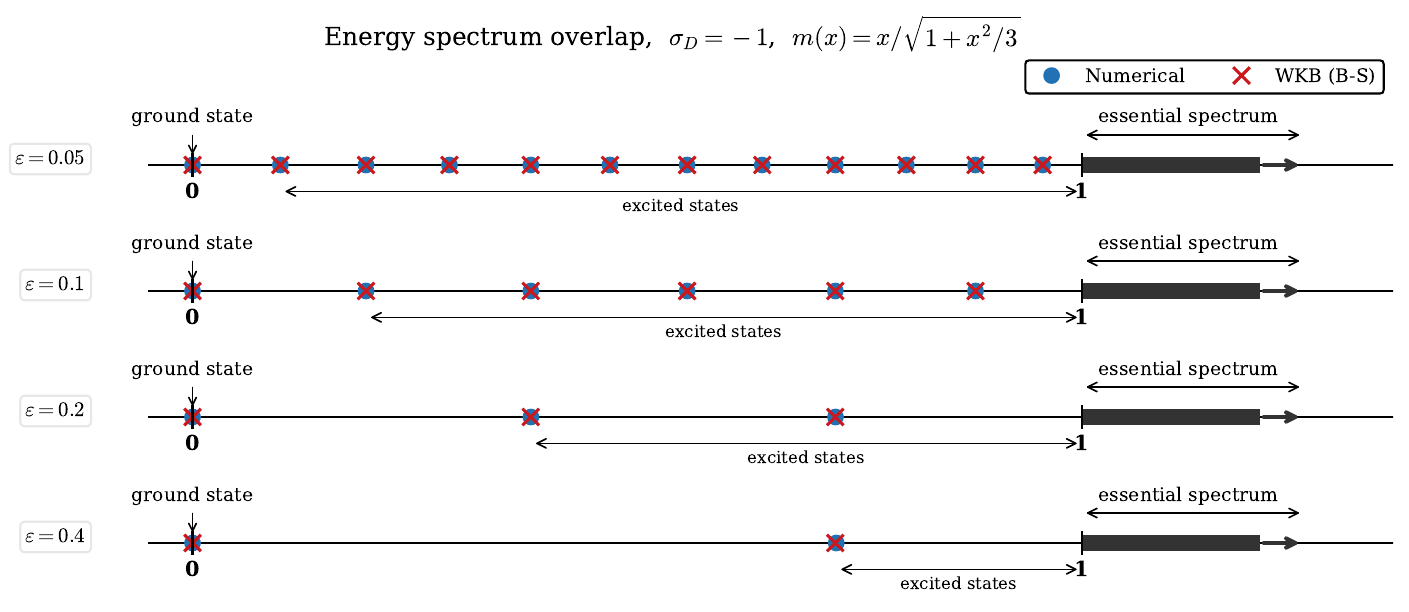}
        \caption{Energy spectrum overlap between numerical and WKB (Bohr--Sommerfeld) eigenvalues across several values of $\varepsilon$.}
        \label{fig:algSig_eigenfunctions}
    \end{subfigure}

    \caption{Results for $m(x) = x / \sqrt{1 + x^2 / 3}$, $\sigma_D = -1$. (a)~Eigenfunctions at representative energy levels; (b)~energy spectrum showing numerical and WKB eigenvalues.}
    \label{fig:algSig_main}
\end{figure}

Figure~\ref{fig:algSig_main} shows the same qualitative behavior as the error function  case. The quantization integral reduces to an elliptic integral and is again evaluated  numerically; the phase check (Figure~\ref{fig:algSig_phase}) confirms $\mathcal{O}(\varepsilon)$  agreement across all computed states.

\begin{figure}[H]
    \centering
    \includegraphics[width=\linewidth]{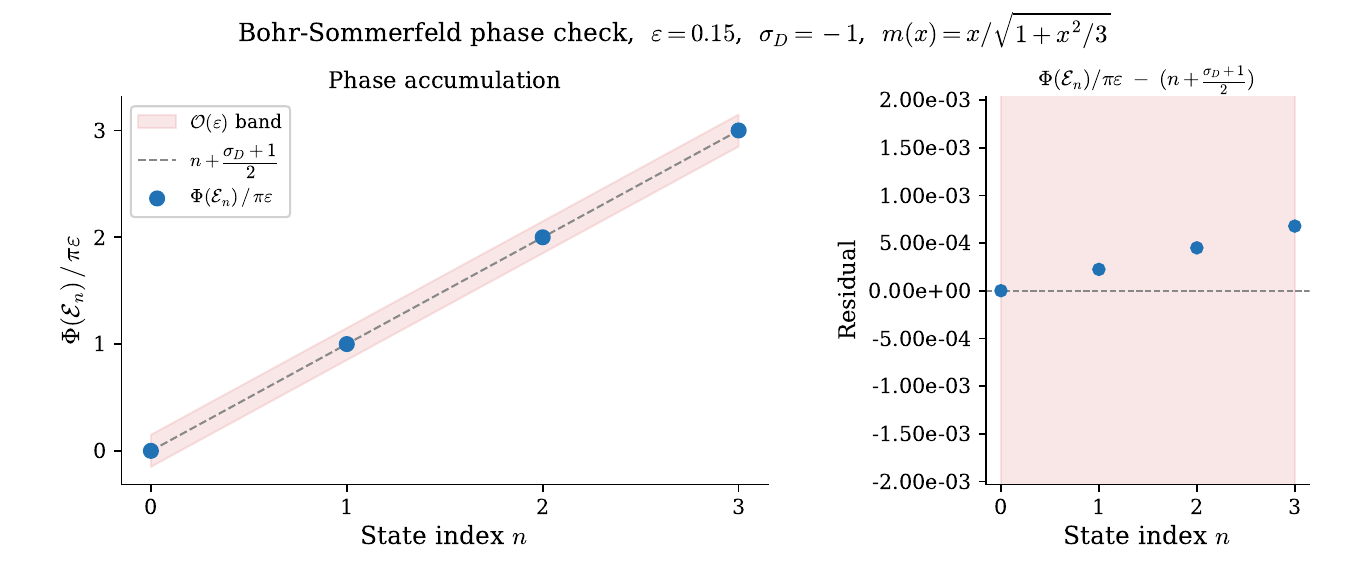}
    \caption{Bohr--Sommerfeld phase check for $m(x) = x / \sqrt{1 + x^2 / 3}$, $\varepsilon = 0.15$, $\sigma_D = -1$. Left: phase accumulation $\Phi (\mathcal{E}_n) / \pi \varepsilon$ vs.\ the predicted $n + (\sigma_D + 1) / 2$. Right: residuals, confirming agreement to $\mathcal{O}(\varepsilon)$.}
    \label{fig:algSig_phase}
\end{figure}

\bibliographystyle{unsrt}
\bibliography{config/references}

\end{document}

%% file: config/preamble.tex
\documentclass[a4paper,11pt]{article}
\usepackage[margin=1in]{geometry}
\usepackage{lmodern}
\usepackage[english]{babel}
\usepackage[utf8]{inputenc}
\usepackage{comment}
\usepackage{amsmath}
\usepackage{amssymb}
\usepackage{graphicx}
\usepackage{subcaption}
\usepackage{float}
\usepackage{pgfgantt}
\usepackage{mathtools}
\usepackage{fancyhdr}
\usepackage{xcolor}
\usepackage{soul}
\usepackage{comment}
\usepackage[colorlinks=true, linkcolor=blue, citecolor=blue]{hyperref}
\usepackage{amsthm}

\theoremstyle{plain}
\newtheorem{theorem}{Theorem}[section]
\newtheorem{lemma}[theorem]{Lemma}
\newtheorem{proposition}[theorem]{Proposition}
\newtheorem{corollary}[theorem]{Corollary}

\theoremstyle{definition}

\newtheorem{remark}[theorem]{Remark}

\usepackage{natbib}

%% file: config/commands.tex
\newcommand{\diff}{\mathop{}\!\mathrm{d}}

\DeclareMathOperator{\sech}{sech}